%% file: main.tex
\newcommand{\frameworkname}{CHAT} 
\newcommand{\starefcm}{$\textit{efc}^*_{M}$}

\newcommand{\M}{$\textit{M}$}
\newcommand{\efc}{$\textit{efc}$}
\newcommand{\efs}{$\textit{efs}$}

\documentclass[sigconf]{acmart}
\usepackage{graphicx} 
\usepackage{hyperref}
\usepackage{cleveref}
\usepackage{enumitem}
\usepackage{bbm}
\usepackage{booktabs}
\usepackage{tabularx}
\usepackage{adjustbox}
\usepackage{caption}
\usepackage{subcaption}
\usepackage{makecell}
\usepackage{algorithm}
\usepackage{algpseudocode}
\usepackage{amsmath}
\usepackage{tikz}
\usepackage{multirow}

\usepackage{parskip}
\newcommand{\blackcircled}[1]{%
  \tikz[baseline=(char.base)]{
    \node[shape=circle, fill=black, inner sep=0.5pt] (char) {\textcolor{white}{#1}};
  }%
}

\algnewcommand\algorithmicinput{\textbf{Input:}}
\algnewcommand\algorithmicoutput{\textbf{Output:}}
\algnewcommand\Input{\item[\algorithmicinput]}
\algnewcommand\Output{\item[\algorithmicoutput]}

\newlength{\subfigHis}
\AtBeginDocument{%
  \providecommand\BibTeX{{%
    Bib\TeX}}}

\setcopyright{acmlicensed}
\copyrightyear{2027}
\acmYear{2027}
\acmDOI{XXXXXXX.XXXXXXX}
\acmConference[SIGMOD '27]{Proceedings of the 2027 International Conference on Management of Data}{June 13--19, 2027}{Huntington Beach, CA, USA}
\acmBooktitle{Proceedings of the 2027 International Conference on Management of Data (SIGMOD '27), June 13--19, 2027, Huntington Beach, CA, USA}
\acmISBN{978-1-4503-XXXX-X/2018/06}

\begin{document}
\title{\jae{Exploiting Structural Properties for Efficient Constraint-Aware HNSW Hyperparameter Tuning}}


\author{Geon Choi}
\orcid{0000-0000-0000}
\affiliation{%
  \institution{Seoul National University}
  \country{South Korea}
  }
\email{cds06126@snu.ac.kr}

\author{Hoeun Lee}
\orcid{0000-0000-0000}
\affiliation{%
  \institution{Seoul National University}
  \country{South Korea}
  }
\email{hoeunlee@snu.ac.kr}

\author{Jaeyoung Do}
\orcid{0000-0000-0000}
\authornote{Corresponding Author.}
\affiliation{%
  \institution{Seoul National University}
  \country{South Korea}
  }
\email{jaeyoung.do@snu.ac.kr}

\renewcommand{\shortauthors}{Choi et al.}


\definecolor{comportableColor}{HTML}{2370CD}
\definecolor{todoRed}{HTML}{F05650}
\definecolor{goodGreen}{HTML}{006E51}
\newcommand{\jae}[1]{\textcolor{black}{#1}}
\newcommand{\todo}[1]{{\color{black}#1}}
\newcommand{\hoeun}[1]{{\color{black}#1}}
\newcommand{\gun}[1]{{\color{black}#1}}
\newcommand{\reviewerA}[1]{{\color{black}#1}}
\newcommand{\reviewerB}[1]{{\color{black}#1}}
\newcommand{\reviewerC}[1]{{\color{black}#1}}
\newcommand{\cgcg}[1]{{\color{black}#1}}
\newcommand{\revisionTodo}[1]{{\color{black}#1}}

\input{sections/abstract}

\begin{CCSXML}
<ccs2012>
 <concept>
  <concept_id>10002951.10003317.10003347.10003350</concept_id>
  <concept_desc>Information systems~Nearest-neighbor search</concept_desc>
  <concept_significance>500</concept_significance>
 </concept>
 <concept>
  <concept_id>10002951.10002952.10003197.10010820</concept_id>
  <concept_desc>Information systems~Database query processing</concept_desc>
  <concept_significance>300</concept_significance>
 </concept>
 <concept>
  <concept_id>10010147.10010257</concept_id>
  <concept_desc>Computing methodologies~Machine learning</concept_desc>
  <concept_significance>100</concept_significance>
 </concept>
 <concept>
  <concept_id>10003752.10003809</concept_id>
  <concept_desc>Theory of computation~Design and analysis of algorithms</concept_desc>
  <concept_significance>100</concept_significance>
 </concept>
</ccs2012>
\end{CCSXML}

\ccsdesc[500]{Information systems~Nearest-neighbor search}
\ccsdesc[300]{Information systems~Database query processing}
\ccsdesc[100]{Computing methodologies~Machine learning}
\ccsdesc[100]{Theory of computation~Design and analysis of algorithms}

\keywords{Vector databases, Approximate nearest neighbor search, HNSW,
  Hyperparameter tuning, Constraint-aware optimization, Index construction}

\received{20 February 2007}
\received[revised]{12 March 2009}
\received[accepted]{5 June 2009}

\maketitle

\input{sections/introduction}

\input{sections/background}
\input{sections/hyperparameters_performance}
\input{sections/hyperparameters_resource}
\input{sections/solution}
\input{sections/experiment}

\input{sections/related_work}
\input{sections/conclusion}

\bibliographystyle{ACM-Reference-Format}
\bibliography{sample-base}
\clearpage
\appendix
\input{appendix}

\end{document}

%% file: sections/abstract.tex
\begin{abstract}
Vector databases (VectorDBs) are a core component of modern retrieval systems, including Retrieval-Augmented Generation (RAG), where efficient Approximate Nearest Neighbor Search (ANNS) is critical. Among ANNS algorithms, Hierarchical Navigable Small World (HNSW) graphs are widely adopted for their strong recall-latency trade-off. However, configuring HNSW remains challenging: its hyperparameters jointly affect search quality, latency, build time, and index size in nonlinear ways, while production deployments impose strict resource and tuning-time constraints.
We study HNSW hyperparameter tuning from a systems perspective and show that its configuration space exhibits strong structural regularities. Specifically, we identify monotonic, dominant unimodal, and separable relationships among search-time and construction-time parameters, which induce feasibility boundaries under performance and resource constraints. Building on this insight, we propose \frameworkname{}, a constraint-aware tuning framework for HNSW. Unlike generic black-box optimizers, \frameworkname{} exploits HNSW-specific structure to perform deterministic, sample-efficient search and prune resource-infeasible configurations before full index construction.
Across multiple datasets and HNSW-based vector search engines, \frameworkname{} identifies configurations that maximize recall or throughput while satisfying constraints on accuracy, latency, build time, index size, and tuning budget. Compared to strong baselines, \frameworkname{} achieves up to 45\% higher throughput or 11\% higher recall, and converges up to $44\times$ faster. These results show that principled, structure-aware tuning enables efficient and robust HNSW deployment beyond generic black-box optimization.

\end{abstract}

%% file: sections/introduction.tex
\section{\reviewerA{Introduction}}

Recent advances in large language models (LLMs) have led to strong performance in text-based retrieval, generation, and summarization~\cite{openai2024gpt4technicalreport,zhao2025surveylargelanguagemodels}. However, their limited parameterization and training data can still lead to hallucinated or stale outputs~\cite{ma2025comprehensivesurveyvectordatabase,bang2023multitaskmultilingualmultimodalevaluation,xu2025hallucinationinevitableinnatelimitation,Huang_2025}. Retrieval-Augmented Generation (RAG) mitigates this limitation by retrieving external knowledge at inference time, avoiding costly fine-tuning while improving factual grounding~\cite{lewis2020retrieval,gao2024retrievalaugmentedgenerationlargelanguage,borgeaud2022improving,izacard2022distillingknowledgereaderretriever,khandelwal2020generalizationmemorizationnearestneighbor,shi2023replugretrievalaugmentedblackboxlanguage}.

At the core of RAG pipelines are vector databases (VectorDBs)~\cite{10.1145/3448016.3457550,guo2022manucloudnativevector}, which store large-scale embeddings and support Approximate Nearest Neighbor Search (ANNS)~\cite{6619223,5432202,6809191,pmlr-v119-guo20h,10.5555/2976040.2976144}. Among ANNS algorithms, Hierarchical Navigable Small World (HNSW) graphs~\cite{malkov2018efficientrobustapproximatenearest} are widely adopted for their strong recall-latency trade-off and scalability~\cite{AUMULLER2020101374}.

\begin{figure}[t!]
  \centering
  \includegraphics[width=1.0\linewidth]{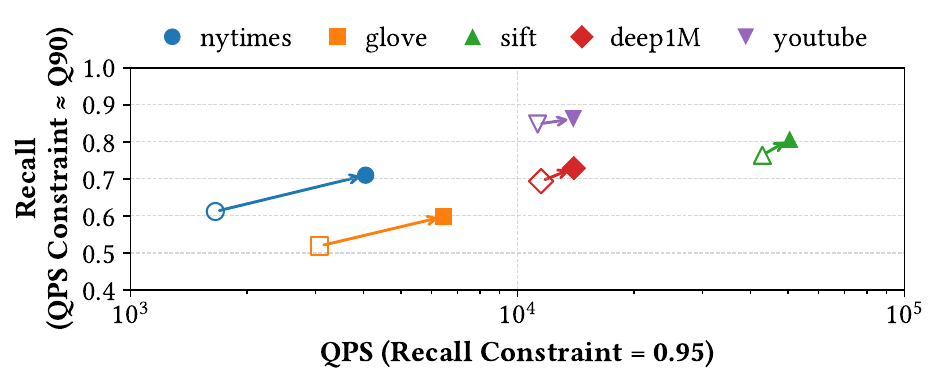}
  \caption{Performance differences between default and optimal HNSW hyperparameter configurations across datasets using Faiss~\cite{8733051}. Filled markers indicate optimal configurations, while outlined markers indicate default configurations. See Section~\ref{sec:setup} for details.}
  \label{fig:diff_default_optimal_hp}
\end{figure}

Despite its popularity, deploying HNSW effectively remains difficult~\cite{malkov2018efficientrobustapproximatenearest,autotuningtheconstructionparameters}. HNSW performance is governed by three core hyperparameters: 
\reviewerA{\M{}, the maximum number of connections allowed per node; \efc{}, the construction-time candidate-list size used during insertion; and \efs{}, the query-time candidate-list size used during search.} 
These parameters jointly affect recall, throughput, build time, and index size. \reviewerB{As shown in Figure~\ref{fig:diff_default_optimal_hp}, fixed default HNSW configurations often do not coincide with the dataset-specific feasible frontier or constraint-aware optimum. Thus, even under identical recall or QPS targets, tuned configurations can avoid constraint violations or recover performance left on the table by defaults, achieving up to 144\% higher throughput or 16\% higher recall.}


HNSW tuning is challenging because these parameters interact nonlinearly and often trade off competing objectives~\cite{yu2020hyperparameteroptimizationreviewalgorithms}. For example, increasing \efc{} can improve graph quality and recall, but may also increase build cost, index size, and query traversal overhead. Existing tuning methods, including Grid/Random Search~\cite{liashchynskyi2019gridsearchrandomsearch,10.5555/2188385.2188395}, Optuna~\cite{10.1145/3292500.3330701}, NSGA-II~\cite{996017}, ECI~\cite{gardner2014bayesian}, and VDTuner~\cite{yang2024vdtunerautomatedperformancetuning}, largely treat HNSW as an opaque objective. More broadly, generic surrogate-based and Bayesian optimizers can be sample-inefficient in discrete, noisy, and narrow-feasibility search spaces~\cite{7352306,10.5555/2999325.2999464,10.5555/3013558.3013569}, while Pareto-oriented methods~\cite{1599245} do not directly target the single constraint-satisfying optimum required here. These methods therefore spend substantial tuning budget exploring expensive configurations without directly exploiting the structural regularities induced by HNSW construction and search.

We propose \frameworkname{}, a constraint-aware, API-level black-box but HNSW-structure-aware tuning framework. \frameworkname{} does not inspect internal graph statistics; it uses only standard build/query APIs and measured validation signals. However, it exploits public HNSW mechanisms to decompose tuning into structured subproblems: guided search over construction parameters \((\M{}, \efc{})\), binary boundary search over \efs{}, and resource-feasibility filtering before full index construction. Resource estimates are used only to avoid unnecessary builds; configurations that survive this filter are still built, measured, and validated before selection. This design leverages monotonic, dominant unimodal, and separable relationships in the HNSW configuration space to reduce unnecessary evaluations while selecting configurations validated against user-specified performance and resource constraints. We intentionally scope these structural claims to standard HNSW implementations; extending the methodology to other ANN families requires identifying analogous search-time, construction-time, and resource regularities.

We evaluate \frameworkname{} on Faiss~\cite{8733051}, Hnswlib~\cite{malkov2018efficientrobustapproximatenearest}, and Milvus~\cite{10.1145/3448016.3457550} across five datasets. Compared with strong baselines, \frameworkname{} achieves up to 45\% higher throughput or 11\% higher recall, and converges up to $44\times$ faster, while satisfying accuracy, throughput, build-time, index-size, and tuning-budget constraints.

\noindent {Our key contributions are:}
\begin{itemize}[leftmargin=*, labelsep=0.5em, topsep=0pt, itemsep=0pt]
\item We formulate constraint-aware HNSW hyperparameter tuning under performance, resource, and tuning-budget constraints.
\item We characterize HNSW-specific structural regularities---monotone feasibility boundaries induced by \efs{}, dominant unimodal construction-parameter trends, and separable resource dependencies.
\item We introduce \frameworkname{}, an API-level black-box but HNSW-structure-aware tuner that uses hierarchical search and measured validation to identify constraint-satisfying configurations efficiently.
\item We develop analytic resource models for build time and index size, enabling feasibility-aware pruning before full index construction.
\item We validate \frameworkname{} across datasets and HNSW-based VectorDB backends, showing substantial performance and convergence gains over strong baselines.
\end{itemize}

%% file: sections/background.tex
\section{\reviewerA{Background}}
\label{sec:background}

\begin{figure}[t]
    \centering
    \includegraphics[width=.95\linewidth]{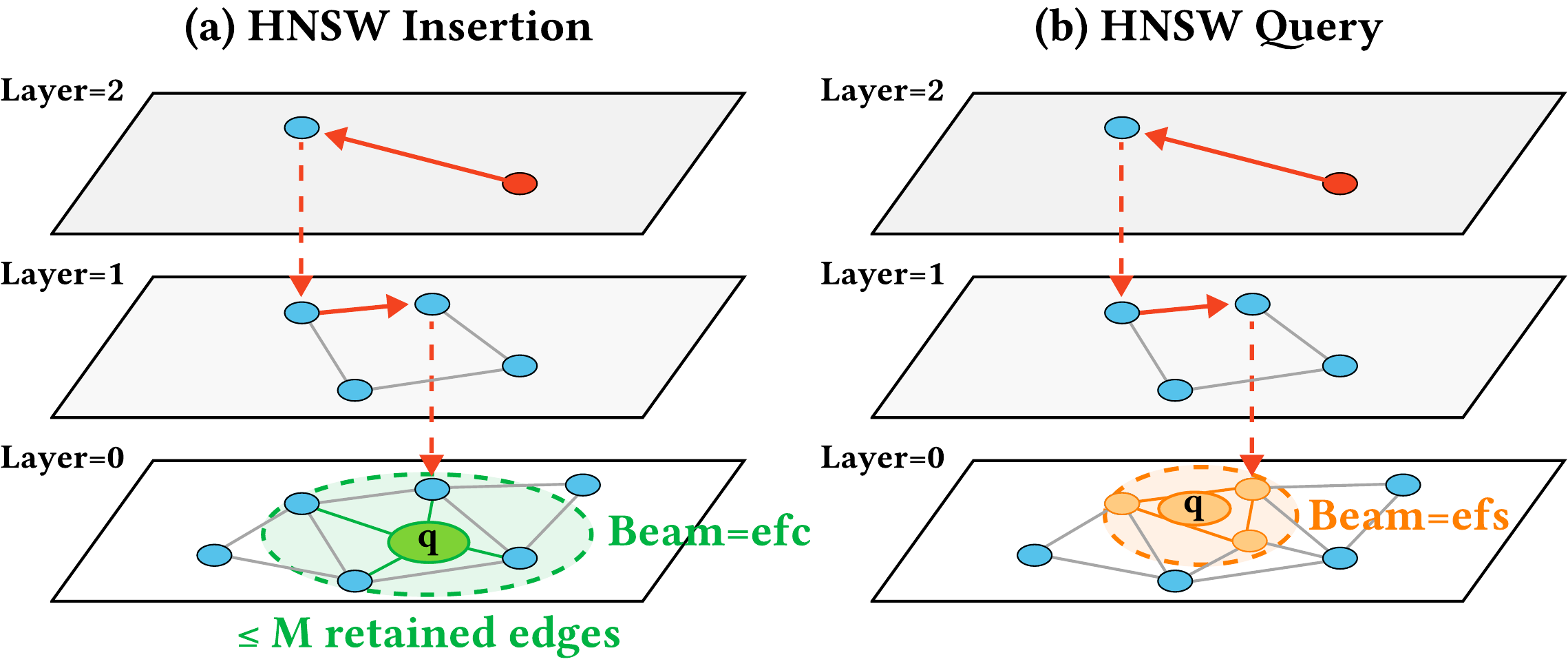}
    \caption{
    \reviewerA{HNSW insertion and query processing.
    (a) Insertion uses \efc{} to search candidate neighbors and \M{} to bound the close/diverse edges retained by \textsc{Select-Neighbors}.
    (b) Query processing uses \efs{} to control base-layer search after top-down greedy traversal.}
    }
    \label{fig:hnsw_background}
\end{figure}

\reviewerA{
Hierarchical Navigable Small World (HNSW)~\cite{malkov2018efficientrobustapproximatenearest} is a hierarchical graph-based index for Approximate Nearest Neighbor Search (ANNS). Each data point is represented as a node, and proximity relationships are represented as graph edges. HNSW exposes three core tuning parameters used throughout this paper: \M{}, the degree budget that controls how many neighbors a node can retain; \efc{}, the construction-time candidate-list size used when inserting nodes; and \efs{}, the search-time candidate-list size used when answering queries. The construction parameters \M{} and \efc{} determine the graph structure, whereas \efs{} controls query-time traversal over a fixed graph.

Figure~\ref{fig:hnsw_background} illustrates the layered structure and the two main HNSW procedures. Each node is assigned a maximum layer by randomized level generation and appears in every layer from Layer~0 up to that level. Since upper layers contain progressively fewer nodes, they provide sparse long-range routing links, while Layer~0 contains all data points and supports fine-grained search. Both insertion and query processing therefore follow a coarse-to-fine pattern: HNSW first performs greedy top-down traversal through upper layers and then conducts a broader search near the Layer 0.

During insertion (Figure~\ref{fig:hnsw_background}(a)), a new point $q$ first uses greedy descent with $\text{Beam}=1$ to locate an entry point near $q$. From the assigned layer of $q$ down to Layer~0, HNSW runs \textsc{Search-Layer} with candidate-list size \efc{} to collect candidate neighbors. These candidates are then passed to \textsc{Select-Neighbors}, which retains close and diverse neighbors up to the layer-specific degree budget induced by \M{}; fewer than \M{} edges may be retained when candidates are redundant. During query processing (Figure~\ref{fig:hnsw_background}(b)), the query vector $q$ is not inserted into the graph. After the same greedy top-down traversal, HNSW runs \textsc{Search-Layer} at Layer~0 with candidate-list size \efs{} and returns the top-$k$ candidates from the final working set. Thus, increasing \efs{} broadens query-time search on a fixed graph, while changing \M{} or \efc{} changes the graph and requires rebuilding the index. 
Appendix~A provides formal pseudocode for these procedures.
}

%% file: sections/hyperparameters_performance.tex
\section{Decoding HNSW Hyperparameter Trade-offs}
\label{sec:hyperparameters_performance}


This section presents an in-depth analysis of the three key HNSW 
hyperparameters (i.e., \M{}, \efc{}, and \efs{}) and their 
non-trivial interactions in shaping search 
performance~\cite{malkov2018efficientrobustapproximatenearest}. 
Because these parameters are tightly coupled, changing one often 
requires compensatory adjustments in others; understanding these 
trade-offs is essential for principled tuning and motivates our 
\frameworkname{} framework. We focus here on Recall and query 
throughput (QPS), and show that the strict monotonicity of \efs{} 
with respect to recall and latency induces a total order over 
configurations, enabling binary feasibility search for constraint 
satisfaction rather than heuristic exploration. Resource-usage 
metrics (index build time and size) are analyzed in 
Section~\ref{sec:hyperparameters_resource} to inform our constraint 
surrogate models. Unless otherwise specified, figures referenced 
in this section are generated using Faiss~\cite{8733051} on the 
nytimes dataset~\cite{AUMULLER2020101374} (see 
Section~\ref{sec:setup} for details on the experimental setup).
\reviewerB{
The qualitative patterns described below were consistent across 
query parallelism levels (1, 4, 16, and 64 threads); the figures 
in this section use 64 threads, and we focus on these structural 
trends rather than absolute QPS values.
}

\begin{figure}
  \centering
  \begin{subfigure}[b]{0.48\linewidth}
    \includegraphics[width=\linewidth]{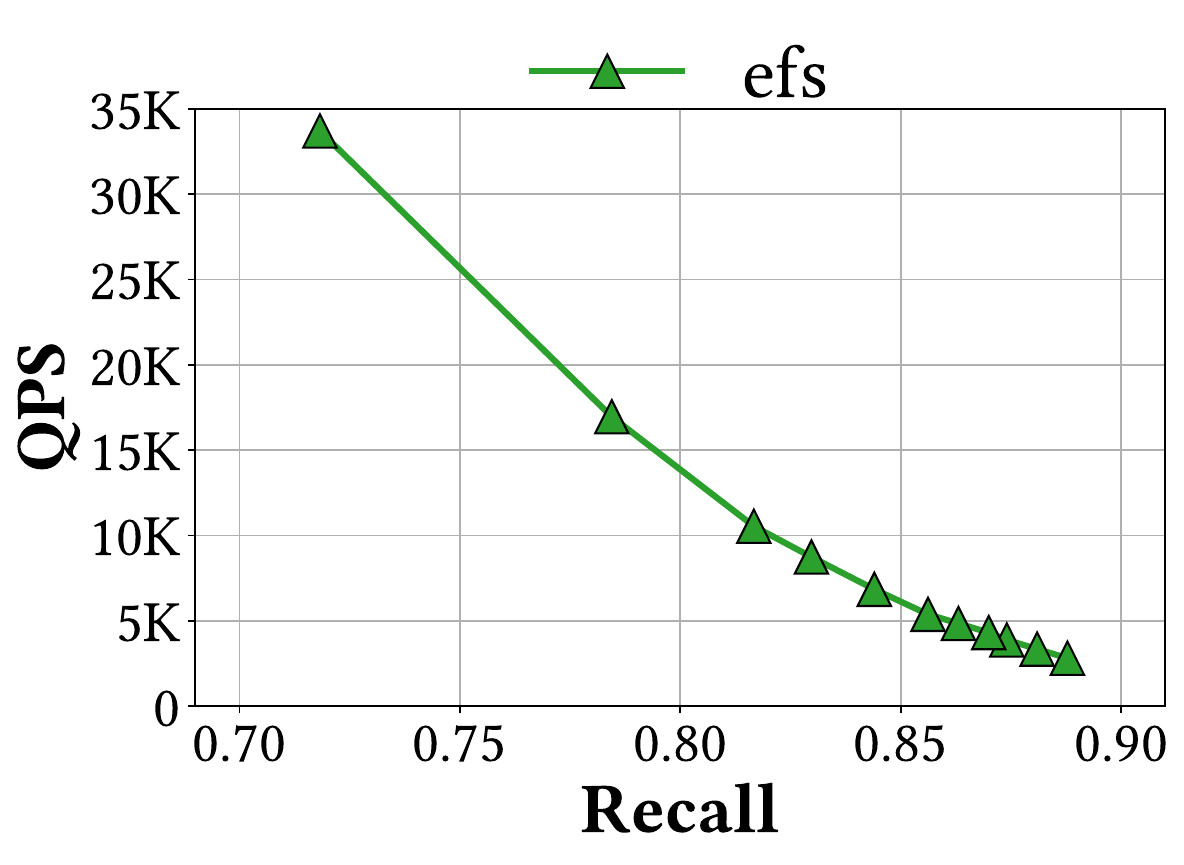}
    \vspace{-1.5em}
    \caption{}
    \label{fig:sub_a}
  \end{subfigure}\hfill
  \begin{subfigure}[b]{0.48\linewidth}
    \includegraphics[width=\linewidth]{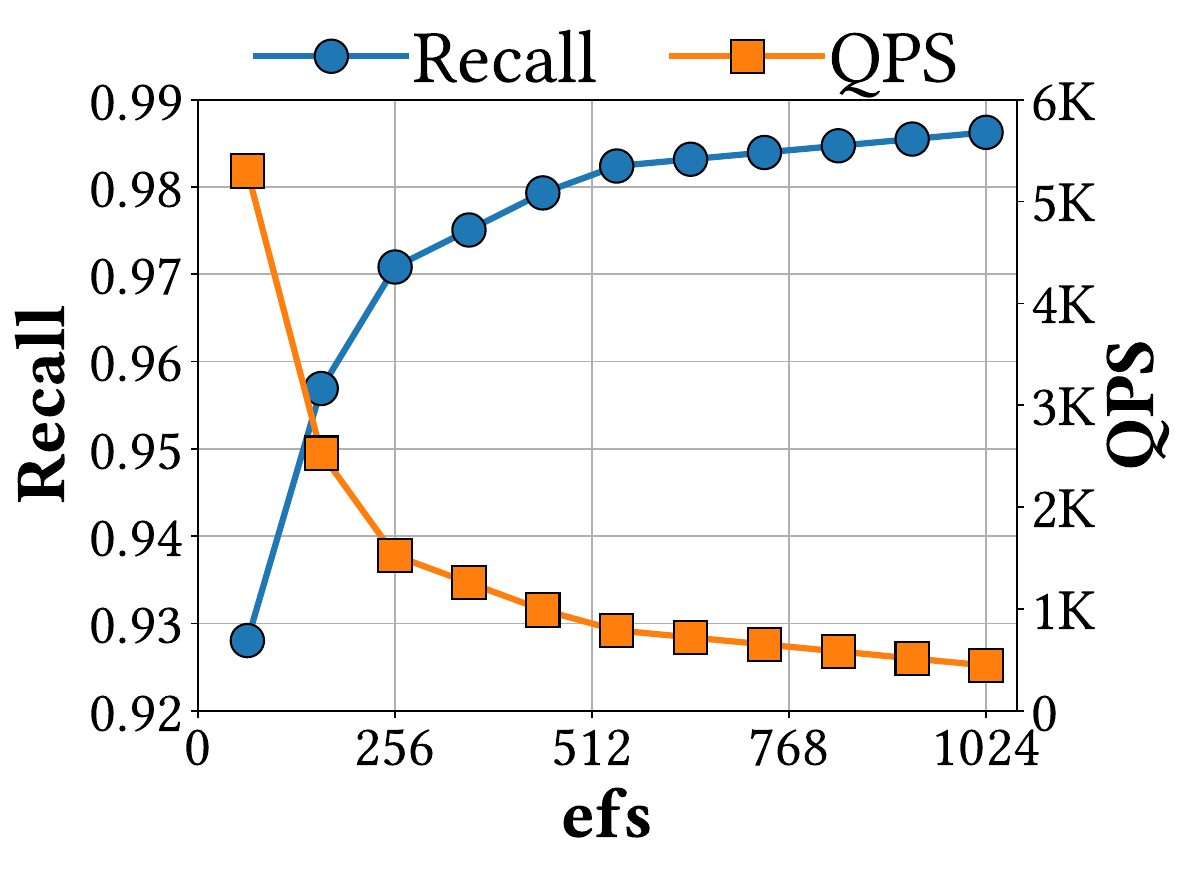}
    \vspace{-1.5em}
    \caption{}
    \label{fig:sub_b}
  \end{subfigure}
  \vspace{-1em}
  \caption{Effect of \efs{} on search performance. (a) Recall-QPS trade-off as \efs{} increases. (b) Individual performance curves showing recall saturation and QPS decline with increasing \efs{}.}
  \label{fig:efs_recall_qps}
\end{figure}

\subsection{Effect of \efs{} on Search Accuracy and Speed}
\label{sec:effect_of_efs}

\gun{
The hyperparameter \efs{} controls the size of the candidate set maintained and expanded during query-time traversal at the base layer. Increasing \efs{} broadens the search scope, thereby improving recall, but also incurs higher computational costs due to additional distance evaluations, resulting in lower query throughput (QPS). This monotonic accuracy--speed trade-off is illustrated in Figure~\ref{fig:efs_recall_qps}(a).
}

\gun{
Beyond this monotonic trade-off, \efs{} exhibits diminishing returns in the high-\efs{} regime. As shown in Figure~\ref{fig:efs_recall_qps}(b), recall improvements rapidly saturate once most informative candidates near the query have already been explored, making additional candidates largely distant or redundant. QPS degradation also flattens in this regime because the number of candidates that can be meaningfully expanded is constrained by the graph structure and by the finite set of nearby nodes that can contribute to the top-$k$ results. Thus, while the nominal candidate budget increases, the marginal performance change from additional search effort decreases.
}

\gun{
These saturation behaviors stem from the greedy search mechanism employed at layer~0 in HNSW. As the search converges toward a local optimum, discovering substantially better candidates becomes increasingly difficult, causing the marginal utility of additional search effort---measured as performance gain per unit cost---to drop sharply.
Crucially, this structure induces a monotone feasibility ordering over \efs{}: larger \efs{} improves recall but increases query cost. This property enables binary feasibility search for constraint satisfaction, rather than costly heuristic exploration, and explains why blindly increasing \efs{} is inefficient for balancing accuracy and latency.
}

\newlength{\subfigH}
\setlength{\subfigH}{0.135\textheight}
\subsection{Interaction between Construction Parameters and \efs{}}
\label{sec:construction_efs}

\begin{figure}
  \centering
  \begin{subfigure}[b]{0.48\linewidth}
    \includegraphics[height=\subfigH,keepaspectratio]{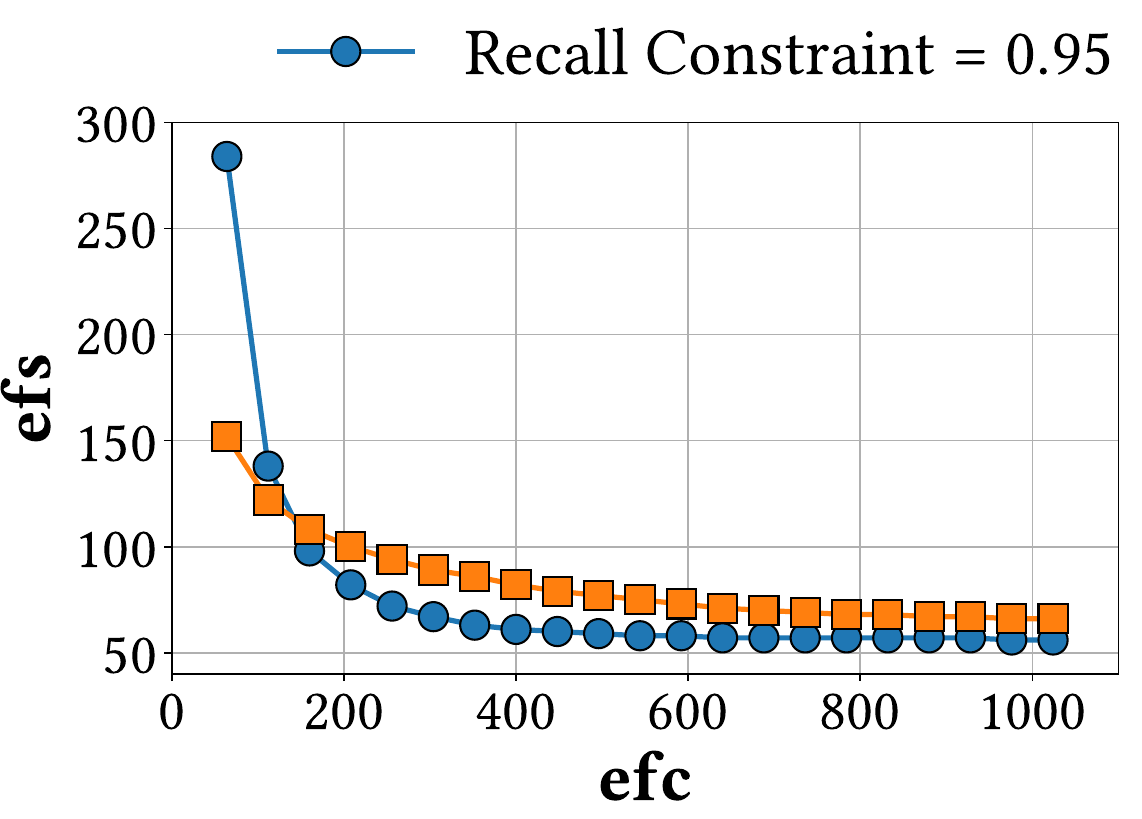}
    \vspace{-1.5em}
    \caption{}
    \label{fig:efs_efc_boundary}
  \end{subfigure}\hfill
  \begin{subfigure}[b]{0.48\linewidth}
    \includegraphics[height=\subfigH,keepaspectratio]{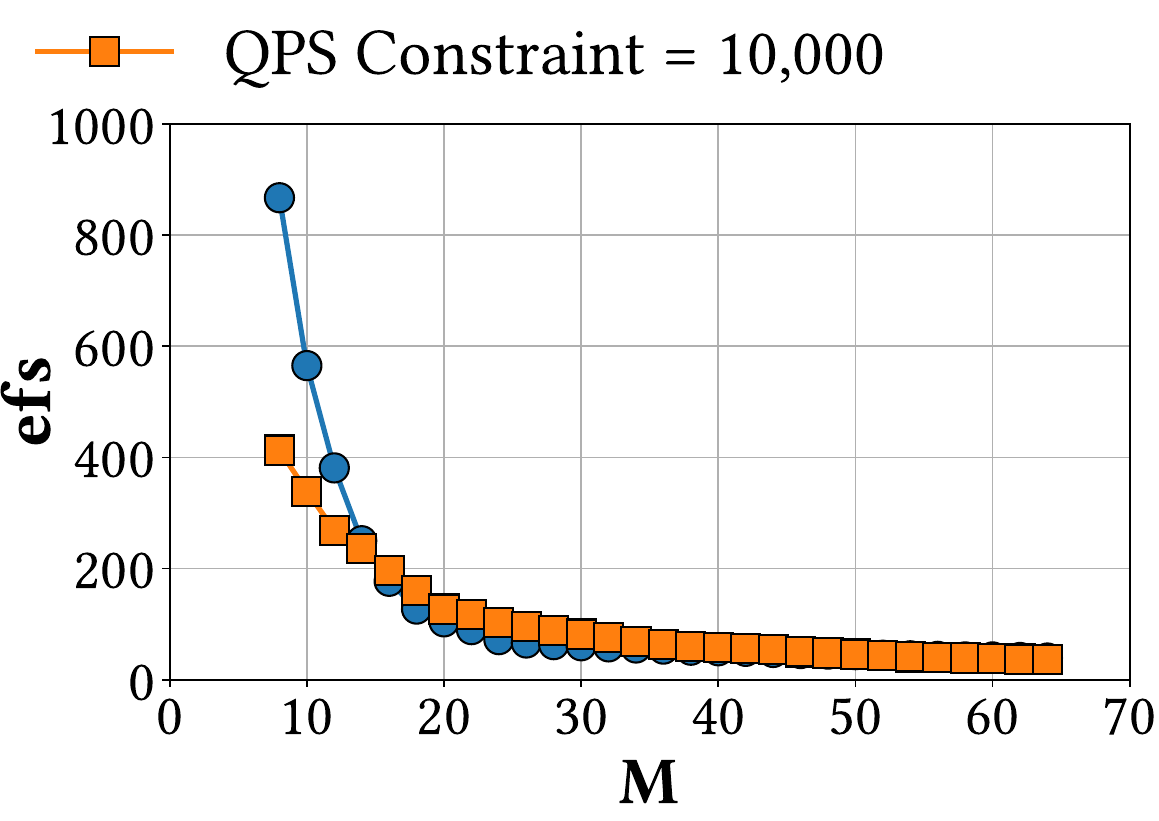}
    \vspace{-1.5em}
    \caption{}
    \label{fig:efs_m_boundary}
  \end{subfigure}
  \vspace{-1em}
  \caption{Hyperparameter interactions under performance constraints.
  (a) Increasing \efc{} shifts the Recall-feasibility boundary toward
  smaller \efs{}. (b) Increasing \M{} shifts the QPS-feasibility boundary
  toward smaller \efs{}. Both exhibit steep initial movement followed by
  saturation.}
  \vspace{-1em}
  \label{fig:efs_efc_M_Performance}
\end{figure}

The construction-time parameters \efc{} and \M{} shape the HNSW graph through complementary mechanisms. The parameter \efc{} controls the breadth of candidate search during insertion: a larger \efc{} exposes \textsc{Select-Neighbors} to a richer candidate pool, increasing the likelihood that high-quality, diverse edges survive pruning. Because \textsc{Select-Neighbors} often retains fewer than \M{} edges due to its diversification criterion, larger \efc{} can improve realized connectivity even when the degree cap is not the binding constraint. \M{}, in contrast, directly caps per-node degree. Thus, \efc{} shapes \emph{which} candidate edges are available, while \M{} bounds \emph{how many} are retained.

To characterize the coupling with query-time search, we use the \emph{constraint-boundary} \efs{}: the minimum \efs{} that reaches the target Recall under a Recall constraint, or the maximum \efs{} that keeps throughput above the target under a QPS constraint. Figure~\ref{fig:efs_efc_M_Performance}(a) shows that increasing \efc{} shifts this boundary downward under a Recall constraint, because a richer candidate pool improves graph quality and fewer query-time candidates suffice. Figure~\ref{fig:efs_efc_M_Performance}(b) shows the similar shift for \M{} under a QPS constraint: larger degree caps increase work per expanded node, reducing the maximum feasible \efs{}. The boundary moves in the same direction under the opposite constraint for both parameters, because stronger realized connectivity simultaneously improves Recall (lowering minimum-feasible \efs{}) and raises traversal cost (lowering maximum-feasible \efs{}).

The steep-to-gradual shape in both panels reflects a common saturation mechanism in construction-time graph improvement. For \efc{}, the benefit saturates because \textsc{Select-Neighbors} increasingly rejects redundant candidates once the candidate pool is sufficiently rich. For \M{}, realized connectivity can grow sublinearly and saturate below the nominal cap because the same diversification rule limits which additional neighbors are retained. Thus, \efc{} and \M{} can both reduce the boundary \efs{}, but their marginal benefit diminishes once the graph becomes sufficiently navigable.

These observations have two implications for tuning. First, stronger realized connectivity typically shifts the constraint-boundary \efs{} downward, allowing \frameworkname{} to narrow the \efs{} search range using previous evaluations (Section~\ref{sec:efs_search}). Second, when graph-quality gains saturate while traversal overhead continues to grow, the constrained objective moves from a quality-improving regime to an overhead-dominated regime. This crossover underlies the dominant unimodal trends analyzed in Sections~\ref{sec:M_efc_perf} and~\ref{sec:performance_across_M}.

\begin{figure}[t]
  \centering
  \includegraphics[width=1.0\linewidth]{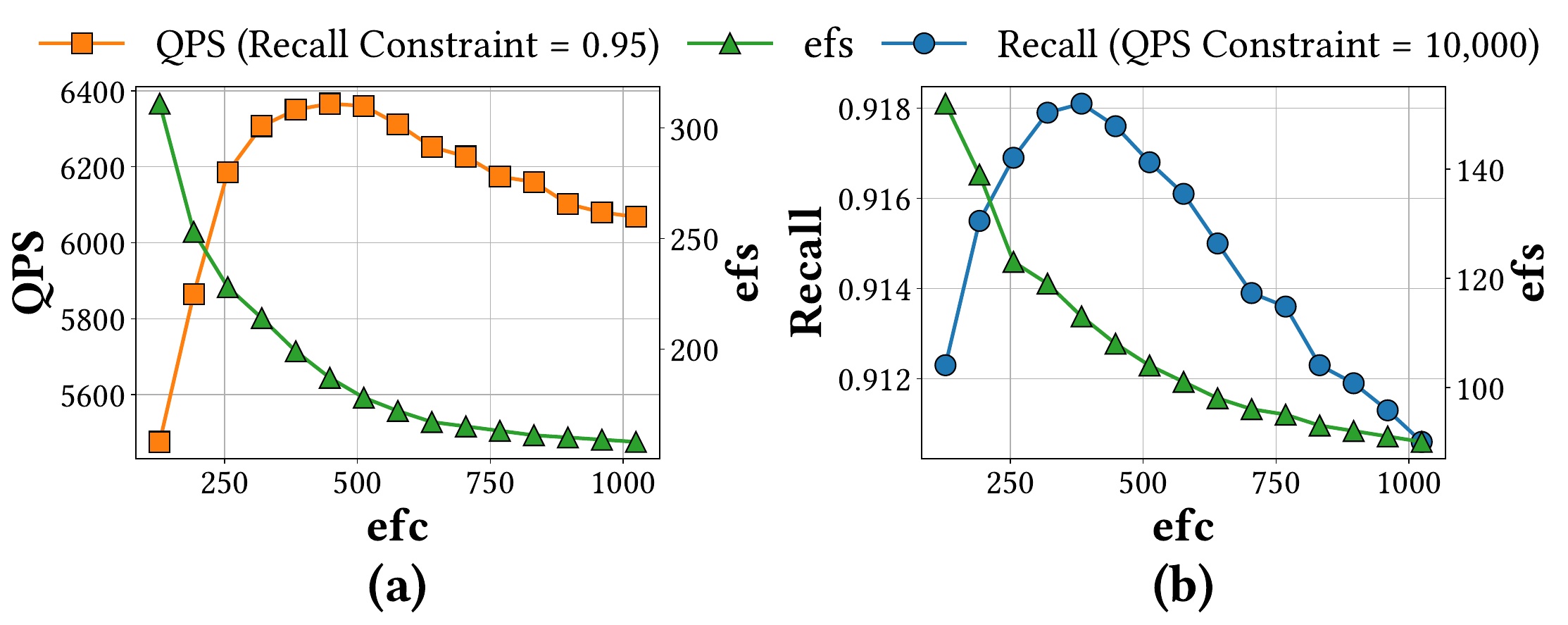}
  \vspace{-2em}
  \caption{Dominant unimodal performance patterns with varying
  \efc{} under fixed \M{}. (a) Under a Recall constraint, QPS
  first improves as increasing \efc{} reduces the constraint-boundary
  \efs{}, then declines as traversal overhead dominates.
  (b) Recall under a QPS constraint follows the same pattern.}
  \label{fig:M_efc_twin_efs}
\end{figure}

\subsection{\reviewerB{Unimodality over \efc{} under Fixed \M{}}}
\label{sec:M_efc_perf}

We fix \M{} and study how \efc{} affects the constrained objective after \efs{} has been chosen at the feasibility boundary. For each \efc{}, \frameworkname{} selects the boundary \efs{} under the active performance constraint: the smallest \efs{} satisfying the Recall target, or the largest \efs{} satisfying the QPS target. Let $f_M(efc)$ denote the resulting objective value, namely QPS under a Recall constraint and Recall under a QPS constraint. We call $f_M$ unimodal if it is non-decreasing up to some peak index and non-increasing thereafter.

Figure~\ref{fig:M_efc_twin_efs} shows this dominant unimodal behavior under both constraints. When \efc{} is small, the graph is poorly connected, so the boundary \efs{} is unfavorable: Recall constraints require large \efs{}, while QPS constraints force small \efs{} and sacrifice Recall. Increasing \efc{} enlarges the construction-time candidate pool and improves graph navigability, allowing the constraint to be met more efficiently and raising the objective. Beyond the peak, additional candidates become increasingly redundant under \textsc{Select-Neighbors} and the fixed degree cap \M{}, while traversal overhead continues to grow. The objective therefore shifts from a quality-improving regime to an overhead-dominated regime.

\reviewerC{
\begin{lemma}[Sufficient condition for unimodality over \efc{}]
\label{lem:efc_unimodal}
Fix \M{} and an active constraint $C\in\{\mathrm{Recall},\mathrm{QPS}\}$. On the ordered grid $efc_1<\cdots<efc_n$, write the marginal change as
$\Delta f_M(efc_i)=\Delta B_M(efc_i)-\Delta T_M(efc_i)$, where $\Delta B_M$ is the marginal graph-quality benefit and $\Delta T_M$ is the residual traversal/search cost. Any reduction in expanded nodes from better graph navigability is counted in $\Delta B_M$, not as negative cost. If
\begin{enumerate}[label=(\roman*),leftmargin=1.5em,topsep=0.1em,itemsep=0em]
\item $\Delta T_M(efc_i)\ge0$ for all $i$, and
\item once $\Delta B_M(efc_i)\le\Delta T_M(efc_i)$, larger \efc{} values do not reveal a delayed high-utility connectivity regime that restores $\Delta B_M>\Delta T_M$,
\end{enumerate}
then $f_M(efc)$ is unimodal.
\end{lemma}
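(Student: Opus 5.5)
The argument is a discrete sign argument on the forward differences of $f_M$ along the ordered grid. We read $\Delta f_M(efc_i)=f_M(efc_{i+1})-f_M(efc_i)$ for $i=1,\dots,n-1$. Unimodality, in the sense defined before the lemma, is then equivalent to one combinatorial fact. The sign sequence of $\Delta f_M(efc_1),\dots,\Delta f_M(efc_{n-1})$ must have the form: first a (possibly empty) block of nonnegative entries, then a (possibly empty) block of nonpositive entries. Indeed, if $p$ is the first index with $\Delta f_M(efc_p)<0$, taking $p=n$ if there is none, then $f_M$ is non-decreasing on $efc_1,\dots,efc_p$ by telescoping. It is non-increasing on $efc_p,\dots,efc_n$ once we show all later differences are $\le 0$. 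The peak index is $p$.

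\textbf{Step 1: locate the crossover index.} Define
\[
  i^\ast=\min\{\, i : \Delta B_M(efc_i)\le \Delta T_M(efc_i)\,\},
\]
with $i^\ast=n$ if the set is empty. For $i<i^\ast$ we have $\Delta B_M(efc_i)>\Delta T_M(efc_i)$, so $\Delta f_M(efc_i)>0$. This is the quality-improving regime of Section~\ref{sec:construction_efs}, and it gives the increasing block.

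\textbf{Step 2: show persistence past the crossover.} Hypothesis (ii) says that once $\Delta B_M\le\Delta T_M$ holds at some index, no larger index restores $\Delta B_M>\Delta T_M$. Formalized, this means $\Delta B_M(efc_j)\le\Delta T_M(efc_j)$ for every $j\ge i^\ast$. Hence $\Delta f_M(efc_j)\le 0$ for all $j\ge i^\ast$.

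Hypothesis (i), $\Delta T_M\ge 0$, is what makes this decomposition meaningful. It fixes the bookkeeping convention stated in the lemma: any reduction in expanded nodes is credited to $\Delta B_M$. Without it, a "negative cost" could be relabeled and the sign of $\Delta f_M$ could not be attributed to the benefit--cost comparison alone. I will note explicitly that (i) is not needed for the sign argument itself; it only guarantees that the decomposition in the statement is well-posed. Combining Steps 1 and 2 with the telescoping observation above yields unimodality with peak at $efc_{i^\ast}$. The degenerate cases are $i^\ast=1$ (non-increasing throughout) and $i^\ast=n$ (non-decreasing throughout).

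\textbf{Main obstacle.} The difficulty is not computational but definitional. Hypothesis (ii) is phrased informally ("do not reveal a delayed high-utility connectivity regime"), so it must be turned into the precise persistence statement used in Step 2: for all $j\ge i$, $\Delta B_M(efc_i)\le\Delta T_M(efc_i)$ implies $\Delta B_M(efc_j)\le\Delta T_M(efc_j)$.

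A related subtlety is that $f_M$ is evaluated at the boundary \efs{} chosen separately for each \efc{}. The decomposition $\Delta f_M=\Delta B_M-\Delta T_M$ must therefore be understood as an accounting identity for the whole composite map $efc\mapsto f_M(efc)$, with the boundary shift of \efs{} absorbed into $\Delta B_M$. It is not a statement about a fixed \efs{}. Once this reading is made explicit, the remaining argument is the short telescoping proof above.
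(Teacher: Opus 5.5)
Your proposal is correct and takes essentially the same approach as the paper's proof sketch. Both classify each forward difference $\Delta f_M$ by the sign of $\Delta B_M-\Delta T_M$, read condition (ii) as persistence of $\Delta f_M\le 0$ after the first crossing, and conclude that the grid sequence has no negative-to-positive reversal. Your side remark that condition (i) is not used in the sign argument itself is accurate; the paper's sketch likewise uses it only to fix the accounting convention.
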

\noindent\emph{Proof.}
Due to space limitations, we defer the proof sketch to Appendix~B.2.
}

Lemma~\ref{lem:efc_unimodal} is sufficient, not universal. It captures the standard HNSW regime targeted by \frameworkname{}: bounded degree, diversification-based pruning, and no delayed high-utility connectivity regime. When condition~(ii) fails, the constrained objective may become bimodal or multi-modal; Section~\ref{sec:structural_robustness} presents such a counterexample. When the conditions hold, $f_M$ can be searched efficiently using the ternary-style procedure in Section~\ref{sec:solution}.

\subsection{\reviewerB{Outer-Objective Unimodality over \M{}}}
\label{sec:performance_across_M}

We next analyze how the best achievable objective changes as \M{} varies. 
For each \M{}, \frameworkname{} chooses the best construction effort, denoted
$efc_M^* \in \arg\max_{efc} F_C(M,efc)$, and the boundary \efs{} satisfying
the active performance constraint. Let $g(M)=\max_{efc}F_C(M,efc)$ denote the
resulting outer objective: QPS under a Recall constraint and Recall under a
QPS constraint. We call $g(M)$ unimodal if it is non-decreasing up to some
peak index and non-increasing thereafter.

\begin{figure}[t!]
  \centering
  \begin{subfigure}[b]{0.48\linewidth}
    \includegraphics[width=\linewidth]{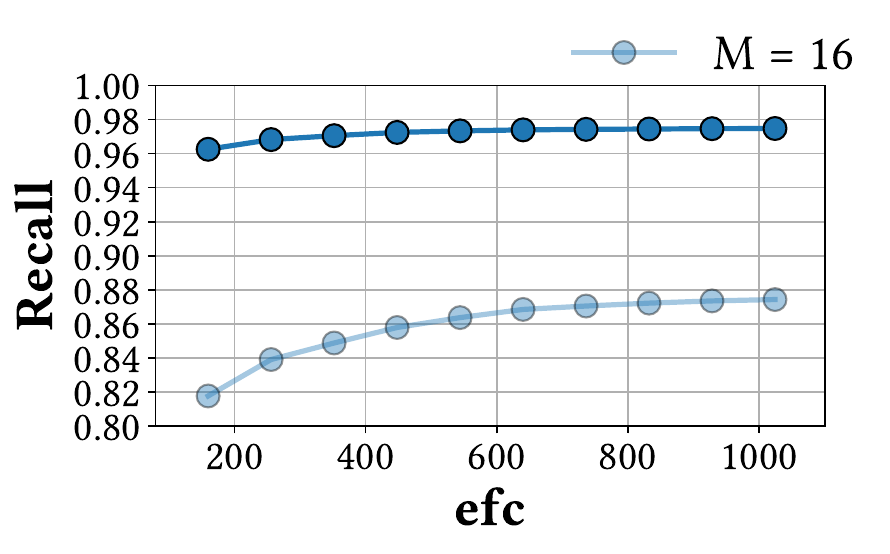}
    \vspace{-1.5em}
    \caption{}
    \label{fig:graph_quality}
  \end{subfigure}\hfill
  \begin{subfigure}[b]{0.48\linewidth}
    \includegraphics[width=\linewidth]{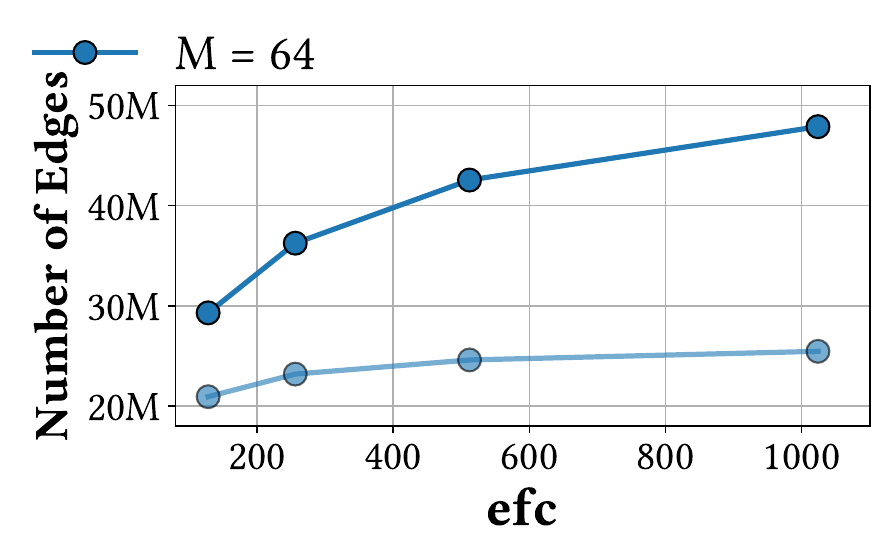}
    \vspace{-1.5em}
    \caption{}
    \label{fig:edge_growth}
  \end{subfigure}
  \vspace{-1em}
  \caption{Graph quality and realized edge growth across different \M{} values.
  (a) Recall as a function of \efc{}, showing that larger \M{} values reach high graph quality with smaller \efc{}.
  (b) Number of edges as a function of \efc{}, showing saturation in realized connectivity under fixed degree caps.}
  \label{fig:graph_growth}
\end{figure}

Figure~\ref{fig:graph_growth} explains why $efc_M^*$ generally decreases as
\M{} increases. When \M{} is small, each node can retain only a limited number
of neighbors, so larger \efc{} is useful for exposing high-quality and diverse
candidate edges. When \M{} is larger, the graph can already retain more useful
routing alternatives, so similar navigability can be achieved with a smaller
construction-time candidate pool. Figure~\ref{fig:graph_growth}(b) confirms
this substitution effect: realized edges saturate quickly under tight degree
caps, while larger \M{} allows the graph to keep growing as larger \efc{}
exposes more useful candidates. Thus, \M{} and \efc{} act as partial substitutes
for graph quality, explaining the dominant decreasing trend of $efc_M^*$.

\begin{figure}[t!]
  \centering
  \includegraphics[width=1.0\linewidth]{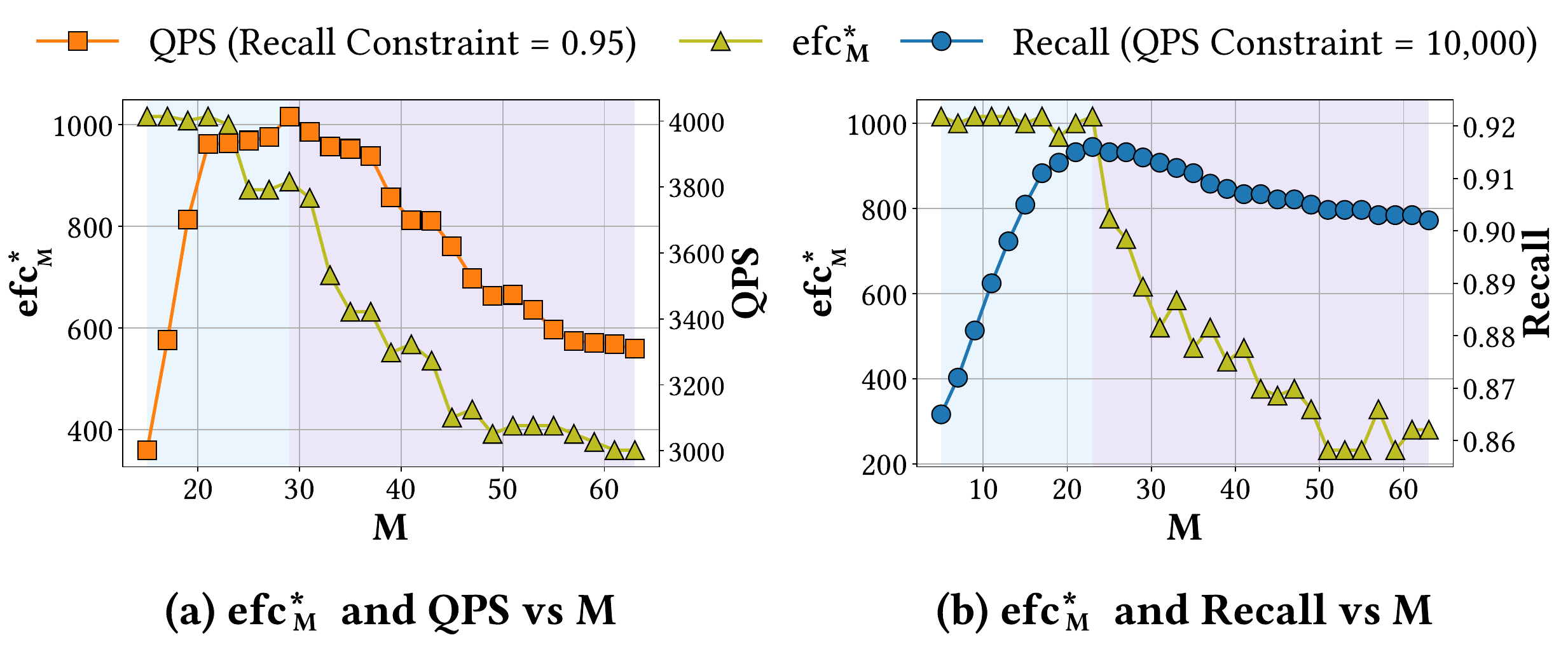}
  \vspace{-2em}
  \caption{Optimized \efc{} and objective performance across varying \M{} values.
  (a) $efc_M^*$ and resulting QPS versus \M{} under a Recall constraint.
  (b) $efc_M^*$ and resulting Recall versus \M{} under a QPS constraint.}
  \label{fig:M_outer_envelope}
\end{figure}

Figure~\ref{fig:M_outer_envelope} shows the resulting outer objective.
Under a Recall constraint, small \M{} produces poor connectivity, so the
Recall target requires a large boundary \efs{} and suppresses QPS. Increasing
\M{} initially improves graph navigability and reduces the required \efs{}, so
QPS improves. At larger \M{}, however, additional degree capacity provides
diminishing graph-quality benefit while increasing neighbor processing per
expanded node, shifting the objective to an overhead-dominated regime. Under a
QPS constraint, the same trade-off appears in the opposite objective: increasing
\M{} initially improves Recall, but excessive degree forces a smaller boundary
\efs{} to preserve throughput, eventually reducing Recall. Both cases show a
dominant unimodal trend in $g(M)$.

\reviewerC{
\begin{lemma}[Sufficient condition for outer-objective unimodality over \M{}]
\label{lem:m_unimodal}
Fix an active constraint $C\in\{\mathrm{Recall},\mathrm{QPS}\}$. On the ordered
grid $M_1<\cdots<M_n$, write the marginal change as
$\Delta g(M_i)=\Delta B_{\mathrm{out}}(M_i)-\Delta T_{\mathrm{out}}(M_i)$,
where $\Delta B_{\mathrm{out}}$ is the marginal graph-quality benefit from
increasing \M{} with \efc{} re-optimized at each \M{}, and
$\Delta T_{\mathrm{out}}$ is the residual traversal/search cost. Any reduction
in expanded nodes from better graph navigability is counted in
$\Delta B_{\mathrm{out}}$, not as negative cost. If
\begin{enumerate}[label=(\roman*),leftmargin=1.5em,topsep=0.1em,itemsep=0em]
\item $\Delta T_{\mathrm{out}}(M_i)\ge0$ for all $i$,
\item once $\Delta B_{\mathrm{out}}(M_i)\le\Delta T_{\mathrm{out}}(M_i)$,
larger \M{} values do not reveal a delayed high-utility connectivity regime
that restores $\Delta B_{\mathrm{out}}>\Delta T_{\mathrm{out}}$, and
\item re-optimizing \efc{} at larger \M{} does not create a new high-performing
branch, i.e., \M{} and \efc{} remain partial substitutes for graph quality,
\end{enumerate}
then $g(M)$ is unimodal over the ordered \M{} grid.
\end{lemma}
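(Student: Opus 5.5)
The plan mirrors the argument for Lemma~\ref{lem:efc_unimodal}, applied to the outer objective $g(M)=\max_{efc}F_C(M,efc)$ instead of $f_M$. Unimodality on the ordered grid $M_1<\cdots<M_n$ is purely a statement about the signs of the discrete increments $\Delta g(M_i)=g(M_{i+1})-g(M_i)$. It suffices to show that this sign sequence has the form ``nonnegative, then nonpositive'' with at most one switch. The peak index is then the first $i^\ast$ at which $\Delta g(M_{i^\ast})\le 0$, or $n$ if no such index exists.

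First, I need to make sure the decomposition $\Delta g=\Delta B_{\mathrm{out}}-\Delta T_{\mathrm{out}}$ is well defined as an increment of a single-valued function.
\begin{itemize}
\item By definition, $g(M_i)$ is attained at some $efc^*_{M_i}$, with the boundary \efs{} chosen as in Section~\ref{sec:construction_efs}. So each $g(M_i)$ is a number.
\item $\Delta B_{\mathrm{out}}$ and $\Delta T_{\mathrm{out}}$ are defined along the re-optimized path $(M_i,efc^*_{M_i})$. Reductions in expanded nodes are booked entirely in $\Delta B_{\mathrm{out}}$, so condition~(i) is a genuine nonnegativity statement about residual cost and does not double-count.
\item Condition~(iii) guarantees that this path is a single coherent branch. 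Since \M{} and \efc{} are partial substitutes, $efc^*_{M}$ moves along the dominant decreasing trend of Figure~\ref{fig:M_outer_envelope}. The maximizer therefore never jumps to a disjoint high-performing region of the $(M,efc)$ plane. Such a jump would make $\Delta B_{\mathrm{out}}$ discontinuous in a way not captured by (ii).
\end{itemize}

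Second, define $i^\ast=\min\{i:\Delta B_{\mathrm{out}}(M_i)\le\Delta T_{\mathrm{out}}(M_i)\}$.
\begin{itemize}
\item For $i<i^\ast$, we have $\Delta B_{\mathrm{out}}>\Delta T_{\mathrm{out}}$, so $\Delta g(M_i)>0$ and $g$ is nondecreasing up to $M_{i^\ast}$.
\item For $i\ge i^\ast$, condition~(ii) forbids the inequality from flipping back. Hence $\Delta B_{\mathrm{out}}(M_i)\le\Delta T_{\mathrm{out}}(M_i)$ for all $i\ge i^\ast$, which gives $\Delta g(M_i)\le 0$, and $g$ is nonincreasing after the peak.
\item Condition~(i) makes the crossover one-directional. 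Cost never contributes a positive increment, so the only way the sign of $\Delta g$ could return to positive is through a benefit resurgence, and that is excluded by (ii) together with (iii).
\end{itemize}
Combining the two ranges gives unimodality. Plateaus and ties between grid points are handled by the non-strict inequalities in the definition.

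The main obstacle is the first step: justifying that taking a maximum over \efc{} does not break the single-crossing structure. The pointwise maximum of unimodal functions need not be unimodal. What I would try first is to use condition~(iii) explicitly, in the form that the argmax selection $M\mapsto efc^*_M$ stays on one branch. Along that branch, $g$ is the restriction of a single function to a monotone path, so the single-crossing argument applies directly. I would state plainly that (iii) is exactly the assumption that rules out an envelope switching between branches. As with Lemma~\ref{lem:efc_unimodal}, the result is then a sufficient-condition sketch rather than a universal claim.
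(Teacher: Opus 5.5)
Your proposal is correct and uses essentially the same marginal-crossing argument as the paper's proof sketch: (i) makes the residual cost term nonnegative, (ii) rules out a within-branch recovery after the first non-positive increment $\Delta g(M_{i^\ast})\le 0$, and (iii) rules out the envelope $g(M)=\max_{\mathit{efc}}F_C(M,\mathit{efc})$ recovering by switching to a new $\mathit{efc}$ branch, so the increments never return from non-positive to positive. Your reading of (iii) as keeping the argmax path $M\mapsto \mathit{efc}^*_M$ on a single branch restates the paper's ``no cross-branch recovery'' use of that condition.
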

\noindent\emph{Proof.}
Due to space limitations, we defer the proof sketch to Appendix~B.2.
}

Lemma~\ref{lem:m_unimodal} is sufficient, not universal. In particular,
Lemma~\ref{lem:efc_unimodal} alone does not imply unimodality of the outer
envelope $g(M)$; condition~(iii) rules out recoveries caused by re-optimizing
\efc{} into a delayed high-utility regime at larger \M{}. If these conditions
fail, $g(M)$ may become bimodal or multi-modal, as discussed in
Section~\ref{sec:structural_robustness}. When they hold, the outer search over
\M{} can be performed efficiently using the ternary-style procedure in
Section~\ref{sec:solution}.

\subsection{Structural Implications}
\jae{Taken together, these observations reveal that the HNSW hyperparameter space is not an unstructured three-dimensional landscape. Instead, it forms a partially ordered search space with distinct structural properties:}

\begin{itemize}[leftmargin=*, labelsep=0.5em, topsep=0pt, itemsep=0pt]
\item{\textit{\textbf{efs}} induces a monotonic trade-off between recall and latency, enabling binary feasibility search.}
\item{\textbf{M} exhibits unimodal objective behavior under fixed constraints, enabling efficient ternary optimization.}
\item{\textit{\textbf{efc}} acts as a feasibility gate that constrains construction-time resources.}
\end{itemize}

\jae{These properties fundamentally distinguish HNSW tuning from generic black-box optimization problems.
They enable deterministic, low-sample optimization strategies that exploit algorithmic structure rather than relying on random exploration. While the precise curvature of these trade-offs varies across datasets and backends, the monotonic and unimodal structures consistently appear in standard HNSW implementations with bounded degree and greedy layer-wise search.}
\cgcg{
These structural properties also provide a lens for reasoning about
how the optimal construction effort responds to corpus and workload
updates, as discussed next.
}

\begin{figure}[t!]
  \centering
  \includegraphics[width=1.0\linewidth]{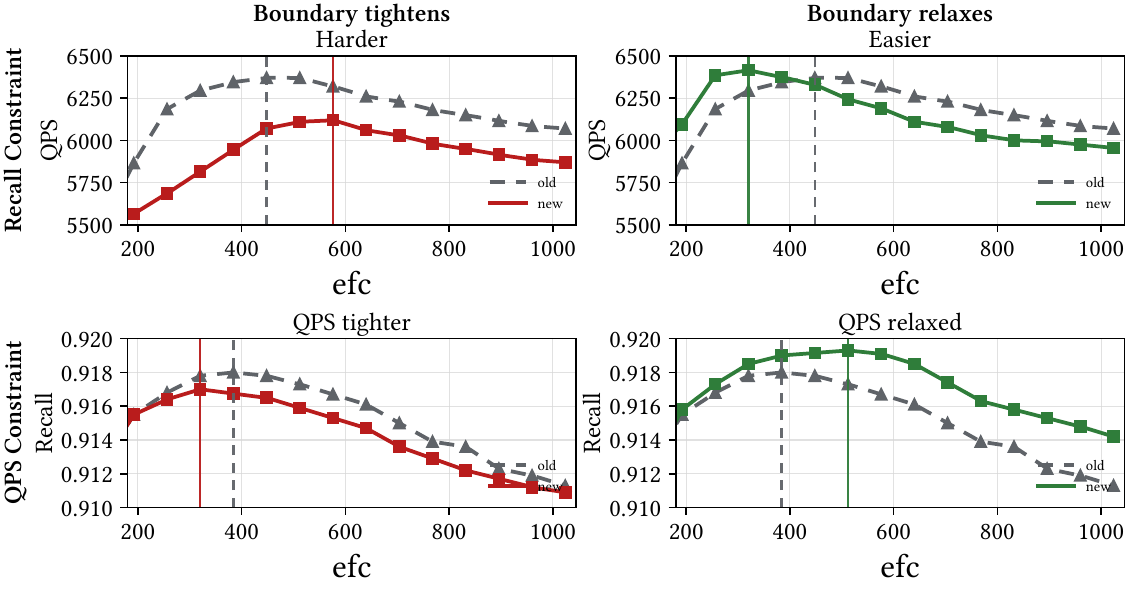}
  \vspace{-2em}
    \caption{
    \cgcg{
    Drift implications of \efs{}-boundary shifts. Under a Recall
    constraint, \starefcm{} moves right when the required \efs{}
    increases and left when it decreases. Under a QPS constraint,
    \starefcm{} moves left when the maximum allowed \efs{}
    decreases and right when it increases.
    }
    }
  \label{fig:drift_boundary_shift}
\end{figure}

\cgcg{
\subsection{Drift Implications of Feasibility Boundaries}
\label{sec:drift-implications}

The preceding analysis assumes a fixed corpus and query set. We now 
consider how the same trade-offs change when the corpus or query 
distribution changes. Under a Recall constraint, \starefcm{} marks the 
transition between a quality-limited region---where increasing \efc{} 
still substantially reduces the required \efs{}---and a cost-limited 
region, where graph-quality gains have saturated. A change that 
\emph{tightens} the Recall boundary $efs^*_{\mathrm{Recall}}(M,efc)$ 
(the minimum \efs{} satisfying the Recall target increases) expands the 
quality-limited region and moves $efc^*_M$ toward larger \efc{}; 
\emph{relaxing} the boundary shifts $efc^*_M$ in the opposite direction. 
Under a QPS constraint, tightening the boundary 
$efs^*_{\mathrm{QPS}}(M,efc)$ (the maximum feasible \efs{} decreases) 
makes high-\efc{} configurations cost-dominated earlier, so $efc^*_M$ 
moves leftward; relaxing it shifts $efc^*_M$ rightward.

Figure~\ref{fig:drift_boundary_shift} summarizes these directional 
effects. Insertions, deletions, and query-distribution shifts can each 
tighten or relax the active boundary depending on their net effect; the 
change type alone therefore does not determine the direction of movement. 
These directions should be understood as typical structural behaviors 
rather than universal guarantees, especially under near-duplicate 
insertions, targeted deletions, or strong query distribution shifts.
}

%% file: sections/hyperparameters_resource.tex
\section{\reviewerB{Relationship between Hyperparameters and HNSW Resources}}
\label{sec:hyperparameters_resource}

This section characterizes how the HNSW construction hyperparameters
$(M,\textit{efc})$ affect build time and index size. We separate the
algorithmic dependence induced by \textsc{Insert}, \textsc{Search-Layer}, and
\textsc{Select-Neighbors} from backend-specific constants, treating $N$ and
$d$ as fixed workload parameters. The goal is not to introduce a new objective,
but to identify structural resource dependencies that enable
feasibility-aware pruning. Unless otherwise specified, measurements use the
same experimental setup as Section~\ref{sec:hyperparameters_performance}.

\subsection{Index Construction Time}
\label{sec:build_time_modeling}


HNSW constructs the index by inserting points sequentially. For a dataset
$D=\{x_i\}_{i=1}^{N}$, the total build time decomposes over insertions:
\begin{equation}
T_{\mathrm{build}}(N,M,\textit{efc},d)
=
\sum_{i=1}^{N}T_{\mathrm{ins}}(i;M,\textit{efc},d).
\label{eq:build_sum}
\end{equation}
Each insertion descends from the top layer to layer $0$, performs \textsc{Search-Layer} to identify candidate neighbors at each visited layer, and then applies \textsc{Select-Neighbors} to finalize outgoing edges. Let $L_i$ denote the maximum layer of the $i$-th inserted point. The insertion cost can be decomposed as
\begin{equation}
T_{\mathrm{ins}}(i;M,\textit{efc},d)
=
\sum_{\ell=0}^{L_i}
T_{\mathrm{search}}(i,\ell;M,\textit{efc},d)
+
\sum_{\ell=0}^{L_i}
T_{\mathrm{select}}(i,\ell;M,d).
\label{eq:insert_decomp}
\end{equation}

\begin{figure}[t]
  \centering
  \begin{subfigure}[b]{0.5\linewidth}
    \includegraphics[width=\linewidth]{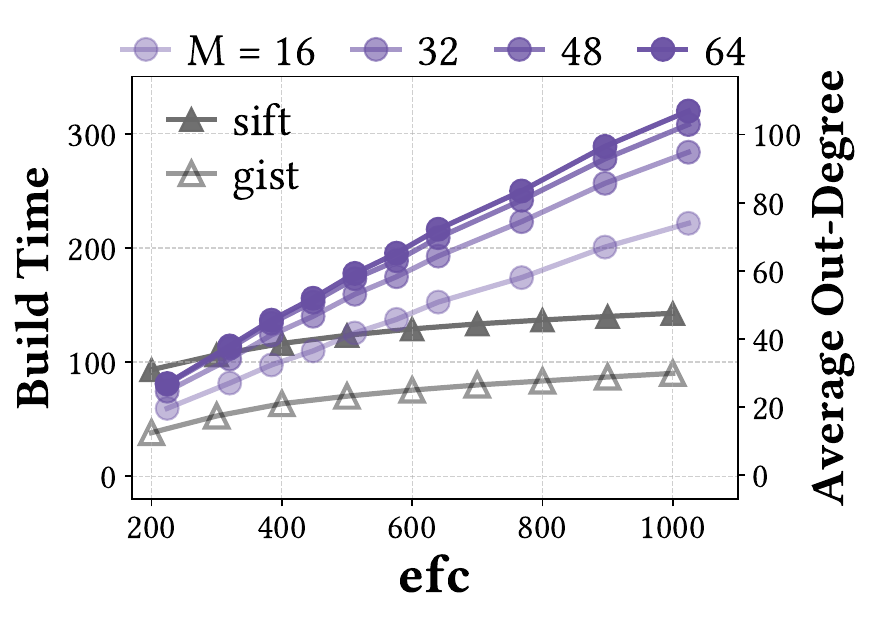}
    \vspace{-1.5em}
    \caption{}
    \label{fig:build_time_and_aod_by_efc}
  \end{subfigure}\hfill
  \begin{subfigure}[b]{0.5\linewidth}
    \includegraphics[width=\linewidth]{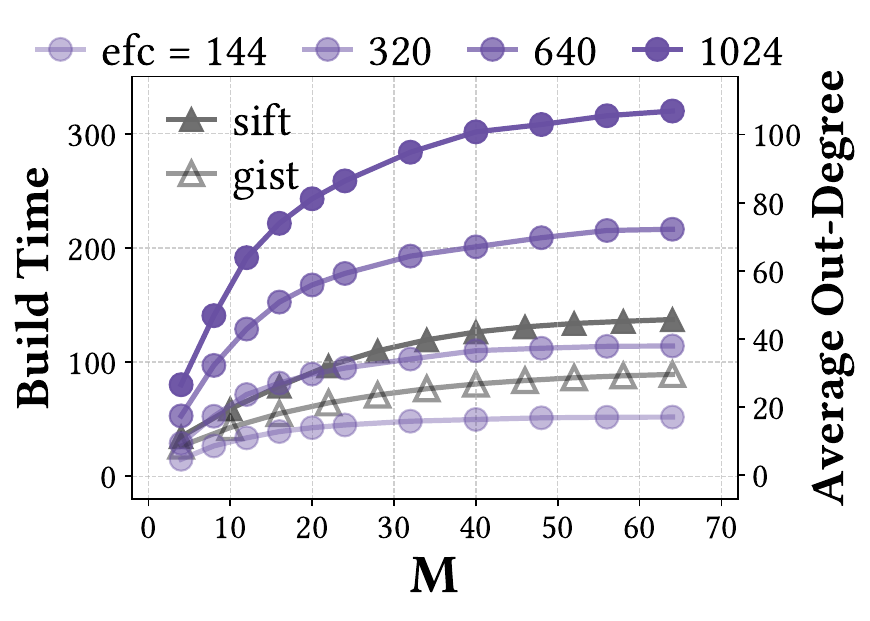}
    \vspace{-1.5em}
    \caption{}
    \label{fig:build_time_and_aod_by_m}
  \end{subfigure}
  \vspace{-2em}
  \caption{Build time and realized connectivity of HNSW as functions of the construction hyperparameters $\boldsymbol{(M,\textit{efc})}$. Index construction time (left y-axis) and average out-degree (AOD; right y-axis) are shown. (a) For fixed $\boldsymbol{M}$, both build time and AOD increase monotonically with $\textit{efc}$. (b) For fixed $\textit{efc}$, build time grows sublinearly in $M$, while AOD saturates as $\boldsymbol{M}$ increases.}
  \label{fig:build_time_by_m_efc}
\end{figure}

\textbf{Search-cost accounting.}
The dominant cost in SEARCH-LAYER comes from graph exploration: the
algorithm expands candidate nodes, inspects their adjacency lists, and
evaluates previously unseen candidates. Let
$X_{i,\ell}(M,\textit{efc})$ denote the number of expanded nodes when
inserting point $i$ at layer $\ell$, and let
$\bar{\delta}_{i,\ell}(M,\textit{efc})$ denote the average realized
out-degree inspected per expansion. The dominant search cost can then be
summarized as
\begin{equation}
T_{\mathrm{search}}(i,\ell;M,\textit{efc},d)
\approx
c_d\,
X_{i,\ell}(M,\textit{efc})
\bar{\delta}_{i,\ell}(M,\textit{efc})
+
C_{\mathrm{impl}},
\label{eq:search_cost_coarse}
\end{equation}
where $c_d$ captures the cost of candidate inspection and distance
evaluation for a fixed $d$-dimensional workload, and $C_{\mathrm{impl}}$
absorbs backend-specific bookkeeping overheads. This accounting abstracts
away implementation constants. In particular, implementations that
maintain visited sets may avoid repeated distance computations for the
same node, so adjacency inspection and distance evaluation need not be in
one-to-one correspondence. The structural dependence is nevertheless
unchanged: $\textit{efc}$ primarily controls the expansion term, whereas
$M$ controls the degree cap that bounds the realized-degree term.

\textbf{Role of $\textit{efc}$.}
The construction parameter $\textit{efc}$ bounds the candidate set
maintained by SEARCH-LAYER and therefore controls the breadth of graph
exploration during insertion. Increasing $\textit{efc}$ allows the search
to retain more candidates and expand more nodes before termination. In
Equation~\eqref{eq:search_cost_coarse}, this primarily increases the
expansion term $X_{i,\ell}(M,\textit{efc})$, explaining the build-time
increase observed in Figure~\ref{fig:build_time_by_m_efc}(a). A broader
construction search can also expose SELECT-NEIGHBORS to more candidate
neighbors, which explains the accompanying increase in realized
connectivity.


\textbf{Saturation effects with increasing $M$.}
The parameter $M$ sets the per-layer degree cap up to a constant-factor
difference at layer $0$. To quantify realized connectivity, we define the
average out-degree across the HNSW layer hierarchy as
\begin{equation}
\mathrm{AOD}(M,\textit{efc})
=
\frac{
\sum_{\ell}\sum_{v\in V^{(\ell)}} \deg^{(\ell)}(v)
}{
\sum_{\ell}|V^{(\ell)}|
}.
\label{eq:aod}
\end{equation}
AOD is the realized counterpart of the degree cap: it measures how many
neighbors are retained after diversification and re-pruning. Although the
degree cap grows with $M$, realized AOD need not grow linearly. HNSW's
neighbor selection procedure favors diverse neighbors and may reject
redundant candidates even when the cap increases; when existing
neighborhoods exceed their degree limits, the same pruning logic is
applied again. As a result, increasing $M$ yields diminishing marginal
increases in realized connectivity. 

In Equation~\eqref{eq:search_cost_coarse}, this means that
$\bar{\delta}_{i,\ell}(M,\textit{efc})$ grows sublinearly with the cap,
which is consistent with the build-time saturation observed in
Figure~\ref{fig:build_time_by_m_efc}(b).

Overall, HNSW construction induces a separable resource dependence:
$\textit{efc}$ primarily governs the breadth of construction search, while
$M$ controls per-expansion connectivity through the realized graph degree.
Section~\ref{sec:constraints_estimator} instantiates this structure as a
compact build-time feasibility estimator.



\subsection{Index Size}
\label{sec:index_size_modeling}

\begin{figure}[t!]
  \centering
  \begin{subfigure}[b]{0.5\linewidth}
    \includegraphics[width=\linewidth]{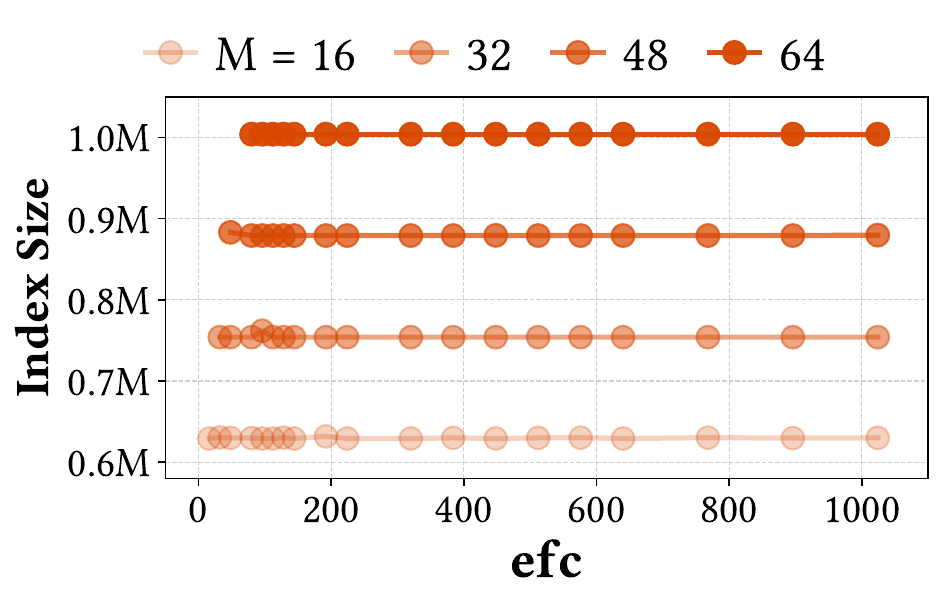}
    \vspace{-1.5em}
    \caption{}
    \label{fig:index_size_by_efc}
  \end{subfigure}\hfill
  \begin{subfigure}[b]{0.5\linewidth}
    \includegraphics[width=\linewidth]{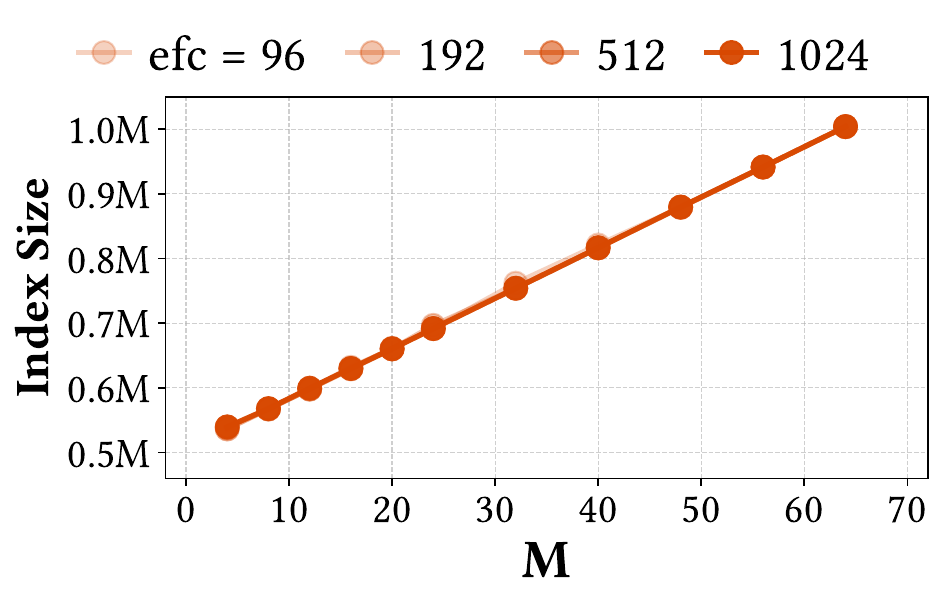}
    \vspace{-1.5em}
    \caption{}
    \label{fig:index_size_by_m}
  \end{subfigure}
  \vspace{-2em}
  \caption{Index size dependence on construction hyperparameters. Index size as a function of $(M,\textit{efc})$. \jae{(a) For a fixed $M$, index size is largely insensitive to $\textit{efc}$. (b) For a fixed $\textit{efc}$, index size increases approximately linearly with $M$, and curves for different $\textit{efc}$ largely overlap.}}
  \label{fig:index_size_by_m_efc}
\end{figure}

For fixed $N$ and $d$, vector storage and metadata are constant over
$(M,\mathit{efc})$; the hyperparameter-dependent term is adjacency storage.
Since HNSW stores $O(M)$ outgoing edges for an expected $O(N)$ node copies
across layers, the dominant index-size dependence is
\begin{equation}
S_{\mathrm{index}}(N,M)
=
C_{\mathrm{fixed}}(N,d)
+
\Theta(NM),
\label{eq:index_size_scaling}
\end{equation}
where $C_{\mathrm{fixed}}(N,d)$ includes vector storage and fixed metadata.
Backend-specific choices such as memory layout, alignment, and auxiliary
bookkeeping change the constants, but not the qualitative dependence on
$(M,\textit{efc})$.

The construction parameter $\textit{efc}$ does not appear in
Equation~\eqref{eq:index_size_scaling} because it changes the search
budget used during construction, not the degree cap that determines
adjacency capacity. A larger $\textit{efc}$ may affect which neighbors are
selected, and can therefore slightly change realized connectivity, but it
does not change the dominant storage budget induced by $M$. Thus, unlike
build time, index size is governed primarily by the degree-cap-induced
adjacency storage rather than by the amount of construction search. This
explains why index size remains largely invariant to $\textit{efc}$ in
Figure~\ref{fig:index_size_by_m_efc}(a), while increasing approximately
linearly with $M$ in Figure~\ref{fig:index_size_by_m_efc}(b).

\subsection{Feasibility-Aware Tuning Implications}
\label{sec:resource_implications}

The analysis above yields three structural properties of HNSW resource
usage:

\begin{itemize}[leftmargin=*, labelsep=0.5em, topsep=0pt, itemsep=0pt]
\item construction time increases with the search budget $\textit{efc}$.
\item construction time grows sublinearly with $M$ because realized
connectivity saturates under neighbor selection and pruning.
\item index size scales approximately linearly with $M$ and is largely
independent of $\textit{efc}$.
\end{itemize}

These separable dependencies allow infeasible regions of the
hyperparameter space to be identified without constructing every candidate
index. Build-time constraints can be evaluated using the dependence of
search breadth on $\textit{efc}$ and realized connectivity on $M$, while
index-size constraints can be evaluated primarily from $M$.
Section~\ref{sec:constraints_estimator} uses these structural properties
to construct compact feasibility estimators, and
Section~\ref{sec:solution} applies them to prune resource-violating
configurations before expensive index construction.

%% file: sections/solution.tex
\section{\reviewerB{Efficient Constraint-Aware Hyperparameter Optimization}}
\label{sec:solution}

This section presents \frameworkname{}, a constraint-aware framework for tuning HNSW hyperparameters. Given user constraints on search quality, search efficiency, resource usage, and tuning time, \frameworkname{} seeks the configuration $(M, efc, efs)$ that maximizes the target objective while satisfying these constraints.

\frameworkname{} combines three ideas: it decomposes HNSW tuning into structured subproblems, exploits the monotonicity and dominant unimodality analyzed in Section~\ref{sec:hyperparameters_performance}, and filters resource-infeasible configurations using the models in Section~\ref{sec:hyperparameters_resource}.

\reviewerC{
\frameworkname{} is API-level black-box but HNSW-structure-aware. It uses only standard build/query APIs and validation-workload measurements, including Recall, QPS, build time, and index size, without inspecting adjacency lists, layer populations, or degree distributions. However, it is not ANN-agnostic: its search decomposition relies on public HNSW mechanisms, namely bounded-degree construction controlled by M, greedy layer-wise \textsc{Search-Layer}, and diversification-based \textsc{Select-Neighbors}. We therefore scope its structural claims to standard CPU HNSW implementations that preserve these semantics, including Faiss~\cite{8733051}, Hnswlib~\cite{malkov2018efficientrobustapproximatenearest}, and Milvus~\cite{10.1145/3448016.3457550}. These assumptions may weaken for variants with different neighbor selection, disk- or GPU-dominated bottlenecks, or dynamic graph maintenance. They also do not automatically transfer to non-HNSW families such as IVF+PQ~\cite{5432202}, ScaNN~\cite{pmlr-v119-guo20h}, or NSG~\cite{fu12fast}.
}

\begin{figure}[t]
  \centering
  \includegraphics[width=1\linewidth]{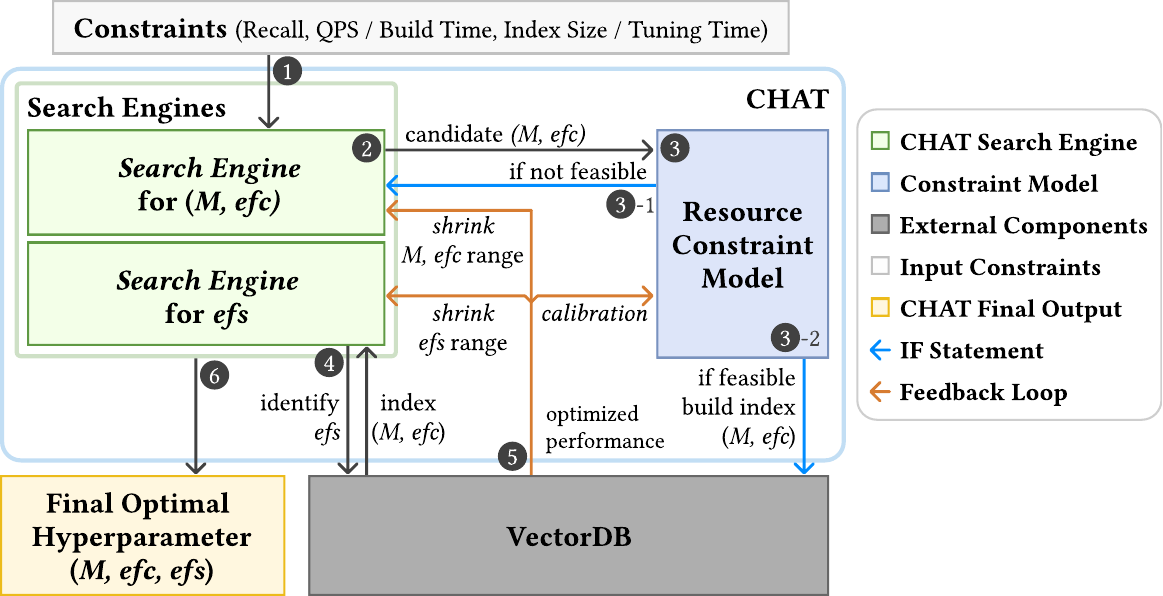}
  \caption{The overall architecture and iterative tuning workflow of \frameworkname{}.}
  \label{fig:overview_architecture}
\end{figure}

\subsection{System Overview}
\label{sec:overview}

\gun{
This subsection outlines the end-to-end tuning workflow of \frameworkname{}.
Figure~\ref{fig:overview_architecture} illustrates the system architecture and
the iterative interaction among its core components.
The workflow begins when \blackcircled{1} the user specifies a target dataset,
an HNSW-based vector database backend, and a set of constraints, including 
performance requirements (e.g., minimum Recall or QPS), resource limits on
build time and index size, and a total tuning-time budget.
These inputs jointly define the optimization objective and the feasible search
space.
\blackcircled{2} Guided by the user inputs, the \textit{Search Engine for
$(\textit{M}, \textit{efc})$} proposes a candidate construction configuration $(\textit{M}, \textit{efc})$.
This component orchestrates the outer loop of the tuning process, exploring
the structural parameter space.
\blackcircled{3} The proposed candidate is immediately evaluated by the 
\textit{Resource Constraint Models}, which predicts whether the configuration is likely 
to satisfy the specified resource constraints using learned cost models.
If the candidate is predicted to be infeasible, \blackcircled{3}-1 the search engine
receives feedback to prune the corresponding region of the search space without 
incurring actual build costs.
Otherwise, \blackcircled{3}-2 the system proceeds to construct an HNSW index on 
the target backend using the candidate $(\textit{M}, \textit{efc})$.
\blackcircled{4} Once the index is built, the \textit{Search Engine for
\efs{}} takes over to identify the optimal \efs{} value that maximizes the 
 objective performance while adhering to the performance constraint.}
\hoeun{\blackcircled{5} The measured performance and resource usage are used to calibrate the resource constraint models, thereby shrinking the search space in subsequent iterations.}
\gun{Finally, \blackcircled{6} once the tuning budget is exhausted or the search 
converges, \frameworkname{} reports the final optimal hyperparameter configuration 
$(M, \textit{efc}, \textit{efs})$.
Through this iterative process, \frameworkname{} efficiently converges to a 
near-optimal configuration.
}
\cgcg{
When the indexed corpus or query workload subsequently changes, CHAT
reuses the same history as a cross-run warm start, as detailed in
Section~\ref{sec:drift_aware_retuning}.
}

\subsection{Search for Optimal Construction Configuration}
\label{sec:M_efc_search}

\gun{
\frameworkname{} searches the construction parameters using a nested but simple strategy.
For each probed \M{}, it searches for a promising \efc{}, skips candidates that are model-infeasible under the resource budgets, builds only the remaining candidates, and then selects \efs{} by boundary search on the built index.
Thus, the original three-dimensional tuning problem is reduced to three one-dimensional decisions: localizing a good \M{} region, selecting the best \efc{} for that \M{}, and choosing the boundary \efs{} that satisfies the active performance constraint.
We write $g(M)$ for the best validated objective value returned after optimizing \efc{} and \efs{} under a fixed \M{}; formally, $g(M)=\max_{efc}\textit{perf}(M,efc,efs^*(M,efc))$ over configurations feasible under $\mathcal{C}_{\text{perf}}$ and $\mathcal{C}_{\text{res}}$.
}

\begin{algorithm}[t]
\caption{Search Engine for $(\M{},\efc{})$}
\label{alg:chat_overview_merged}
\footnotesize
\begin{algorithmic}[1]
\Input Dataset $\mathcal{D}$; HNSW backend $\mathcal{V}$;
performance constraint $\mathcal{C}_{\text{perf}}$;
resource constraints $\mathcal{C}_{\text{res}}$;
tuning budget $\mathcal{T}$
\Output Best validated configuration $bestCfg=(\M{},\efc{},\efs{})$

\State Initialize resource models $\hat{h}_{bt},\hat{h}_{is}$ and history $\mathcal{H}\leftarrow\emptyset$
\State Initialize domains $\mathcal{M}\leftarrow[4,64]$, $\mathcal{E}_c\leftarrow[8,1024]$
\State $bestCfg\leftarrow\bot$, $bestPerf\leftarrow-\infty$

\While{remaining tuning time $>0$ and $\mathcal{M}$ not converged}
    \State Choose ternary probes $\M{}_1,\M{}_2\in\mathcal{M}$
    \For{$\M{}\in\{\M{}_1,\M{}_2\}$}
        \State $(perf_M,cfg_M)\leftarrow$
        \Statex \qquad \textsc{Optimize-EFC}$(\M{},\mathcal{E}_c,\mathcal{D},\mathcal{V},
        \mathcal{C}_{\text{perf}},\mathcal{C}_{\text{res}},\mathcal{T},
        \hat{h}_{bt},\hat{h}_{is},\mathcal{H})$
        \If{$perf_M>bestPerf$}
            \State $bestPerf\leftarrow perf_M$; $bestCfg\leftarrow cfg_M$
        \EndIf
    \EndFor
    \State Update $\mathcal{M}$ toward the probe with higher validated performance
\EndWhile
\State \Return $bestCfg$
\end{algorithmic}
\end{algorithm}

\gun{
Algorithm~\ref{alg:chat_overview_merged} implements the outer search over \M{}.
Each probed \M{} is evaluated by \textsc{Optimize-EFC}, which returns the best measured feasible triplet found for that \M{}.
The outer loop then shrinks the \M{} interval toward the probe with higher validated performance, following the dominant unimodal trend of $g(M)$ analyzed in Section~\ref{sec:performance_across_M}.
The best measured feasible configuration is tracked throughout the search, so \frameworkname{} can return a validated configuration even if the tuning budget expires before the intervals fully converge.
}

\begin{algorithm}[t]
\caption{\textsc{Optimize-EFC}}
\label{alg:ternary_over_efc}
\footnotesize
\begin{algorithmic}[1]
\Input Fixed \M{}; \efc{} domain $\mathcal{E}_c$; dataset $\mathcal{D}$;
backend $\mathcal{V}$; performance constraint $\mathcal{C}_{\text{perf}}$;
resource constraints $\mathcal{C}_{\text{res}}$; tuning budget $\mathcal{T}$;
models $\hat{h}_{bt},\hat{h}_{is}$; history $\mathcal{H}$
\Output Best validated score $g(M)$ and configuration $cfg_M$

\State Shrink $\mathcal{E}_c$ using history $\mathcal{H}$
\State $g(M)\leftarrow-\infty$, $cfg_M\leftarrow\bot$

\While{remaining tuning time $>0$ and $\mathcal{E}_c$ not converged}
    \State Choose ternary probes $\efc{}_1,\efc{}_2\in\mathcal{E}_c$
    \For{$\efc{}\in\{\efc{}_1,\efc{}_2\}$}
        \If{$\hat{h}_{bt}(\M{},\efc{})$ or $\hat{h}_{is}(\M{})$ violates $\mathcal{C}_{\text{res}}$}
            \State Record model-infeasibility in $\mathcal{H}$; \textbf{continue}
        \EndIf

        \State $I\leftarrow$ \textsc{BuildIndex}$(\mathcal{V},\mathcal{D},\M{},\efc{})$
        \State Measure build time and index size; update $\hat{h}_{bt},\hat{h}_{is}$

        \If{measured resource usage violates $\mathcal{C}_{\text{res}}$}
            \State Record measured infeasibility in $\mathcal{H}$; \textbf{continue}
        \EndIf

        \State $(\efs{}^*,perf)\leftarrow\textsc{Optimize-EFS}(I,\mathcal{C}_{\text{perf}},\mathcal{H})$
        \If{$\efs{}^*=\bot$}
            \State Record performance infeasibility in $\mathcal{H}$; \textbf{continue}
        \EndIf

        \State Record $(\M{},\efc{},\efs{}^*,perf)$ in $\mathcal{H}$
        \If{$perf>g(M)$}
            \State $g(M)\leftarrow perf$; $cfg_M\leftarrow(\M{},\efc{},\efs{}^*)$
        \EndIf
    \EndFor
    \State Update $\mathcal{E}_c$ toward the probe with higher validated performance
\EndWhile
\State \Return $(g(M),cfg_M)$
\end{algorithmic}
\end{algorithm}

\gun{
Algorithm~\ref{alg:ternary_over_efc} performs the inner search over \efc{} for a fixed \M{}.
Before ternary search, \frameworkname{} uses prior optima in $\mathcal{H}$ to narrow the candidate range: larger \M{} values typically need smaller construction effort because additional degree capacity already improves graph connectivity.
This rule only reduces the search range; the final choice is still made using measured resource feasibility and validation performance.
}

\gun{
The resource models and validation measurements play different roles.
The models are used only to avoid unnecessary index builds under the conservative resource test.
Candidates that pass this filter are still built, measured for build time and index size, and then evaluated by \textsc{Optimize-EFS}.
Thus, early pruning reduces tuning cost, while the reported configuration is selected only from configurations explicitly validated against the user constraints.
}

\subsection{Search for Optimal \efs{}}
\label{sec:efs_search}

\gun{
Given a fixed constructed index $I(\M{},\efc{})$, \frameworkname{} only needs to tune the search-time parameter \efs{}.
Unlike \M{} and \efc{}, changing \efs{} does not alter the graph, so all probes reuse the same built index.
As analyzed in Section~\ref{sec:effect_of_efs}, \efs{} induces an ordered feasibility boundary: larger \efs{} broadens base-layer traversal, improving Recall but increasing query cost.
\frameworkname{} therefore finds the boundary \efs{} by binary search rather than by rebuilding or re-exploring construction parameters.
}

\begin{algorithm}[t]
\caption{\textsc{Optimize-EFS}}
\label{alg:search_engine_efs}
\footnotesize
\begin{algorithmic}[1]
\Input Pre-built index $I(\M{},\efc{})$; performance constraint $\mathcal{C}_{\text{perf}}$; history $\mathcal{H}$
\Output Boundary value $\efs{}^*$ and measured objective $perf$

\State $[efs_{\min},efs_{\max}] \leftarrow$ shrink interval using $\mathcal{H}$
\Statex \hfill // history-based interval

\If{$\mathcal{C}_{\text{perf}}$ is $Recall \ge Recall_{\min}$}
  \State $\efs{}^*, perf \leftarrow$
  \textsc{BSearchMinFeasibleEFS}$(I,[efs_{\min},efs_{\max}],Recall_{\min})$
\ElsIf{$\mathcal{C}_{\text{perf}}$ is $QPS \ge QPS_{\min}$}
  \State $\efs{}^*, perf \leftarrow$
  \textsc{BSearchMaxFeasibleEFS}$(I,[efs_{\min},efs_{\max}],QPS_{\min})$
\EndIf

\If{no feasible \efs{} exists}
  \State \Return $(\bot,-\infty)$
\EndIf
\State \Return $(\efs{}^*, perf)$
\end{algorithmic}
\end{algorithm}

\gun{
Algorithm~\ref{alg:search_engine_efs} handles the two main performance constraints symmetrically.
Under a Recall constraint, \frameworkname{} returns the smallest feasible \efs{}, which maximizes QPS while meeting the Recall target.
Under a QPS constraint, it returns the largest feasible \efs{}, which maximizes Recall while meeting the throughput target.
The resulting pair $(\efs{}^*,perf)$ is returned to \textsc{Optimize-EFC} and used to compare candidate construction settings.
}

\gun{
The initial search interval is tightened using previous boundary values stored in $\mathcal{H}$.
Intuitively, stronger construction settings often require no larger \efs{} to satisfy the same performance constraint, because better graph connectivity reduces the query-time exploration needed for Recall and can also tighten the QPS-feasible upper bound.
Formally, for a prior configuration $(M',efc')$ with boundary value $\efs{}^*_{M',efc'}$, if $M'\ge M$ and $efc'\ge efc$, then $\efs{}^*_{M,efc}\ge \efs{}^*_{M',efc'}$; reversing both inequalities gives the corresponding upper bound.
\frameworkname{} uses this history only to reduce the number of probes; the returned \efs{} is still selected by measured feasibility on the current index.
Thus, history accelerates binary search but does not replace validation.
}

\reviewerA{
The same boundary-search view supports query-subset constraints.
For example, a deployment may require both average Recall and hard-query Recall:
\(\mathrm{Recall}(Q_{\mathrm{all}})\ge \rho_{\mathrm{all}}\) and
\(\mathrm{Recall}(Q_{\mathrm{hard}})\ge \rho_{\mathrm{hard}}\).
Because Recall on any fixed query subset is monotone in \efs{}, the conjunction remains a monotone predicate and can be handled by the same binary search.
We identify \(Q_{\mathrm{hard}}\) using Steiner-hardness~\cite{wang2024steiner}.
}

\reviewerC{
Tail-latency constraints can be handled analogously.
For a fixed index and validation workload, \frameworkname{} measures the requested tail metric, such as P95 or P99 latency, at each probed \efs{}.
Because increasing \efs{} only enlarges the \textsc{Search-Layer} candidate budget, the permitted traversal work is monotone non-decreasing in \efs{}; consequently, a fixed-workload latency quantile defines an upper-bound feasibility predicate up to measurement noise.
A service-level constraint such as \(\mathrm{P}_{\alpha}(\mathrm{Latency}) \le \tau_{\alpha}\) is therefore handled using the same max-feasible binary search as \(QPS \ge QPS_{\min}\), returning the largest \efs{} whose measured tail latency satisfies the constraint.
If Recall lower bounds and tail-latency upper bounds are specified together, 
the feasible \efs{} values form an interval, provided that the minimum Recall-feasible \efs{} does not exceed the maximum latency-feasible efs. 
The selected point in this interval depends on the active objective.
}


\subsection{\hoeun{Verifying Resource Constraints Satisfaction}}
\label{sec:constraints_estimator}

\hoeun{Efficient hyperparameter tuning requires an early mechanism to rule out configurations that violate user-specified resource budgets. Performance feasibility (QPS/Recall) is evaluated during search on a fixed index (Section~\ref{sec:hyperparameters_performance} and Algorithm~\ref{alg:search_engine_efs}). This subsection addresses \emph{resource} feasibility. Build time and index size budgets cannot be verified for every candidate $(M,\textit{efc})$ by fully constructing an index, as doing so would dominate tuning cost. We therefore employ a lightweight \emph{resource-feasibility filter} that instantiates the algorithm-driven dependencies derived in Section~\ref{sec:hyperparameters_resource} into closed-form surrogates and continuously calibrates their remaining constants using measurements obtained during the search process.}

\subsubsection{\hoeun{Closed-form Instantiation}}
\label{sec:constraints_instantiation}

\hoeun{Section~\ref{sec:hyperparameters_resource} establishes two structural properties. First, build time is governed by a search-breadth component controlled by $\textit{efc}$ and a per-expansion scan component governed by realized out-degree, whose marginal growth in $M$ saturates due to neighbor diversification and pruning. Second, index size is dominated by the cap-induced adjacency capacity and therefore scales linearly with $M$ while remaining invariant to $\textit{efc}$. We instantiate these dependencies using compact parametric surrogates that preserve monotonicity and saturation while leaving scale and offset to be determined by calibration.

We use two resource surrogates: $h_{\mathrm{bt}}(M,\textit{efc})$ for build time and $h_{\mathrm{is}}(M)$ for index size. For build time, we adopt a log-like surrogate in $M$ to represent saturating realized connectivity, while retaining an explicit dependence on $\textit{efc}$ to capture increased exploration: \begin{equation}
h_{\mathrm{bt}}(M,\textit{efc})
=
\alpha_0 + \alpha_1\,\textit{efc}
+ (\beta_0+\beta_1\,\textit{efc})\cdot \log_2(1+M)
+ \gamma\,\textit{efc}\cdot \log_2^2(1+M).
\label{eq:hbt}
\end{equation}
The linear term in $\textit{efc}$ captures the monotone increase in construction search budget (Figure~\ref{fig:build_time_by_m_efc}(a)). The $\log_2(1+M)$ terms represent a representative monotone concave dependence on $M$, consistent with the saturation of realized out-degree observed empirically (Figure~\ref{fig:build_time_by_m_efc}(b)). Logarithmic growth is a convenient member of a broader class of concave monotone functions consistent with the algorithmic upper bounds derived in Section~\ref{sec:build_time_modeling}; it preserves sublinear growth while remaining analytically simple and stable under incremental calibration from a limited number of observations. The interaction term $\textit{efc}\log_2^2(1+M)$ captures that higher connectivity increases the number of candidate evaluations performed per unit of exploration during construction.

For index size, Section~\ref{sec:index_size_modeling} implies a linear dependence on $M$ and no dependence on $\textit{efc}$: \begin{equation}
h_{\mathrm{is}}(M)=\eta_0+\eta_1\,M.
\label{eq:his}
\end{equation}
Although $\textit{efc}$ may influence realized connectivity (Figure~\ref{fig:build_time_by_m_efc}(a)), it does not affect the cap-induced adjacency budget that dominates index size (Figure~\ref{fig:index_size_by_m_efc}); accordingly, $\textit{efc}$ is excluded from $h_{\mathrm{is}}$.}

\subsubsection{Online Calibration from Search-time Measurements}
\label{sec:anchor_calibration}

\hoeun{The functional forms in \eqref{eq:hbt} and \eqref{eq:his} are fixed by the algorithmic structure; only their scale and offset parameters are backend-dependent. These parameters are calibrated using all configurations that are actually constructed during the search process. Whenever the search engine builds an index to evaluate performance feasibility, the resulting build time and index size are recorded and incorporated into calibration.

Let $\mathcal{C}$ denote the set of configurations $(M_j,\textit{efc}_j)$ that have been constructed so far during search, with measured build times $T_j$ and index sizes $S_j$. Define the feature maps \begin{equation}
\phi_{\mathrm{bt}}(M,\textit{efc})
=
\big[1,\ \textit{efc},\ \log_2(1+M),\ \textit{efc}\log_2(1+M),\ \textit{efc}\log_2^2(1+M)\big],
\end{equation}\begin{equation}
\phi_{\mathrm{is}}(M)=[1,\ M].
\end{equation}
The surrogate parameters are obtained by solving linear least-squares problems over $\mathcal{C}$: \begin{equation}
h_{\mathrm{bt}}=\langle \theta_{\mathrm{bt}},\phi_{\mathrm{bt}}\rangle,
\qquad
h_{\mathrm{is}}=\langle \theta_{\mathrm{is}},\phi_{\mathrm{is}}\rangle,
\end{equation} where $\theta_{\mathrm{bt}}=(\alpha_0,\alpha_1,\beta_0,\beta_1,\gamma)$ and $\theta_{\mathrm{is}}=(\eta_0,\eta_1)$. Because calibration is linear in the number of accumulated measurements, its overhead is negligible relative to index construction.

As search progresses, the calibration set $\mathcal{C}$ naturally expands to cover a wider range of $(M,\textit{efc})$ values, including regions where saturation in $M$ becomes apparent. This incremental calibration improves alignment between the surrogates and backend-specific constants without altering the surrogate structure.}

\subsubsection{\reviewerA{Conservative Feasibility Test}}
\label{sec:conservative_test}


\reviewerA{Let $T_{\max}$ and $S_{\max}$ denote user-specified budgets for build time
and index size. To avoid accepting configurations whose resource usage is
underestimated by the calibrated surrogates, CHAT applies one-sided safety
margins to the build-time and index-size estimates. Each margin is computed
from the positive relative underestimation residuals observed on constructed
configurations in $\mathcal{C}$ and is recomputed whenever $\mathcal{C}$
grows. A candidate is pruned before construction only when the margin-adjusted
build time or index size estimate exceeds the corresponding budget. This is a conservative false-feasible guard, not a deterministic no-pruning guarantee; larger margins reduce false-feasible risk at the cost of pruning aggressiveness.}

\subsection{\reviewerA{Drift-Aware Retuning}}
\label{sec:drift_aware_retuning}

\reviewerA{ \frameworkname{} supports retuning when the indexed corpus or query workload changes after an initial tuning run. Rather than restarting from scratch, it reuses the previous tuning history \(\mathcal{H}\) as a cross-run warm start. The system refreshes the validation workload using recent queries and the current corpus, monitors Recall, QPS, and the \efs{} feasibility boundary, and first reruns only \textsc{Optimize-Efs} at the previous \((\M{},\efc{})\). Construction retuning is invoked only when the deployed configuration becomes infeasible or when changing \efs{} alone is insufficient. 

When construction retuning is needed, \frameworkname{} applies \emph{Constraint-Directed Hard Pruning with Sentinel Safety} (CDHP). CDHP uses the feasibility-boundary shift in Section~\ref{sec:drift-implications} to predict the movement of the best construction effort \(efc^*_M\). Under a Recall constraint, a tighter boundary, i.e., a larger minimum \efs{} to reach the Recall target, indicates that graph quality has become more important and suggests moving toward larger \efc{}. Under a QPS constraint, a tighter boundary, i.e., a smaller maximum feasible \efs{}, makes high-\efc{} configurations cost-dominated earlier and suggests moving toward smaller \efc{}. Relaxed boundaries induce the opposite directions. 

Guided by this signal, CDHP keeps only the corresponding one-sided construction region: \(\efc{}\ge efc^{*,old}_M\) for a rightward move and \(\efc{}\le efc^{*,old}_M\) for a leftward move. For unseen \M{} values, the cutoff is inferred from the historical envelope of \(efc^*_M\). \frameworkname{} then runs the standard \textsc{Optimize-Efc} and \textsc{Optimize-Efs} procedures within the retained domain, with the same resource filtering as in the initial run. Since the boundary-shift signal is structural rather than universal, CDHP places sentinel probes on the pruned side and expands the cached search domain if a sentinel is competitive within measurement tolerance; in the worst case, it completes the remaining search domain using cached measurements rather than restarting from scratch. For deletions, \frameworkname{} first cleanly rebuilds the updated corpus using the previous \((\M{},\efc{})\) to distinguish in-place deletion artifacts from true hyperparameter drift.
}

%% file: sections/experiment.tex
\section{Experiments}
\label{sec:experiment}

\subsection{Experimental Setup}
\label{sec:setup}

\gun{
We evaluated \frameworkname{} across multiple vector database backends, datasets, baselines, and metrics.
We used Hnswlib~\cite{malkov2018efficientrobustapproximatenearest} and Faiss~\cite{8733051} as reference HNSW implementations, as they expose the core behavior of HNSW with minimal system-level optimizations, allowing us to isolate the intrinsic performance characteristics analyzed in Section~\ref{sec:hyperparameters_performance}.
To assess real-world applicability, we additionally evaluated \frameworkname{} on Milvus~\cite{10.1145/3448016.3457550}, a production-grade vector database that uses HNSW internally.
Across all platforms, we varied three hyperparameters: the maximum node degree $M \in [4,64]$, the construction candidate size $\efc{} \in [8,1024]$, and the search candidate size $\efs{} \in [10,1024]$.}

We benchmarked \frameworkname{} on five representative datasets. Four
of them---\texttt{nytimes}, \texttt{glove}, \texttt{sift}, and
\texttt{deep1M}---were standard public datasets provided by
ANN-Benchmarks~\cite{AUMULLER2020101374}, widely adopted for evaluating
approximate nearest neighbor search (ANNS) algorithms. 
\reviewerA{In addition, we constructed the \texttt{youtube} dataset~\cite{youtubedataset}, 
a realistic video retrieval workload built by encoding extracted I-frames
from public high-view-count YouTube videos with OpenCLIP H-14~\cite{ilharco_gabriel_2021_5143773}, providing
a diverse and complementary evaluation point.
The characteristics of all datasets are summarized in Appendix~C, which further details the \texttt{youtube} construction pipeline, including the data source, frame extraction, and embedding model
}.


\begin{figure*}[t!]
  \centering
  \includegraphics[width=1\linewidth]{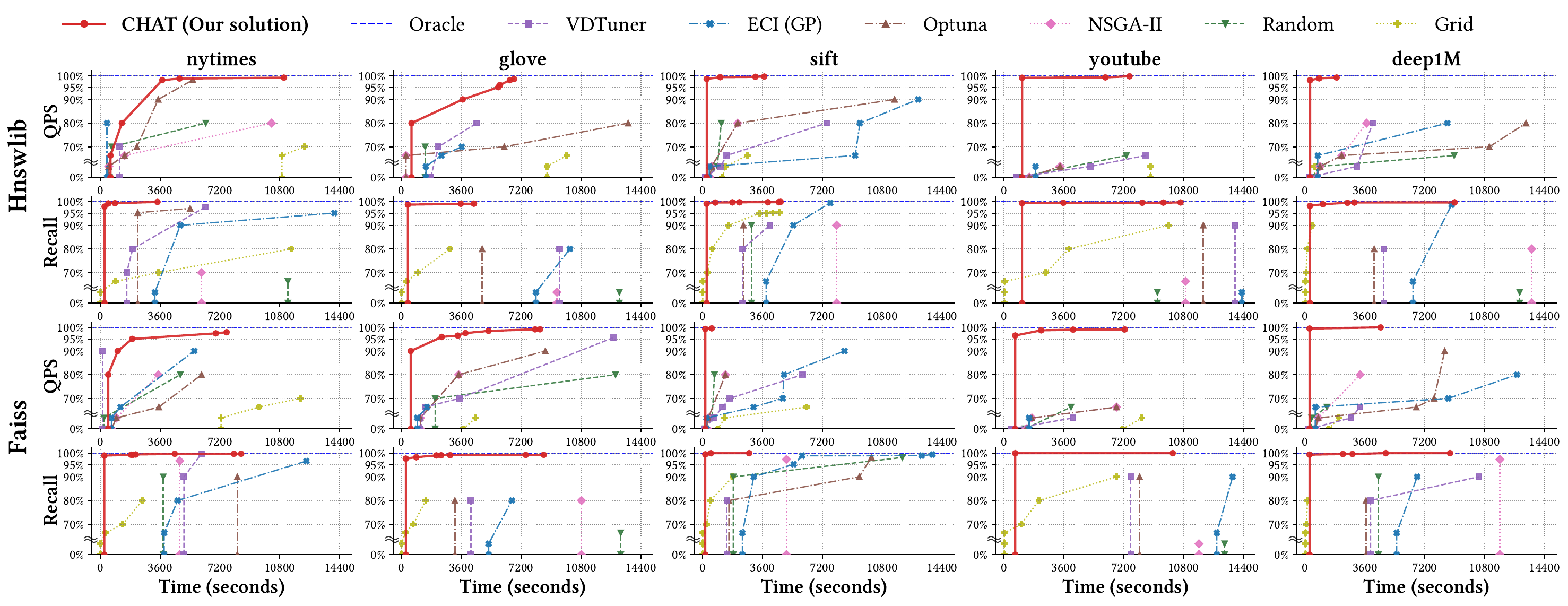}
    \caption{\gun{Tuning performance over time, showing the best-so-far objective value normalized to 
    the Oracle (exhaustive search). Each column corresponds to a dataset, and rows represent QPS (under Recall 
    constraint) and Recall (under QPS constraint) for Hnswlib and Faiss backends, respectively.
    \frameworkname{} consistently achieves faster convergence and higher final performance 
    than baseline methods across all datasets and constraint settings.}}
  \label{fig:main_figure}
  \vspace{-1em}
\end{figure*}

\reviewerA{
As comparison baselines, we considered a diverse set of black-box
hyperparameter tuning methods, covering model-free search, sequential
model-based optimization, evolutionary multi-objective search, constrained
Bayesian optimization, and vector-database tuning.
We included Random Search~\cite{10.5555/2188385.2188395} and
Grid Search~\cite{liashchynskyi2019gridsearchrandomsearch}, which
respectively sample configurations uniformly and enumerate a predefined
grid; both enforce constraints by discarding infeasible configurations.
We also evaluated Optuna~\cite{10.1145/3292500.3330701}, a TPE-based
sequential optimizer that proposes trials from past observations. In our
implementation, Optuna handles constraints through a large-penalty scalar
objective, assigning infeasible trials a penalty proportional to the degree
of violation.
NSGA-II~\cite{996017} is a population-based evolutionary optimizer that
uses Pareto dominance to balance objective value and constraint violation;
we implement it as a bi-objective search that minimizes violation while
maximizing the target metric.
We additionally included Expected Constrained Improvement
(ECI)~\cite{gardner2014bayesian}, a Gaussian-process constrained Bayesian
optimization method that models both objective and feasibility and selects
configurations by feasibility-weighted expected improvement.
We also compared against
VDTuner~\cite{yang2024vdtunerautomatedperformancetuning}, a
Bayesian-optimization framework designed for vector database tuning,
restricted to the same HNSW parameter space for fairness.
All baselines are treated as black-box tuners that do not inspect internal
HNSW graph statistics. 
A more detailed introduction to each baseline, together with its hyperparameter configurations and experimental settings, is provided in Appendix~D.
}

\gun{For evaluation, we measured Recall@10 and Queries Per Second (QPS) to assess search accuracy and throughput,
respectively, and recorded build time and index size to capture index construction cost and memory footprint.
All experiments were conducted on a single Ubuntu server with an AMD Ryzen Threadripper 3970X CPU
(32 cores, 64 threads, 3.7\,GHz), 188\,GB RAM, and 14\,GB swap, using CPU-only execution to reflect
typical vector database deployments.
To reduce runtime noise, each index configuration was built once and evaluated with the same query workload
over 10 runs, reporting the average Recall and QPS.}


\subsection{Constraint-Aware Tuning Results}

We evaluated the effectiveness of \frameworkname{} in identifying optimal
HNSW hyperparameter configurations under user-specified performance
constraints, using a common 4-hour wall-clock cap for all methods.
\reviewerA{
We considered two performance constraints: (1) $\mathrm{Recall}\ge0.95$, 
where QPS is maximized, and (2) $\mathrm{QPS}\ge\tau$, where Recall is 
maximized. \frameworkname{} always takes $\tau$ as a raw throughput 
target, e.g., $\mathrm{QPS}\ge5000$; the 75th percentile is used only 
in our evaluation to choose comparably difficult targets across 
dataset-backend pairs. In practice, $\tau$ can come from a service 
SLO or a small pilot probe---a quick QPS measurement at minimal 
$M$, $efc$, and $efs$ values that estimates the high-throughput 
scale of the workload. 
Appendix~E reports the raw QPS values used.
}

To assess the effectiveness of our search strategy, we first compared \frameworkname{}
against an Oracle (Exhaustive Search) that evaluates all valid hyperparameter configurations
within the predefined search ranges under the same performance constraint, providing an upper bound on achievable performance.
Despite this strong reference, \frameworkname{} achieved 98.03\%--99.97\% of the Oracle's QPS (under the Recall constraint) and 99.11\%--99.99\% of the Oracle's Recall (under the QPS constraint).
These results show that \frameworkname{} identifies near-optimal configurations with only a fraction of the exhaustive search cost, quantifying a small optimality gap under strict tuning budgets.

Figure~\ref{fig:main_figure} summarizes the tuning performance of all methods by plotting the best objective values
attained over time, normalized to the Oracle, across all datasets and constraint settings.
In all cases, \frameworkname{} (shown in red) either converged to the best-performing configuration
substantially faster than baseline methods or maintained superior objective performance throughout
the tuning process.
Under the Recall constraint (i.e., when maximizing QPS), \frameworkname{} outperformed all baseline
methods by margins ranging from 1\% to 45\% (1\% on the \texttt{nytimes} dataset with Hnswlib, and 45\%
on the \texttt{youtube} dataset with Hnswlib) in QPS. Under the QPS constraint (i.e., when maximizing
Recall), \frameworkname{} again demonstrated strong performance, achieving Recall improvements
of 0.6\% to 11\% (0.6\% on the \texttt{nytimes} dataset with Faiss, and 11\% on the \texttt{glove} dataset with
Faiss) over the best baselines.

\frameworkname{} converges quickly enough for practical deployment.
Using the time to reach 95\% of the Oracle objective as a convergence metric, \frameworkname{} is the only method to reach this threshold within the 4-hour cap in 13 out of 20 settings; in the remaining cases, it reaches the target $1.5\times$--$44\times$ faster than the best competing baseline.
\reviewerA{
The 4-hour cap is an evaluation budget rather than the runtime of 
\frameworkname{}: across the 10 backend--dataset combinations 
(two HNSW backends, Faiss and hnswlib, evaluated on five datasets) 
under the Recall=0.95 constraint, \frameworkname{} reaches 99\% of 
the Oracle objective in 48 minutes on average (range 13--123 minutes), 
consuming only 20\% of the cap.
Under a conservative no-serving-during-tuning break-even model, the mean 
break-even time is 3 hours and every backend--dataset combination recovers 
the tuning cost within 7 hours; over one week, the tuned configuration serves 
0.25--2 billion additional queries per combination.
}

\reviewerA{
Beyond average Recall and QPS, we also evaluate query-side SLOs
supported by \textsc{Optimize-EFS} in Section~\ref{sec:efs_search}.
For hard-query Recall, we rank validation queries by Steiner-hardness
and define Hard-20 as the hardest 20\% of queries. Average-only tuning
satisfies the all-query Recall target but leaves Hard-20 Recall below
0.95 in nine of ten Faiss/Hnswlib settings. When Hard-20 Recall is added
as an explicit constraint, \frameworkname{} satisfies both the all-query
and Hard-20 Recall targets in all ten settings while achieving
98.21\%--99.69\% of the corresponding Oracle objective.}
\reviewerC{
For tail latency, we evaluate raw P95 and P99 latency upper-bound
constraints. Across the same ten Faiss/Hnswlib settings, the structural
properties exploited by \frameworkname{} continue to hold empirically
under these query-side SLOs: the \efs{} feasibility predicate remains
monotone, and the fixed-\M{} objective over \efc{} and the outer objective
over \M{} continue to show dominant unimodal trends. Under the P95/P99
latency constraints, \frameworkname{} satisfies the requested tail-latency
bounds while achieving 99.07\%--99.99\% of the corresponding Oracle
objective.
}

\textbf{Why \frameworkname{} Outperforms.}
The superior performance and rapid convergence of \frameworkname{} stem from its
structure-exploiting design.
Unlike prior approaches (e.g., VDTuner, Optuna, ECI,
NSGA-II, and Random/Grid Search), which treat HNSW tuning as a black-box optimization problem,
\frameworkname{} explicitly leverages the monotonic and unimodal performance structure of HNSW.
By decomposing the search space into sequential one-dimensional subproblems and applying
deterministic binary and ternary search, \frameworkname{} progressively eliminates suboptimal
regions and can identify when further exploration is unlikely to yield meaningful improvement,
allowing it to terminate tuning early while still achieving near-optimal performance.
Lightweight resource constraint models (Section~\ref{sec:constraints_estimator}) further enable
pruning of infeasible regions without the overhead of full index construction.

In contrast, baseline methods suffer from a fundamental mismatch between their optimization
mechanisms and the structural properties of HNSW.
Structure-agnostic approaches such as Grid and Random Search incur excessive or unguided
exploration costs and provide no guarantees of convergence, making them unreliable under
tight tuning budgets when high-performing feasible configurations are sparse.
Bayesian optimization–based methods, including Optuna, VDTuner, and ECI, 
rely on probabilistic modeling or surrogate-based feasibility estimation
to localize the constraint boundary.
However, in the discrete and sharply structured HNSW configuration space, learning this
boundary stochastically requires expending a substantial portion of the tuning budget on
exploration.
Similarly, NSGA-II preserves population diversity to approximate a Pareto frontier across
objectives, which inherently dilutes selection pressure toward the single constrained
optimum required in our setting.
As a result, although these methods are powerful general-purpose optimizers, they lack the
domain-specific structural awareness needed to efficiently navigate feasibility boundaries
in constraint-critical HNSW hyperparameter tuning.

\subsection{\hoeun{Tuning under Resource Constraints}}
\label{sec:estimator_accuracy}

\begin{table}[t]
  \centering
  \footnotesize
  \begin{adjustbox}{max width=\linewidth}
  \begin{tabular}{@{} lcccc @{}}
    \toprule
    \textbf{Metric} & \multicolumn{2}{c}{\textbf{Index Size Constraint}} & \multicolumn{2}{c}{\textbf{Build Time Constraint}} \\
    \cmidrule(lr){2-3} \cmidrule(lr){4-5}
    \textbf{Engine} & \textbf{Faiss} & \textbf{Hnswlib} & \textbf{Faiss} & \textbf{Hnswlib} \\
    \midrule
    \textbf{(a) Search Process} \\
    Index Builds / Visited Configs & 29.8 / 38.3 & 25.5 / 33.1 & 34.8 / 38.3 & 29.4 / 33.1 \\
    Pruned Before Build & 22.44\% & 25.95\% & 4.20\% & 11.60\% \\ 
    Constraint-Violating Builds & 4.0 / 10.5 & 4.0 / 10.4 & 2.9 / 9.7 & 2.4 / 9.3 \\
    Feasibility Coverage & 0.9800 & 0.9817 & 0.9956 & 0.9873 \\
    
    \midrule
    \textbf{(b) Search Results} \\
    Total Search Time (s) & 4791 / 6531 & 5042 / 7557 & 4787 / 6531 & 5327 / 7557 \\
    Average Performance Gain & 1.36$\times$ & 1.50$\times$ & 1.36$\times$ & 1.42$\times$ \\
    Best Performance Gain & 6.26$\times$ & 5.21$\times$ & 2.57$\times$ & 2.32$\times$ \\
    Performance Preservation & 100.0\% & 100.0\% & 100.0\% & 100.0\% \\
    \bottomrule
  \end{tabular}
  \end{adjustbox}
  \caption{Impact of analytic resource-constraint models on the tuning process. All values are reported as \emph{with constraint model / without constraint model}. Resource-feasibility filtering substantially reduces unnecessary index constructions and total tuning time while fully preserving the final selected configuration.}
  \vspace{-2em}
  \label{tab:esimator_comparison}
\end{table}

This section evaluates the effect of analytic resource constraint models on the tuning process. In contrast to Section~6.2, which compares tuning strategies under performance constraints (recall and QPS), this section isolates the contribution of the resource-feasibility filter itself. The resource constraint model is treated as a black-box module that can be attached to any search strategy; evaluating it separately avoids biasing cross-baseline comparisons and allows its standalone impact to be assessed under identical tuning budgets. Table~\ref{tab:esimator_comparison} summarizes the results under two resource constraints (index size and build time) and two HNSW backends (Faiss and Hnswlib). All values are reported as \emph{with constraint model / without constraint model} and are averaged over the five datasets described in Section~\ref{sec:setup}.

\textbf{Search Process Efficiency.}
Table~\ref{tab:esimator_comparison}(a) summarizes the tuning process. \emph{Index Builds / Visited Configs} shows that resource-feasibility filtering substantially reduces the number of index constructions required to explore the same configuration space by pruning infeasible candidates prior to construction. This behavior is reflected in the \emph{Pruned Before Build} rate (4.2\%–26.0\%) and the consistently lower number of \emph{Constraint-Violating Builds} when constraint models are applied. The safety of early pruning is quantified by \emph{Feasibility Coverage}, computed over all configurations visited during the search. Coverage exceeds 0.98 across all settings, indicating that the constraint models remain conservative while effectively filtering infeasible regions.

\textbf{Impact on Search Outcomes.}
As shown in Table~\ref{tab:esimator_comparison}(b), \emph{Total Search Time}, which is dominated by index construction, is consistently lower when resource-feasibility filtering is enabled. This reduction is consistent with the decrease in index builds observed in Table~\ref{tab:esimator_comparison}(a), indicating that pruning infeasible configurations prior to construction is the primary source of the time savings.

Despite the reduced exploration cost, the quality of the final tuning result is fully preserved. As shown in Table~\ref{tab:esimator_comparison}(b), \emph{Performance Preservation} is 100\% across all settings, indicating that the configuration selected with resource constraints achieves identical performance to that selected without constraints. Across all five datasets, \emph{Average Performance Gain} remains stable in the range of $1.36\times$–$1.50\times$, showing that feasibility filtering does not restrict access to competitive configurations. 
\reviewerA{
Figure~\ref{fig:margin_sensitivity} isolates the safety-margin mechanism by
comparing no-margin, fixed-margin, and residual-based variants. No-margin
pruning is more exposed to false-feasible decisions, while fixed margins
require a hand-tuned conservatism level. Residual-based margins adapt to
observed underestimation residuals, reducing false-feasible risk while
maintaining high pruning rates. Together with the performance preservation and
best-case gains in Table~\ref{tab:esimator_comparison}, this shows that
resource-feasibility filtering lowers tuning cost without blocking
high-performing configurations.
}


\begin{figure}[t]
  \centering
  \includegraphics[width=0.95\linewidth]{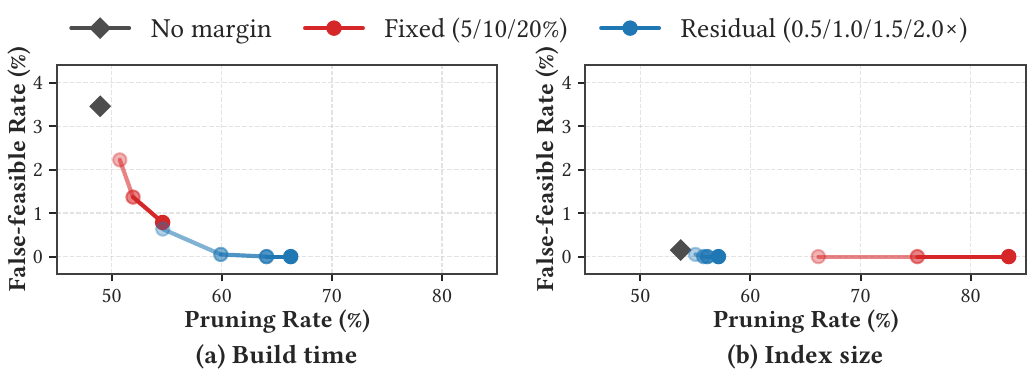}
\vspace{-1em}
\caption{\reviewerA{
Safety-margin sensitivity under resource constraints. The figure compares
no-margin, fixed-margin, and residual-based variants in terms of pruning rate
and false-feasible rate for build-time and index-size constraints. Marker opacity increases with the numeric margin value,
so darker markers indicate larger margins.
}}
  \label{fig:margin_sensitivity}
\end{figure}

\subsection{\reviewerA{Robustness of Structural Trends and Measurements}}
\label{sec:structural_robustness}

\frameworkname{} uses ternary-style localization over measured objective
values, but it does not assume convexity or universal unimodality.
Section~\ref{sec:hyperparameters_performance} already verifies that
changing query parallelism shifts absolute QPS but preserves the \efs{}
feasibility boundary and the dominant trends over \efc{} and \M{}.
Here we audit two further risks: whether measurement noise perturbs the
pairwise comparisons used during search, and whether the dominant
structural trends---formalized as sufficient conditions in
Lemmas~\ref{lem:efc_unimodal} and~\ref{lem:m_unimodal}---persist across
different data geometries.

We first evaluate robustness to measurement noise using the 10-run
average as a reference. We recompute configuration scores using 1, 3,
and 5 runs, and measure QPS \(P_{95}\) relative error, pairwise
comparison flip rate, same-or-adjacent optimum stability, and retained
objective value. Pairwise flips exclude candidate pairs whose reference
objective gap is below 0.5\%, so near-ties are not counted as unstable
decisions. Even with a single run, the QPS \(P_{95}\) error is 1.86\%,
the pairwise flip rate is only 1.1\%, the selected optimum remains the
same or adjacent grid point in 98.8\% of cases, and the selected
configuration retains 99.21\% of the 10-run reference objective. With
five runs, these improve to 0.62\%, 0.4\%, 99.7\%, and 99.79\%,
respectively. Thus, runtime noise does not materially perturb the
pairwise comparisons used by \frameworkname{}.

\reviewerC{
We further stress-test the structural trends on six synthetic workloads
that vary cluster separation/overlap, cluster-size imbalance, anisotropy,
dimensionality, and cluster structure. For the well-separated, overlapping,
imbalanced, and anisotropic workloads, the trends exploited by
\frameworkname{} continue to hold empirically under both Recall and QPS
constraints: the \efs{} feasibility predicate remains monotone,
fixed-\M{} slices over \efc{} remain dominantly unimodal, and the outer
envelope \(g(M)\) remains dominantly unimodal. In these cases, HNSW
construction exposes useful routing alternatives early, after which
additional construction effort becomes increasingly redundant.

The high-dimensional and uniform workloads expose the main failure mode.
Distance concentration or the absence of cluster structure can delay
useful routing edges until larger construction budgets, producing a
delayed graph-quality gain and a second objective peak. This effect is
most visible under a Recall constraint, where the delayed gain can
reduce the minimum feasible \efs{} and recover QPS; under a QPS
constraint, the same branch must reduce \efs{} to preserve throughput,
so the gain is dampened.
This behavior is consistent with the sufficient conditions in
Lemmas~\ref{lem:efc_unimodal} and~\ref{lem:m_unimodal};
it illustrates cases where the diminishing-utility
condition required by these lemmas fails. \frameworkname{} is therefore
not a global optimizer for arbitrary multi-modal objectives, but a
sample-efficient search procedure for dominant discrete trends induced
by standard HNSW mechanisms. In the high-dimensional counterexample,
\frameworkname{} selects \((M,efc,efs)=(16,400,160)\), while the full-grid
Oracle selects \((32,660,320)\); both satisfy the Recall constraint.
\frameworkname{} achieves QPS 4{,}970 versus the Oracle's 5{,}280,
retaining 94.1\% of the Oracle objective. Thus, when the
sufficient conditions fail, ternary-style localization may converge to
a local peak, but best-so-far tracking over validated configurations
substantially limits the loss.
}

\begin{figure}[t!]
  \centering
  \begin{subfigure}[b]{0.48\linewidth}
    \includegraphics[width=\linewidth]{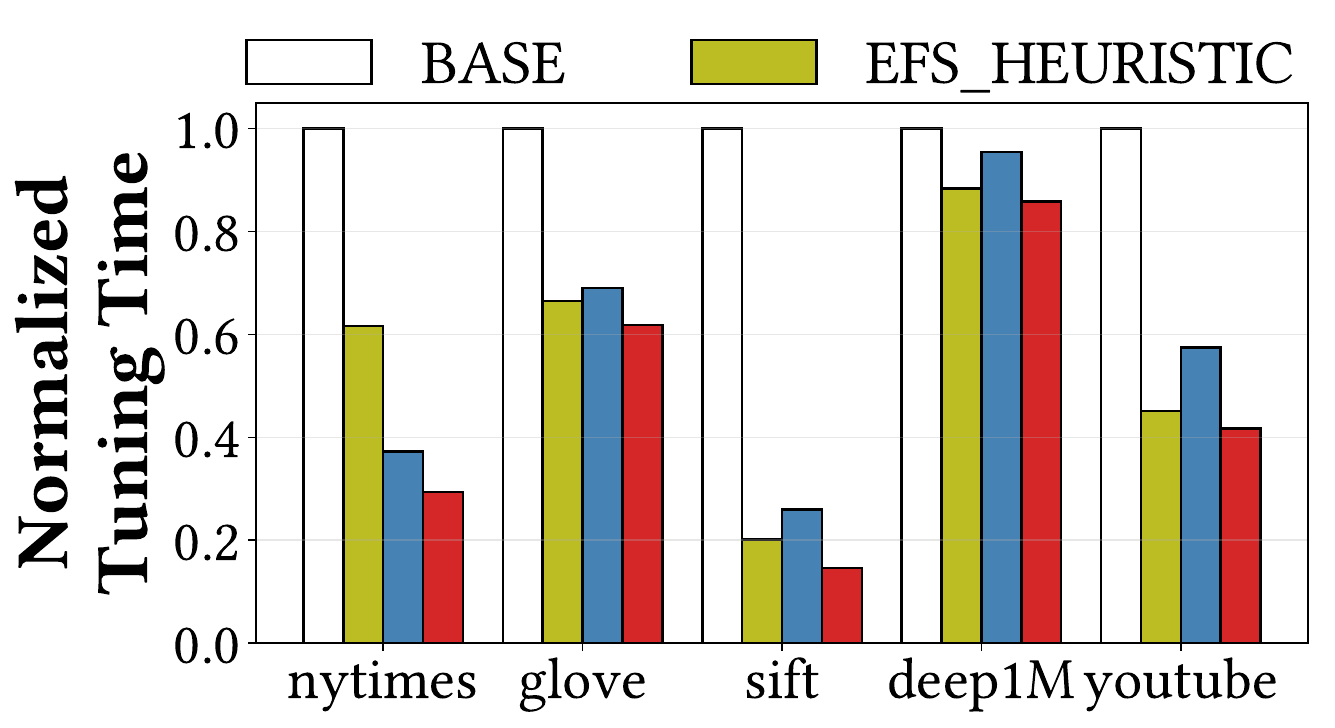}
    \vspace{-1.5em}
    \caption{}
    \label{fig:ablation_sub_a}
  \end{subfigure}\hfill
  \begin{subfigure}[b]{0.48\linewidth}
    \includegraphics[width=\linewidth]{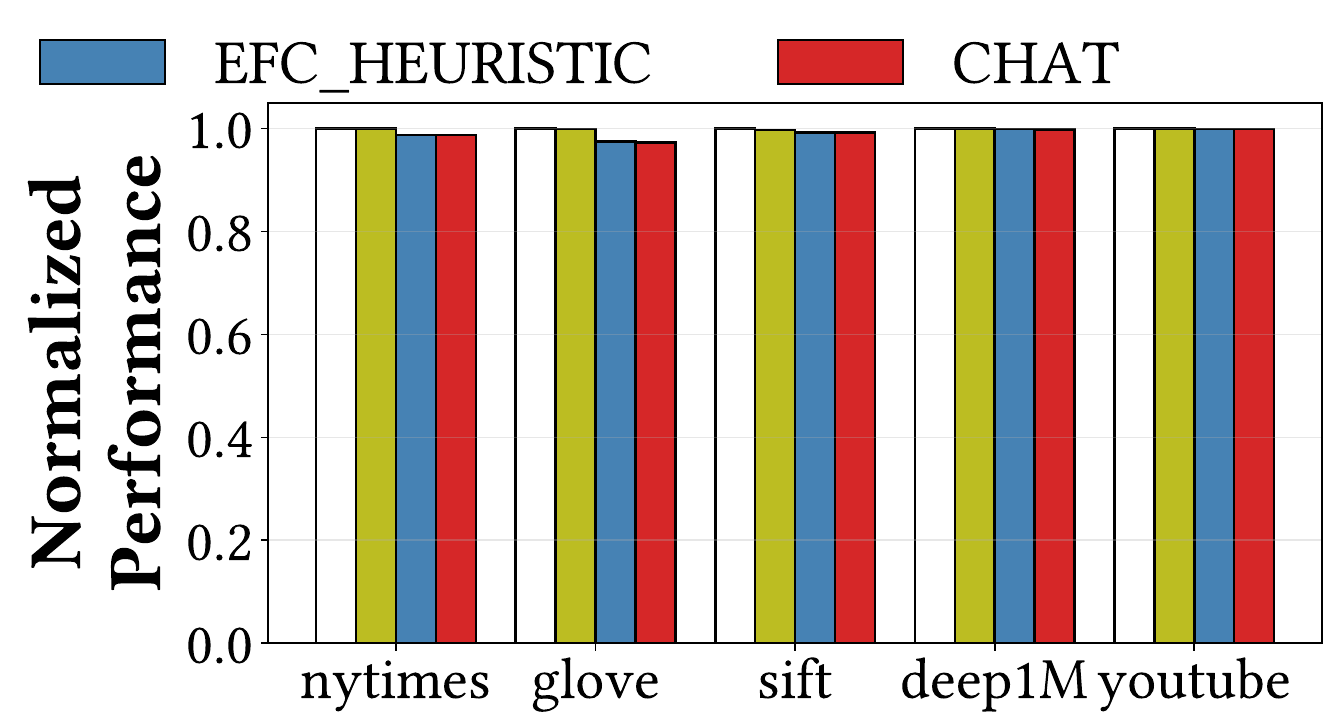}
    \vspace{-1.5em}
    \caption{}
    \label{fig:ablation_sub_b}
  \end{subfigure}
  \vspace{-1em}
  \caption{Impact of heuristic search-space reduction. (a) Tuning time. (b) Average performance (Recall and QPS). All values are normalized to BASE.}
  \label{fig:ablation_figure}
\end{figure}

\vspace{-0.2em}
\subsection{Impact of Heuristic Search Space Reduction}

To assess the effectiveness of the two heuristics used in \frameworkname{} for narrowing the search ranges of \efs{} and \efc{}, we performed an ablation study comparing four configurations: (1) \frameworkname{}, which incorporates both heuristics described in Sections~\ref{sec:M_efc_search} and \ref{sec:efs_search}; (2) EFS\_HEURISTIC, which applies only the \efs{} range narrowing heuristic; (3) EFC\_HEURISTIC, which applies only the \efc{} heuristic; and (4) BASE, which disables both heuristics and performs exhaustive search over the entire range using binary and ternary search procedures. The performance metric was defined as the aggregated objective value under a Recall threshold of 0.95 and a QPS threshold set at the 75th percentile.

\gun{
Figure~\ref{fig:ablation_figure}(a) shows that each heuristic individually reduces tuning time
relative to BASE, with the \efs{} heuristic achieving speedups of 1.13$\times$–4.96$\times$
and the \efc{} heuristic yielding 1.05$\times$–3.85$\times$ improvements.
When combined, \frameworkname{} attains speedups of 1.17$\times$–6.85$\times$,
indicating a complementary effect between the two heuristics.
Figure~\ref{fig:ablation_figure}(b) further shows that these gains do not compromise solution quality:
\frameworkname{} achieves normalized performance of 98.03\%–99.9\%, while EFS\_HEURISTIC
and EFC\_HEURISTIC attain 99.4\%–99.9\% and 98.7\%–99.9\%, respectively,
demonstrating that heuristic search space reduction substantially accelerates tuning
with only negligible loss in objective performance.}

\begin{figure}[t]
  \centering
  \includegraphics[width=0.95\linewidth]{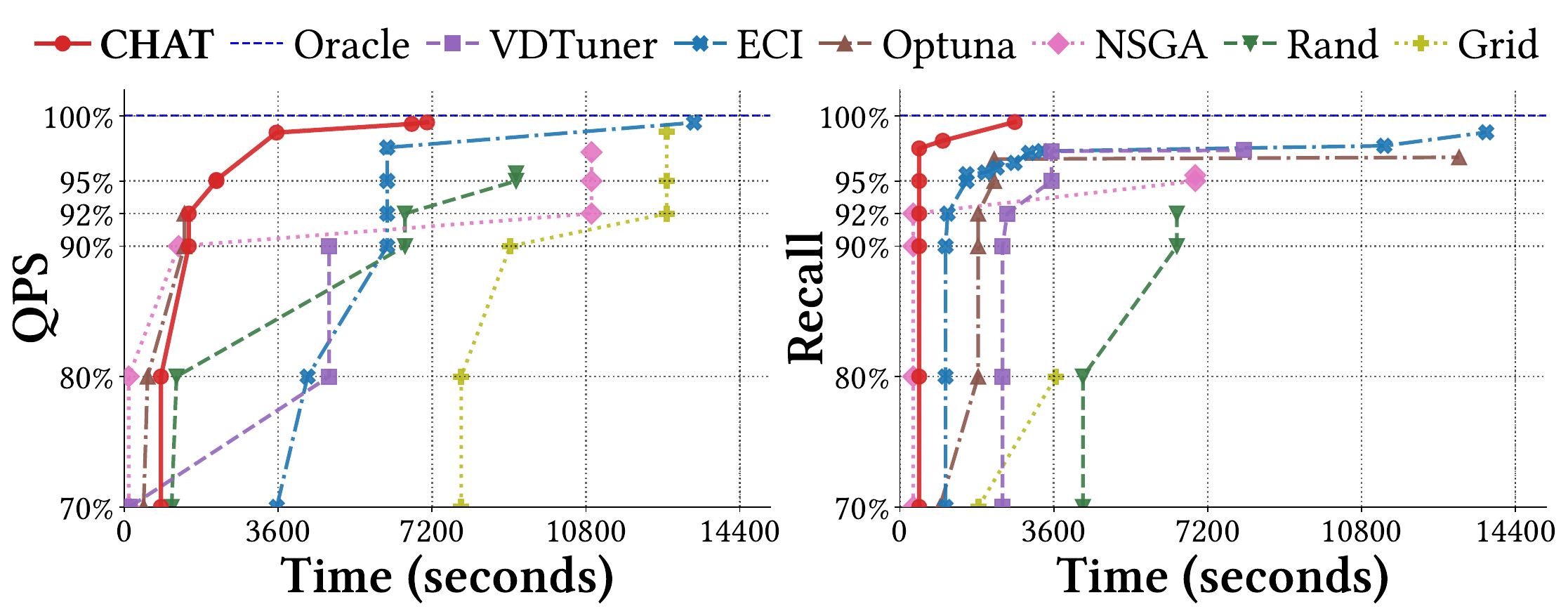}
  \vspace{-1em}
  \caption{Performance of \frameworkname{} and baselines on the \texttt{nytimes} dataset using Milvus.}
  \label{fig:milvus_figure}
\end{figure}

\subsection{Evaluation on a Commercial VectorDB}
\label{sec:milvus_section}

To validate the generality and practical effectiveness of our approach, we conducted additional experiments on Milvus, a widely adopted commercial vector database, beyond the open-source HNSW implementations evaluated earlier. We followed the same experimental setup described in Section~\ref{sec:setup}. Specifically, we applied a minimum Recall constraint of 0.95, or a minimum QPS constraint defined as the 75th percentile of all configurations for each dataset, with a fixed tuning time budget of four hours.

As shown in Figure~\ref{fig:milvus_figure}, which presents results on the \texttt{nytimes} dataset, 
our method consistently outperformed all baseline methods under both types of constraints%
\footnote{We observed similar trends on the other datasets used in this experiment.}. 
Under the Recall constraint, it achieved up to 99.2\% and at least 96.8\% of the Oracle's optimal QPS. 
Under the QPS constraint, it successfully matched the Oracle's optimal Recall. 
These results demonstrate that our method remains effective and stable even in production-grade
vector database environments, maintaining high performance and constraint satisfaction under
limited tuning budgets.

%% file: sections/related_work.tex
\section{Related Work}
\label{sec:related_work}

\textbf{VectorDB and ANN Indexing.}
Vector databases and approximate nearest neighbor (ANN) search methods are
central to high-dimensional data retrieval~\cite{10.14778/3476249.3476255,8681160}.
Early methods such as locality-sensitive hashing (LSH)~\cite{indyk1998approximate,10.14778/2850469.2850470}
and product quantization (PQ)~\cite{5432202,pqarticle} introduced efficient
hashing and compression techniques. Graph-based ANN indices, including
Hierarchical Navigable Small World (HNSW)~\cite{malkov2018efficientrobustapproximatenearest}
and Navigating Spreading-out Graph (NSG)~\cite{fu2025fastapproximatenearestneighbor},
improve recall--speed trade-offs by organizing points into navigable proximity
graphs~\cite{9383170}. DiskANN~\cite{10.5555/3454287.3455520} scales
graph-based search to billion-scale datasets through memory--disk optimized
storage and retrieval. Libraries such as Faiss~\cite{8733051} and
ScaNN~\cite{guo2020acceleratinglargescaleinferenceanisotropic} combine
quantization and GPU acceleration for large-scale search, with adoption in
systems such as Milvus~\cite{10.1145/3448016.3457550}.

\textbf{Database Auto-Tuning.}
Automated database tuning has progressed from heuristic methods to
machine-learning-based approaches. OtterTune~\cite{vanaken2017automatic}
introduced supervised learning for knob tuning, while QTune~\cite{li2019qtune}
extended this direction with reinforcement learning (RL) for query-level
precision. Recent works emphasize constraint-awareness and multi-objective
optimization.

%% file: sections/conclusion.tex
\section{Conclusion}
\label{sec:conclusion}


\cgcg{
We presented \frameworkname{}, an API-level black-box but 
HNSW-structure-aware tuning framework that exploits monotone 
search-time feasibility boundaries, dominant unimodal trends over 
construction parameters, and separable resource dependencies to guide 
a hierarchical search with resource surrogates. Across datasets and 
HNSW backends, \frameworkname{} efficiently identifies and validates 
configurations that satisfy user-d efined performance and resource 
constraints, offering a practical path toward predictable tuning for 
production HNSW deployments.
}


%% file: appendix.tex
\setlength{\parskip}{0pt}

\newcommand*\filledcircled[1]{\tikz[baseline=(char.base)]{
  \node[shape=circle,fill=black,text=white,inner sep=1pt] (char) {\sffamily\bfseries\small #1};}}
\AtBeginDocument{%
  \providecommand\BibTeX{{%
    Bib\TeX}}}

\definecolor{comportableColor}{HTML}{2370CD}

\definecolor{goodGreen}{HTML}{006E51}
\newcommand{\ra}[1]{\textcolor{black}{#1}}
\newcommand{\rb}[1]{\textcolor{black}{#1}}
\newcommand{\rc}[1]{\textcolor{black}{#1}}




\acmConference[SIGMOD '27]{Proceedings of the 2027 International Conference on Management of Data}{June 13--19, 2027}{Huntington Beach, CA, USA}
\acmBooktitle{Proceedings of the 2027 International Conference on Management of Data (SIGMOD '27), June 13--19, 2027, Huntington Beach, CA, USA}

\acmISBN{978-1-4503-XXXX-X/2026/05}

\settopmatter{printacmref=false} 
\renewcommand\footnotetextcopyrightpermission[1]{} 

\section{\ra{Index Construction Algorithms of HNSW}}
\label{appendix:hnsw_index_build_algorithms}

\algrenewcommand\algorithmiccomment[1]{// #1}

\subsection{Data Insertion}
\label{appendix:hnsw_data_insertion_algo}

\begin{algorithm}[h!]
\caption{INSERT($hnsw, q, M, M_{\max},$ \textit{efc}$, m_L$)}
\label{alg:hnsw_insert_algo}
\footnotesize
\begin{tabbing}
00\quad\=00\quad\=00\quad\=00\quad\= \kill
\textbf{Input:} multilayer graph $hnsw$, new element $q$, \\ 
\>number of established connections $M$, \\
\>maximum number of connections for each element per layer $M_{\max}$, \\ 
\>size of the dynamic candidate list \textit{efc}, \\
\>normalization factor for level generation $m_L$ \\
\textbf{Output:} update $hnsw$ inserting element $q$ \\
\\
1 \>$W \leftarrow \emptyset$ \\
2 \>$ep \leftarrow$ get enter point for $hnsw$ \\
3 \>$L \leftarrow$ level of $ep$ \\
4 \>$l \leftarrow \lfloor -\ln(\mathrm{unif}(0..1)) \cdot m_L \rfloor$ \\
5 \>\textbf{for} $l_c \leftarrow L \ \ldots\ l+1$ \\
6 \>\>$W \leftarrow \mathrm{SEARCH\mbox{-}LAYER}(q, ep,$ \textit{efc}${=}1, l_c)$ \\
7 \>\>$ep \leftarrow$ get the nearest element from $W$ to $q$ \\
8 \>\textbf{for} $l_c \leftarrow \min(L,l) \ \ldots\ 0$ \\
9 \>\>$W \leftarrow \mathrm{SEARCH\mbox{-}LAYER}(q, ep,$ \textit{efc}$, l_c)$ \\
10 \>\>$neighbors \leftarrow \mathrm{SELECT\mbox{-}NEIGHBORS}(q, W, M, l_c)$ \\
11 \>\>add bidirectional connections from $neighbors$ to $q$ at layer $l_c$ \\
12 \>\>\textbf{for each} $e \in neighbors$ \\
13 \>\>\>$eConn \leftarrow neighbourhood(e)$ at layer $l_c$ \\
14 \>\>\>\textbf{if} $|eConn| > M_{\max}$ \\
15 \>\>\>\>$eNewConn \leftarrow \mathrm{SELECT\mbox{-}NEIGHBORS}(e, eConn, M_{\max}, l_c)$ \\
16 \>\>\>\>set $neighbourhood(e)$ at layer $l_c$ to $eNewConn$ \\
17 \>\>$ep \leftarrow W$ \\
18 \>\textbf{if} $l > L$ \\
19 \>\>set enter point for $hnsw$ to $q$
\end{tabbing}
\end{algorithm}

Algorithm \ref{alg:hnsw_insert_algo} describes the incremental insertion procedure used to construct the HNSW index. Each data point is inserted sequentially into a multi-layer proximity graph, where the number of layers and the connectivity at each layer are determined probabilistically and by fixed degree constraints.

For each inserted element $q$, the algorithm first samples a maximum layer $\ell$ from an exponentially decaying distribution controlled by the normalization parameter $m_L$. This mechanism ensures that the expected number of layers in the structure is $O(\log N)$, yielding a hierarchy in which higher layers contain progressively fewer nodes and serve as long-range routing shortcuts.

The insertion proceeds in two phases. In the first phase, the algorithm performs a greedy descent from the current entry point, starting at the topmost layer and continuing down to layer $\ell + 1$. At each of these layers, SEARCH-LAYER is executed with $ef = 1$, producing a single closest candidate that becomes the entry point for the next lower layer. This phase localizes the insertion position coarsely and incurs minimal search cost.

In the second phase, for layers $\min(L, \ell)$ down to the base layer, SEARCH-LAYER is executed with a larger candidate list size \textit{efc}. This broader search identifies a set of candidate neighbors that are then filtered by SELECT-NEIGHBORS to retain at most $M$ connections. The resulting neighbors are connected bidirectionally to $q$, and degree constraints are enforced by reapplying SELECT-NEIGHBORS to existing nodes whose neighborhoods exceed the layer-specific maximum degree.
This insertion procedure ensures that each node maintains bounded degree while preserving the navigability of the graph. Importantly, the construction cost is dominated by the repeated execution of SEARCH-LAYER across layers, where the breadth of exploration is controlled by \textit{efc} and the per-node scan cost is bounded by the realized degree. These properties directly motivate the resource-usage analysis in Section 4.

\subsection{Search Layers}
\label{appendix:hnsw_search_layers_algo}

\begin{algorithm}[h!]
\caption{SEARCH-LAYER($q, ep,$ \textit{efs}$, l_c$)}
\label{alg:hnsw_search_layer_algo}
\footnotesize
\begin{tabbing}
00\quad\=00\quad\=00\quad\=00\quad\= \kill
\textbf{Input:} query element $q$, enter points $ep$, \\
\>number of nearest to $q$ elements to return \textit{efs}, layer number $l_c$ \\
\textbf{Output:} \textit{efs} closest neighbors to $q$ \\
\\
1 \>$v \leftarrow epc$ \\
2 \>$C \leftarrow epc$ \\
3 \>$W \leftarrow epc$ \\
4 \>\textbf{while} $|C| > 0$ \\
5 \>\>$c \leftarrow$ extract nearest element from $C$ to $q$ \\
6 \>\>$f \leftarrow$ get furthest element from $W$ to $q$ \\
7 \>\>\textbf{if} $distance(c,q) > distance(f,q)$ \\
8 \>\>\>\textbf{break} \\
9 \>\>\textbf{for each} $e \in neighbourhood(c)$ at layer $l_c$ \\
10 \>\>\>\textbf{if} $e \notin v$ \\
11 \>\>\>\>$v \leftarrow v \cup e$ \\
12 \>\>\>\>$f \leftarrow$ get furthest element from $W$ to $q$ \\
13 \>\>\>\textbf{if} $distance(e,q) < distance(f,q)$ \textbf{or} $|W| <$ \textit{efs} \\
14 \>\>\>\>$C \leftarrow C \cup e$ \\
15 \>\>\>\>$W \leftarrow W \cup e$ \\
16 \>\>\>\textbf{if} $|W| >$ \textit{efs} \\
17 \>\>\>\>remove furthest element from $W$ to $q$ \\
18 \>\textbf{return} $W$
\end{tabbing}
\end{algorithm}

Algorithm \ref{alg:hnsw_search_layer_algo} defines the SEARCH-LAYER procedure, which serves as the fundamental primitive for both index construction and query processing in HNSW. Given a query element $q$, an entry point $ep$, a layer index $\ell$, and a candidate list size \textit{efs}, the procedure performs a best-first graph traversal restricted to layer $\ell$.

The algorithm maintains two priority queues: a candidate queue $C$, which stores nodes to be expanded, and a working set $W$, which stores the current best \textit{efs} candidates found so far. At each iteration, the closest element $c$ in $C$ is extracted and compared against the worst element in $W$. If no improvement is possible, the search terminates early.

Otherwise, the algorithm scans the neighbors of $c$ at layer $\ell$. For each previously unvisited neighbor, a distance evaluation is performed, and the neighbor is inserted into $C$ and $W$ if it improves the current candidate set. When $W$ exceeds size \textit{efs}, the farthest element is removed, maintaining a bounded working set.

This greedy best-first traversal ensures that SEARCH-LAYER explores only a limited subset of the graph, with the total number of expansions controlled by \textit{efs}. As a result, increasing \textit{efs} monotonically enlarges the exploration budget, while the cost of each expansion is bounded by the degree constraints imposed during construction. This monotonic dependence underlies the performance and resource trade-offs analyzed in Sections 3 and 4.

\subsection{Search Neighborhoods to Connect}
\label{appendix:hnsw_search_neighbor_algo}

\begin{algorithm}[h!]
\caption{SELECT-NEIGHBORS($q, C, M, l_c, extendCandidates,$\\$keepPrunedConnections$)}
\label{alg:hnsw_select_neighbors_algo}
\footnotesize
\begin{tabbing}
00\quad\=00\quad\=00\quad\=00\quad\=00\quad\= \kill
\textbf{Input:} base element $q$, candidate elements $C$, \\
\>number of neighbors to return $M$, layer number $l_c$, \\
\>flag indicating whether or not to extend candidate list $extendCandidates$, \\
\>flag indicating whether or not to add discarded elements $keepPrunedConnections$ \\
\textbf{Output:} $M$ elements selected by the heuristic \\
\\
1 \>$R \leftarrow \emptyset$ \\
2 \>$W \leftarrow C$ \\
3 \>\textbf{if} $extendCandidates$ \\
4 \>\>\textbf{for each} $e \in C$ \\
5 \>\>\>\textbf{for each} $e_{adj} \in neighbourhood(e)$ at layer $l_c$ \\
6 \>\>\>\>\textbf{if} $e_{adj} \notin W$ \\
7 \>\>\>\>\>$W \leftarrow W \cup e_{adj}$ \\
8 \>$W_d \leftarrow \emptyset$ \\
9 \>\textbf{while} $|W| > 0$ \textbf{and} $|R| < M$ \\
10 \>\>$e \leftarrow$ extract nearest element from $W$ to $q$ \\
11 \>\>\textbf{if} $e$ is closer to $q$ compared to any element from $R$ \\
12 \>\>\>$R \leftarrow R \cup e$ \\
13 \>\>\textbf{else} \\
14 \>\>\>$W_d \leftarrow W_d \cup e$ \\
15 \>\textbf{if} $keepPrunedConnections$ \\
16 \>\>\textbf{while} $|W_d| > 0$ \textbf{and} $|R| < M$ \\
17 \>\>\>$R \leftarrow R \cup$ extract nearest element from $W_d$ to $q$ \\
18 \>\textbf{return} $R$
\end{tabbing}
\end{algorithm}

Algorithm \ref{alg:hnsw_select_neighbors_algo} specifies the SELECT-NEIGHBORS using heuristic to search, which enforces bounded degree while maintaining graph navigability. Given a candidate set $C$, the procedure iteratively selects neighbors based not only on proximity to the target node $q$, but also on diversity among selected neighbors.

At each step, the closest remaining candidate is selected if it is sufficiently closer to $q$ than any previously selected neighbor. Candidates that fail this condition are temporarily discarded but may be reconsidered if the optional "keep pruned connections" flag is enabled. The process continues until $M$ neighbors are selected or the candidate pool is exhausted.

This heuristic induces two important structural effects. First, it limits the realized out-degree even when the degree cap increases, leading to saturation in average out-degree as $M$ grows. Second, it favors diverse connections that improve long-range navigability, which directly impacts search efficiency. These effects explain why the realized connectivity grows sublinearly with $M$ and why build time exhibits diminishing returns, as discussed in Section 4.

\section{\rc{Proof Sketches for Unimodality Lemmas}}
\label{app:proofs}

The lemmas in Sections~3.3 and~3.4 give sufficient conditions for
discrete unimodality on the ordered grids used by \frameworkname{}.
They are not universal guarantees. The purpose of the proof sketches
below is to show that, under the stated marginal benefit--cost
conditions, the corresponding objective sequence has no
negative-to-positive marginal reversal. Section~6.4 gives empirical
stress cases where these sufficient conditions fail.

\subsection{Proof Sketch for Lemma~3.1}
\label{app:proof-efc}

Fix \(M\) and an active constraint
\(C \in \{\mathrm{Recall}, \mathrm{QPS}\}\). Consider the ordered
construction-effort grid
\[
  \mathit{efc}_1 < \mathit{efc}_2 < \cdots < \mathit{efc}_n .
\]
For each \(\mathit{efc}_i\), let \(f_M(\mathit{efc}_i)\) denote the
constrained objective after choosing the boundary \(\mathit{efs}\)
that satisfies \(C\). Define the discrete marginal change
\[
  \Delta f_M(\mathit{efc}_i)
  =
  f_M(\mathit{efc}_{i+1}) - f_M(\mathit{efc}_i)
  =
  \Delta B_M(\mathit{efc}_i) - \Delta T_M(\mathit{efc}_i),
\]
where \(\Delta B_M\) is the marginal graph-quality benefit and
\(\Delta T_M\) is the residual traversal/search cost.

By construction, reductions in expanded nodes caused by better graph
navigability are counted as graph-quality benefit, not as negative
cost. Hence the residual cost term satisfies
\[
  \Delta T_M(\mathit{efc}_i) \ge 0 .
\]
If \(\Delta B_M(\mathit{efc}_i) > \Delta T_M(\mathit{efc}_i)\), then
\(\Delta f_M(\mathit{efc}_i) > 0\), so increasing \(\mathit{efc}\)
improves the constrained objective at this grid step. If
\(\Delta B_M(\mathit{efc}_i) \le \Delta T_M(\mathit{efc}_i)\), then
\(\Delta f_M(\mathit{efc}_i) \le 0\), so the objective no longer
improves at this step.

Condition~(ii) of Lemma~3.1 rules out a later recovery after this
crossing: once the marginal benefit no longer exceeds the residual
cost, larger \(\mathit{efc}\) values do not reveal a delayed
high-utility connectivity regime that restores
\(\Delta B_M > \Delta T_M\). Therefore, after the first non-positive
marginal step, all later marginal steps remain non-positive. The
sequence \(f_M(\mathit{efc}_1),\ldots,f_M(\mathit{efc}_n)\) therefore
has no negative-to-positive marginal reversal. It is non-decreasing
up to some peak grid point and non-increasing thereafter, allowing
ties. Thus, \(f_M(\mathit{efc})\) is unimodal on the ordered grid.

If condition~(ii) fails, a delayed graph-quality gain can make the
marginal benefit exceed the residual cost again after an earlier
decline, producing a second peak. This is precisely the failure mode
examined in Section~6.4.

\subsection{Proof Sketch for Lemma~3.2}
\label{app:proof-m}

The proof for the outer objective follows the same marginal-crossing
argument, but it applies to the best constrained objective obtainable
after re-optimizing \(\mathit{efc}\) for each \(M\). Consider the
ordered degree grid
\[
  M_1 < M_2 < \cdots < M_n .
\]
For each \(M_i\), define the outer envelope
\[
  g(M_i) = \max_{\mathit{efc}} F_C(M_i,\mathit{efc}),
\]
where \(F_C(M_i,\mathit{efc})\) is the constrained objective after
choosing the boundary \(\mathit{efs}\) satisfying the active
constraint \(C\). Let
\[
  \Delta g(M_i)
  =
  g(M_{i+1}) - g(M_i)
  =
  \Delta B_{\mathrm{out}}(M_i)
  -
  \Delta T_{\mathrm{out}}(M_i),
\]
where \(\Delta B_{\mathrm{out}}\) is the marginal graph-quality
benefit from increasing \(M\) with \(\mathit{efc}\) re-optimized, and
\(\Delta T_{\mathrm{out}}\) is the residual traversal/search cost.

As in Lemma~3.1, reductions in expanded nodes caused by better graph
navigability are counted as graph-quality benefit. Thus,
\[
  \Delta T_{\mathrm{out}}(M_i) \ge 0 .
\]
If
\(\Delta B_{\mathrm{out}}(M_i) \le \Delta T_{\mathrm{out}}(M_i)\),
then \(\Delta g(M_i) \le 0\). Condition~(ii) of Lemma~3.2 prevents a
within-branch recovery: after additional degree no longer provides
enough marginal graph-quality benefit to offset residual traversal
cost, larger \(M\) values do not by themselves restore a positive
marginal gain.

The outer-envelope case also requires an additional condition because
\(g(M)\) re-optimizes \(\mathit{efc}\) at every \(M\). Even if each
fixed-\(M\) slice is unimodal, re-optimizing \(\mathit{efc}\) could
create a new high-performing branch at larger \(M\). Condition~(iii)
rules out this cross-branch recovery by requiring \(M\) and
\(\mathit{efc}\) to remain partial substitutes for graph quality:
larger \(M\) should not make a substantially different
\(\mathit{efc}\) branch suddenly dominate after the outer objective
has already entered the overhead-dominated regime.

Together, conditions~(ii) and~(iii) imply that once
\(\Delta g(M_i)\) becomes non-positive, all later marginal changes
remain non-positive. Hence the sequence
\(g(M_1),\ldots,g(M_n)\) has no negative-to-positive marginal
reversal. It is non-decreasing up to some peak grid point and
non-increasing thereafter, allowing ties. Therefore, \(g(M)\) is
unimodal over the ordered \(M\) grid.

If condition~(iii) fails, re-optimizing \(\mathit{efc}\) at larger \(M\)
may create a new high-performing branch and make the outer envelope
bimodal or multi-modal. Section~6.4 reports such stress cases and
shows that \frameworkname{} still returns the best-so-far validated
feasible configuration, although it is not a global optimizer for
arbitrary multi-modal objectives.

\section{Dataset}
\label{appendix:dataset}

\subsection{Dataset Statistics}
\label{appendix:dataset_statistics}

Table~\ref{tab:dataset_statistics} summarizes the datasets used in our evaluation.
Four datasets (\texttt{nytimes}, \texttt{glove}, \texttt{sift}, and \texttt{deep1M}) are drawn from
ANN Benchmarks.
The \texttt{youtube} dataset is constructed to emulate a realistic
video embedding retrieval workload.
The datasets span a wide range of dimensionalities (100--1024), corpus sizes (290K--1.2M),
and distance metrics, covering diverse operating regimes for HNSW hyperparameter tuning.

\begin{table}[h!]
\centering
\footnotesize
\begin{adjustbox}{max width=\linewidth}
\begin{tabular}{@{} lcccc @{}}
\toprule
\textbf{Dataset} & \textbf{Dimension} & \textbf{\# Data Points} & \textbf{\# Queries} & \textbf{Distance} \\
\midrule
\texttt{nytimes} & 256 & 290,000 & 10,000 & Angular \\
\texttt{glove}   & 100 & 1,183,514 & 10,000 & Angular \\
\texttt{sift}    & 128 & 1,000,000 & 10,000 & Euclidean \\
\texttt{deep1M}  & 256 & 1,000,000 & 1,000 & Angular \\
\texttt{youtube} & 1024 & 990,072 & 10,000 & Angular \\
\bottomrule
\end{tabular}
\end{adjustbox}
\caption{Dataset characteristics.}
\label{tab:dataset_statistics}
\end{table}

\subsection{\ra{Youtube dataset}}
\label{appendix:Youtube dataset}

\cgcg{
The \texttt{youtube} dataset is built from publicly available YouTube 
metadata of high-view-count videos in a late-2024 snapshot. Each vector 
corresponds to an extracted I-frame (keyframe), not to the whole video, 
so multiple vectors may share the same video\_id. We encode 
each I-frame into a 1024-dimensional vector using OpenCLIP 
H-14 (frames only; no audio, subtitles, or 
text metadata) and evaluate under angular distance. The final corpus 
contains 990{,}072 vectors with 10{,}000 held-out queries. We removed 
only items whose metadata contained unusable special characters or 
emojis; no other cleaning or vector post-processing was applied.
}

\section{Baselines}
\label{appendix:baselines}

\subsection{Details of Baselines}
\label{appendix:baseline_details}

We provide a brief overview of the baseline tuners used in our experiments.
All baselines are treated as black-box methods: they propose HNSW configurations $(\M{}, \efc{}, \efs{})$ and observe the resulting Recall, QPS, build time, and index size, without accessing internal graph statistics.

\paragraph{Grid Search and Random Search}
Grid Search evaluates configurations over a predefined discretized grid, while Random Search samples configurations uniformly from the same search space. 
These two methods serve as standard model-free baselines. 
Grid Search is deterministic but sensitive to grid granularity, whereas Random Search is simple and broadly exploratory but does not use past observations to guide future trials.

\paragraph{Optuna}
Optuna is a general-purpose model-based hyperparameter optimization 
framework that uses the Tree-structured Parzen Estimator (TPE) sampler 
to propose configurations from past observations. In our experiments, 
infeasible trials (those violating Recall or QPS targets) are penalized 
via a large-penalty scalar objective proportional to the degree of 
violation, so feasibility is enforced softly through the objective 
rather than explicit filtering. Optuna therefore represents a strong 
generic Bayesian optimization baseline, but it does not exploit 
HNSW-specific structural properties such as monotonicity or unimodality.

\paragraph{NSGA-II}
NSGA-II is a population-based evolutionary algorithm for multi-objective 
optimization. It maintains a diverse population of candidate solutions 
and evolves them to approximate a Pareto frontier over competing 
objectives. In our experiments, constraint violation is treated as one 
objective to minimize alongside maximizing the remaining objective 
(Recall or QPS), so feasibility is encouraged through Pareto dominance 
rather than explicit filtering. NSGA-II thus serves as a representative 
evolutionary baseline for HNSW tuning under the same Recall/QPS 
performance targets.

\paragraph{ECI}
ECI is a constrained Bayesian optimization baseline based on Expected Constrained Improvement. 
It models both objective performance and constraint feasibility, and selects configurations that are expected to improve the objective while satisfying the constraints. 
Compared with generic Bayesian optimization, ECI is more explicitly constraint-aware, making it a natural baseline for evaluating constraint-driven HNSW tuning.

\paragraph{VDTuner}
VDTuner is a Bayesian optimization-based tuning framework designed for vector database systems. 
It searches for high-quality configurations by modeling trade-offs between search quality and efficiency. 
We restrict VDTuner to the HNSW parameter space for a fair comparison, and evaluate it under the same Recall/QPS feasibility constraints and tuning-time budget as other methods.

\paragraph{Oracle Solution}
The Oracle Solution exhaustively evaluates the full predefined HNSW search space without a tuning-time budget and selects the best feasible configuration. 
It is not a practical tuner, but serves as an upper bound for measuring how close each method comes to the best achievable performance in our search space.

\subsection{\ra{Hyperparameters of Baselines}}
\label{appendix:baseline_hyperparameters}

Table~\ref{tab:baseline_hyperparameters} summarizes the key hyperparameter configurations
and experimental settings used for all baseline tuning methods evaluated in this study.

All stochastic methods were evaluated using the same random seed to ensure
fair and reproducible comparisons across baselines.

Unless otherwise stated, all baselines share the same search space over
$(M, efc, efs)$, enforce the hard constraint $M \le efc$, and are evaluated
under an identical wall-clock tuning budget. The table highlights the
method-specific differences in search strategies, discretization schemes,
and constraint-handling mechanisms.

\begin{table}[h!]
\centering
\footnotesize
\label{tab:baseline_hyperparameters}
\begin{adjustbox}{max width=\linewidth}
\begin{tabular}{lll}
\toprule
\textbf{Stage} & \textbf{Parameter} & \textbf{Value} \\
\midrule

Random Search & Search strategy & Uniform random sampling \\
 & Search space & $(M, efc, efs)$ \\
 & Parameter constraint & $M \le efc$ (enforced at sampling) \\
 & Discretization & None \\
 & Tuning budget & 14,400 seconds (4 hours) \\
\midrule

Grid Search & Search strategy & Exhaustive grid enumeration \\
 & $M$ step size & 4 \\
 & $efc$ step size & 64 \\
 & $efs$ step size & 128 \\
 & Parameter constraint & $M \le efc$ \\
 & Tuning budget & 14,400 seconds (or grid exhaustion) \\
\midrule

Optuna (TPE) & Search strategy & Bayesian optimization (TPE sampler) \\
 & Discretization & $\Delta M{=}1$, $\Delta efc{=}8$, $\Delta efs{=}16$ \\
 & Constraint handling & Large infeasible penalty \\
 & Objective & $-\text{QPS}$ or $-\text{Recall}$ \\
 & Tuning budget & 14,400 seconds \\
\midrule

NSGA-II & Search strategy & Evolutionary multi-objective search \\
 & Population size & 32 \\
 & Discretization & $\Delta M{=}1$, $\Delta efc{=}8$, $\Delta efs{=}16$ \\
 & Constraint handling & Violation-first objective \\
 & Tuning budget & 14,400 seconds \\
\midrule

ECI & Search strategy & Constrained Bayesian optimization \\
 & Warm-up samples & 16 \\
 & Discretization & $\Delta M{=}1$, $\Delta efc{=}8$, $\Delta efs{=}16$ \\
 & Feasibility model & Gaussian Process classifier \\
 & Acquisition & Constrained Expected Improvement \\
 & Tuning budget & 14,400 seconds \\
\midrule

VDTuner & Search strategy & Multi-objective Bayesian optimization \\
 & Acquisition & qEHVI \\
 & Initialization & Default configuration \\
 & Constraint handling & Implicit (CEI-style) \\
 & Tuning budget & 14,400 seconds \\
\bottomrule
\end{tabular}
\end{adjustbox}
\caption{Hyperparameter settings of baseline tuning methods used in our experiments.
All baselines were evaluated under the same wall-clock tuning budget of 14,400 seconds
(4 hours) per dataset.}
\label{tab:baseline_hyperparameters}
\end{table}

\section{\ra{Raw QPS Thresholds Used in the Evaluation}}    
\label{app:raw-qps-targets}

\begin{table}[!htbp]
\centering
\footnotesize
\setlength{\tabcolsep}{10pt}
\renewcommand{\arraystretch}{0.95}
\begin{tabular}{lrr}
\toprule
\textbf{Dataset} & \textbf{Faiss} & \textbf{Hnswlib} \\
\midrule
nytimes & 29,225 & 27,207 \\
glove & 49,940 & 37,410 \\
sift & 59,298 & 46,038 \\
deep1M & 24,039 & 27,150 \\
youtube & 12,909 & 9,795 \\
\bottomrule
\end{tabular}
\vspace{0.5em}
\caption{Raw QPS values corresponding to the 75th-percentile QPS
constraints used in Section~6.2. The percentile
is used only to make constraint difficulty comparable across workloads;
\frameworkname{} receives the raw QPS value shown here.}
\label{tab:raw-qps-targets}
\end{table}

For each dataset--backend pair, we computed the QPS target by measuring
QPS over the predefined HNSW search space and taking the 75th percentile
of the resulting distribution. This percentile is used only to make the
QPS constraint comparably challenging across workloads; after it is
computed, \frameworkname{} receives the threshold as a raw QPS constraint.
Table~\ref{tab:raw-qps-targets} reports these raw values.

\section{\rc{Analysis of Constraint Resource Models}}
\label{app:constraint_resource_models}

\begin{figure}[t]
    \centering
    \includegraphics[width=.95\linewidth]{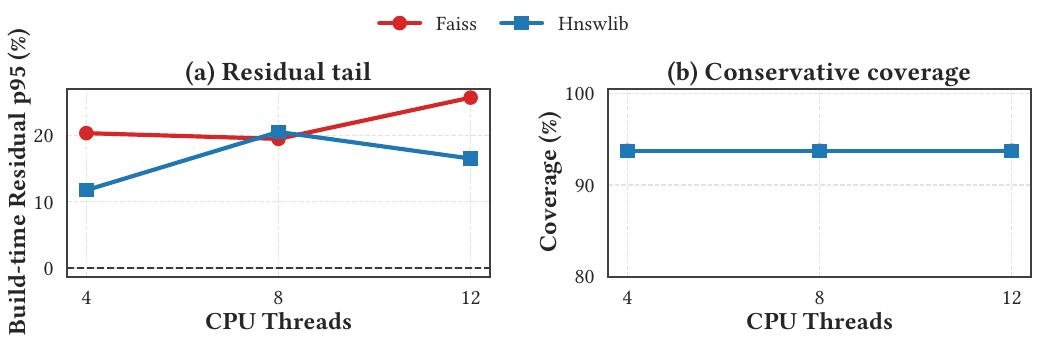}
    \caption{CPU-thread sensitivity of the resource feasibility test. We report the build-time underestimation residual tail and conservative-test coverage across CPU-thread counts for Faiss and Hnswlib.}
    \label{fig:appendix_constraint_model_backend_cpu_coverage}
\end{figure}

This appendix provides the resource-model diagnostics referenced in
Section~6.3. We evaluate the calibrated closed-form surrogates for build time
and index size from three perspectives: sensitivity to CPU parallelism,
calibration between predicted and measured resource usage, and residual
behavior across dataset scales and backends. Since the resource filter is used
to reject candidates before construction, the relevant error is one-sided
underestimation: cases where the model predicts a lower resource cost than the
measured value.

\begin{figure}[t]
    \centering
    \includegraphics[width=.95\linewidth]{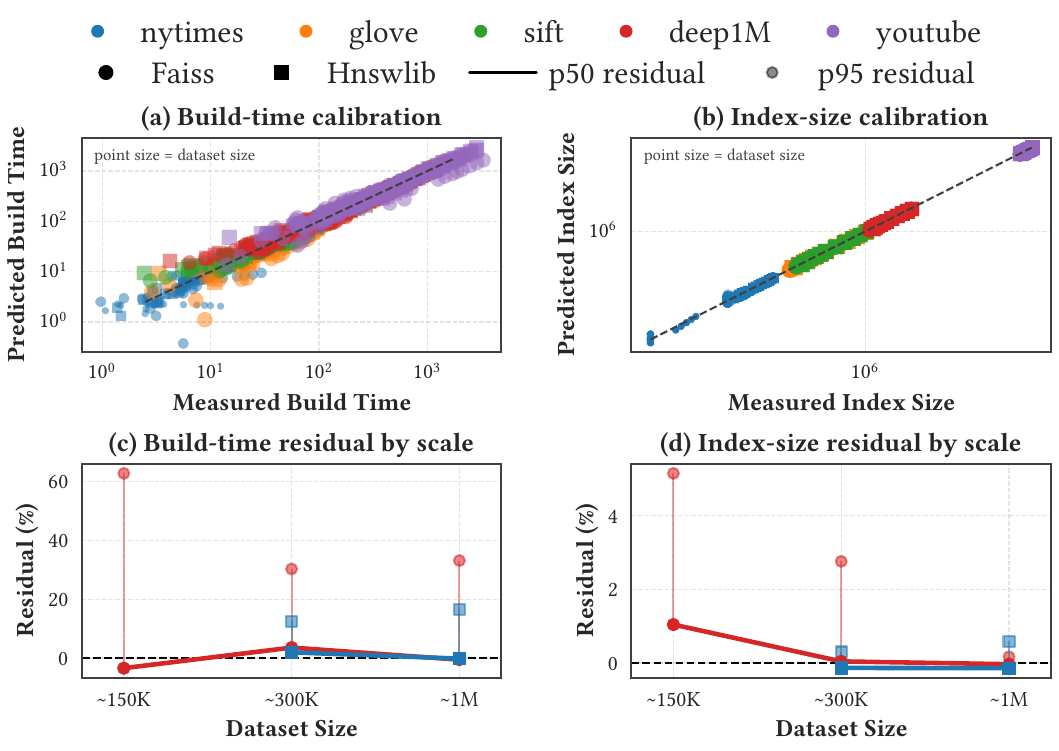}
\caption{\rc{Calibration and residual diagnostics for the closed-form resource models.
(a) and (b) compare predicted and measured build time and index size;
the dashed diagonal indicates perfect calibration, and marker size denotes
dataset scale. (c) and (d) report p50 and p95 one-sided
underestimation residuals for build time and index size across dataset scales,
datasets, and backends. Build-time residuals are larger and more variable,
whereas index-size residuals remain small because index size is dominated by
degree-cap-induced adjacency storage.}}
\label{fig:resource_model_diagnostics}
    \label{fig:appendix_constraint_model_calibration_residual}
\end{figure}

Figure~\ref{fig:appendix_constraint_model_backend_cpu_coverage} reports CPU-thread sensitivity.
Across the evaluated thread counts, the build-time underestimation residual
tail and conservative-test coverage remain stable for both Faiss and Hnswlib,
indicating that the online calibration and safety margin absorb backend and
parallelism effects. Figure~\ref{fig:resource_model_diagnostics} further shows
that the calibrated models track measured build time and index size across
dataset scales. Build-time residuals are larger and more variable because
construction time is affected by implementation and scheduling variability,
whereas index size is dominated by degree-cap-induced adjacency storage.
These diagnostics support using the compact resource models as conservative
feasibility filters rather than exact performance predictors.

\section{\rc{Synthetic Datasets for Structural Robustness Tests}}
\label{appendix:synthetic_datasets}

\cgcg{
This appendix describes the synthetic workloads used in
Section~6.4: six regular workloads that
vary common dataset characteristics, and one adversarial workload
designed to violate the diminishing-utility condition of
Lemmas~3.1 and~3.2.

\paragraph{Common methodology.}
Each dataset contains $N_{\text{train}} = 100{,}000$ training vectors
and $N_{\text{query}} = 1{,}000$ query vectors. For the six regular
workloads, training and query vectors are drawn from the same
distribution. For the adversarial workload, the query set is
intentionally biased toward local and bridge-requiring queries to
expose a delayed high-utility connectivity regime. All datasets use
$L_2$ distance with $k=10$, and ground truth is computed by exact
brute-force $k$-NN over the training set. All generators use a fixed
random seed of 42.

\paragraph{Well-separated.}
Eight isotropic Gaussian clusters in 128 dimensions, 12{,}500 points
per cluster, with $\sigma=0.5$. Cluster centers are sampled from
$[-10, 10]^{128}$ subject to a target nearest-center distance of at
least 8. This serves as a baseline case with cleanly separated
clusters.

\paragraph{Overlapping.}
Same as well-separated except $\sigma=2.0$ and centers are generated
with a target nearest-center distance of approximately 5, so that
cluster boundaries overlap substantially. This yields blurred local
neighborhoods and a harder $k$-NN problem.

\paragraph{Imbalanced.}
Eight clusters in 128 dimensions with skewed sizes
$\{50{,}000, 25{,}000, 10{,}000,$ $5{,}000, 4{,}000, 3{,}000, 2{,}000, 1{,}000\}$,
$\sigma=0.8$, target nearest-center distance 8. This tests robustness
to highly non-uniform cluster populations.

\paragraph{High-dimensional.}
Eight isotropic Gaussian clusters in 4096 dimensions with
$\sigma=1.0$, points equally distributed across clusters. This case
tests sensitivity to high dimensionality, where pairwise distances
concentrate.

\paragraph{Anisotropic.}
Eight Gaussian clusters in 128 dimensions with diagonal covariance
matrices whose entries are sampled from $[0.2, 5.0]$ independently
per cluster, producing elongated, axis-aligned clusters. This tests
robustness to non-spherical local geometry.

\paragraph{Uniform.}
A single uniform distribution over $[-10, 10]^{128}$ with no cluster
structure. We include it as a stress case where the index cannot
rely on dense cluster-local neighborhoods.
}



%% file: sample-base.bib
@String{Computing = "Computing" }

@String{Computer = "{IEEE} Computer" }

@article{pqarticle,
author = {Pan, Zhibin and Wang, Liangzhuang and Wang, Yang and Liu, Yuchen},
year = {2020},
month = {03},
pages = {},
title = {Product Quantization with Dual Codebooks for approximate Nearest Neighbor Search},
volume = {401},
journal = {Neurocomputing},
doi = {10.1016/j.neucom.2020.03.016}
}

@article{10.14778/2850469.2850470,
author = {Huang, Qiang and Feng, Jianlin and Zhang, Yikai and Fang, Qiong and Ng, Wilfred},
title = {Query-aware locality-sensitive hashing for approximate nearest neighbor search},
year = {2015},
issue_date = {September 2015},
publisher = {VLDB Endowment},
volume = {9},
number = {1},
issn = {2150-8097},
url = {https://doi.org/10.14778/2850469.2850470},
doi = {10.14778/2850469.2850470},
journal = {Proc. VLDB Endow.},
month = sep,
pages = {1–12},
numpages = {12}
}

@misc{gao2024retrievalaugmentedgenerationlargelanguage,
      title={Retrieval-Augmented Generation for Large Language Models: A Survey}, 
      author={Yunfan Gao and Yun Xiong and Xinyu Gao and Kangxiang Jia and Jinliu Pan and Yuxi Bi and Yi Dai and Jiawei Sun and Meng Wang and Haofen Wang},
      year={2024},
      eprint={2312.10997},
      archivePrefix={arXiv},
      primaryClass={cs.CL},
      url={https://arxiv.org/abs/2312.10997}, 
}

@misc{xu2025hallucinationinevitableinnatelimitation,
      title={Hallucination is Inevitable: An Innate Limitation of Large Language Models}, 
      author={Ziwei Xu and Sanjay Jain and Mohan Kankanhalli},
      year={2025},
      eprint={2401.11817},
      archivePrefix={arXiv},
      primaryClass={cs.CL},
      url={https://arxiv.org/abs/2401.11817}, 
}

@INPROCEEDINGS{1599245,
  author={Ngatchou, P. and Zarei, A. and El-Sharkawi, A.},
  booktitle={Proceedings of the 13th International Conference on, Intelligent Systems Application to Power Systems}, 
  title={Pareto Multi Objective Optimization}, 
  year={2005},
  volume={},
  number={},
  pages={84-91},
  doi={10.1109/ISAP.2005.1599245}}

@inproceedings{10.5555/2976040.2976144,
author = {Liu, Ting and Moore, Andrew W. and Gray, Alexander and Yang, Ke},
title = {An investigation of practical approximate nearest neighbor algorithms},
year = {2004},
publisher = {MIT Press},
address = {Cambridge, MA, USA},
booktitle = {Proceedings of the 18th International Conference on Neural Information Processing Systems},
pages = {825–832},
numpages = {8},
location = {Vancouver, British Columbia, Canada},
series = {NIPS'04}
}

@misc{guo2022manucloudnativevector,
      title={Manu: A Cloud Native Vector Database Management System}, 
      author={Rentong Guo and Xiaofan Luan and Long Xiang and Xiao Yan and Xiaomeng Yi and Jigao Luo and Qianya Cheng and Weizhi Xu and Jiarui Luo and Frank Liu and Zhenshan Cao and Yanliang Qiao and Ting Wang and Bo Tang and Charles Xie},
      year={2022},
      eprint={2206.13843},
      archivePrefix={arXiv},
      primaryClass={cs.DB},
      url={https://arxiv.org/abs/2206.13843}, 
}

@misc{bang2023multitaskmultilingualmultimodalevaluation,
      title={A Multitask, Multilingual, Multimodal Evaluation of ChatGPT on Reasoning, Hallucination, and Interactivity}, 
      author={Yejin Bang and Samuel Cahyawijaya and Nayeon Lee and Wenliang Dai and Dan Su and Bryan Wilie and Holy Lovenia and Ziwei Ji and Tiezheng Yu and Willy Chung and Quyet V. Do and Yan Xu and Pascale Fung},
      year={2023},
      eprint={2302.04023},
      archivePrefix={arXiv},
      primaryClass={cs.CL},
      url={https://arxiv.org/abs/2302.04023}, 
}

@misc{ma2025comprehensivesurveyvectordatabase,
      title={A Comprehensive Survey on Vector Database: Storage and Retrieval Technique, Challenge}, 
      author={Le Ma and Ran Zhang and Yikun Han and Shirui Yu and Zaitian Wang and Zhiyuan Ning and Jinghan Zhang and Ping Xu and Pengjiang Li and Wei Ju and Chong Chen and Dongjie Wang and Kunpeng Liu and Pengyang Wang and Pengfei Wang and Yanjie Fu and Chunjiang Liu and Yuanchun Zhou and Chang-Tien Lu},
      year={2025},
      eprint={2310.11703},
      archivePrefix={arXiv},
      primaryClass={cs.DB},
      url={https://arxiv.org/abs/2310.11703}, 
}

@misc{zhao2025surveylargelanguagemodels,
      title={A Survey of Large Language Models}, 
      author={Wayne Xin Zhao and Kun Zhou and Junyi Li and Tianyi Tang and Xiaolei Wang and Yupeng Hou and Yingqian Min and Beichen Zhang and Junjie Zhang and Zican Dong and Yifan Du and Chen Yang and Yushuo Chen and Zhipeng Chen and Jinhao Jiang and Ruiyang Ren and Yifan Li and Xinyu Tang and Zikang Liu and Peiyu Liu and Jian-Yun Nie and Ji-Rong Wen},
      year={2025},
      eprint={2303.18223},
      archivePrefix={arXiv},
      primaryClass={cs.CL},
      url={https://arxiv.org/abs/2303.18223}, 
}

@misc{openai2024gpt4technicalreport,
      title={GPT-4 Technical Report}, 
      author={OpenAI},
      year={2024},
      eprint={2303.08774},
      archivePrefix={arXiv},
      primaryClass={cs.CL},
      url={https://arxiv.org/abs/2303.08774}, 
}

@article{lewis2020retrieval,
  title={Retrieval-augmented generation for knowledge-intensive nlp tasks},
  author={Lewis, Patrick and Perez, Ethan and Piktus, Aleksandra and Petroni, Fabio and Karpukhin, Vladimir and Goyal, Naman and K{\"u}ttler, Heinrich and Lewis, Mike and Yih, Wen-tau and Rockt{\"a}schel, Tim and others},
  journal={Advances in Neural Information Processing Systems},
  volume={33},
  pages={9459--9474},
  year={2020}
}

@inproceedings{borgeaud2022improving,
  title={Improving language models by retrieving from trillions of tokens},
  author={Borgeaud, Sebastian and Mensch, Arthur and Hoffmann, Jordan and Cai, Trevor and Rutherford, Eliza and Millican, Katie and Van Den Driessche, George Bm and Lespiau, Jean-Baptiste and Damoc, Bogdan and Clark, Aidan and others},
  booktitle={International conference on machine learning},
  pages={2206--2240},
  year={2022},
  organization={PMLR}
}

@misc{izacard2022distillingknowledgereaderretriever,
      title={Distilling Knowledge from Reader to Retriever for Question Answering}, 
      author={Gautier Izacard and Edouard Grave},
      year={2022},
      eprint={2012.04584},
      archivePrefix={arXiv},
      primaryClass={cs.CL},
      url={https://arxiv.org/abs/2012.04584}, 
}

@misc{khandelwal2020generalizationmemorizationnearestneighbor,
      title={Generalization through Memorization: Nearest Neighbor Language Models}, 
      author={Urvashi Khandelwal and Omer Levy and Dan Jurafsky and Luke Zettlemoyer and Mike Lewis},
      year={2020},
      eprint={1911.00172},
      archivePrefix={arXiv},
      primaryClass={cs.CL},
      url={https://arxiv.org/abs/1911.00172}, 
}

@inproceedings{shi2023replugretrievalaugmentedblackboxlanguage,
  title={Replug: Retrieval-augmented black-box language models},
  author={Shi, Weijia and Min, Sewon and Yasunaga, Michihiro and Seo, Minjoon and James, Richard and Lewis, Mike and Zettlemoyer, Luke and Yih, Wen-tau},
  booktitle={Proceedings of the 2024 Conference of the North American Chapter of the Association for Computational Linguistics: Human Language Technologies (Volume 1: Long Papers)},
  pages={8371--8384},
  year={2024}
}

@article{malkov2018efficientrobustapproximatenearest,
  title={Efficient and robust approximate nearest neighbor search using hierarchical navigable small world graphs},
  author={Malkov, Yu A and Yashunin, Dmitry A},
  journal={IEEE transactions on pattern analysis and machine intelligence},
  volume={42},
  number={4},
  pages={824--836},
  year={2018},
  publisher={IEEE}
}

@article{AUMULLER2020101374,
title = {ANN-Benchmarks: A benchmarking tool for approximate nearest neighbor algorithms},
journal = {Information Systems},
volume = {87},
pages = {101374},
year = {2020},
issn = {0306-4379},
doi = {https://doi.org/10.1016/j.is.2019.02.006},
url = {https://www.sciencedirect.com/science/article/pii/S0306437918303685},
author = {Martin Aumüller and Erik Bernhardsson and Alexander Faithfull}
}

@misc{liashchynskyi2019gridsearchrandomsearch,
      title={Grid Search, Random Search, Genetic Algorithm: A Big Comparison for NAS}, 
      author={Petro Liashchynskyi and Pavlo Liashchynskyi},
      year={2019},
      eprint={1912.06059},
      archivePrefix={arXiv},
      primaryClass={cs.LG},
      url={https://arxiv.org/abs/1912.06059}, 
}

@misc{autotuningtheconstructionparameters,
author = {Zhou, Wenyang and Jiang, Yuzhi and Liu, Yingfan and Qiao, Xiaotian and Zhang, Hui and Li, Hui and Cui, Jiangtao},
year = {2024},
month = {01},
pages = {},
title = {Auto-Tuning the Construction Parameters of Hierarchical Navigable Small World Graphs},
doi = {10.2139/ssrn.4734062}
}

@inproceedings{yang2024vdtunerautomatedperformancetuning,
  title={Vdtuner: Automated performance tuning for vector data management systems},
  author={Yang, Tiannuo and Hu, Wen and Peng, Wangqi and Li, Yusen and Li, Jianguo and Wang, Gang and Liu, Xiaoguang},
  booktitle={2024 IEEE 40th International Conference on Data Engineering (ICDE)},
  pages={4357--4369},
  year={2024},
  organization={IEEE}
}

@inproceedings{10.1145/3448016.3457550,
author = {Wang, Jianguo and Yi, Xiaomeng and Guo, Rentong and Jin, Hai and Xu, Peng and Li, Shengjun and Wang, Xiangyu and Guo, Xiangzhou and Li, Chengming and Xu, Xiaohai and Yu, Kun and Yuan, Yuxing and Zou, Yinghao and Long, Jiquan and Cai, Yudong and Li, Zhenxiang and Zhang, Zhifeng and Mo, Yihua and Gu, Jun and Jiang, Ruiyi and Wei, Yi and Xie, Charles},
title = {Milvus: A Purpose-Built Vector Data Management System},
year = {2021},
isbn = {9781450383431},
publisher = {Association for Computing Machinery},
address = {New York, NY, USA},
url = {https://doi.org/10.1145/3448016.3457550},
doi = {10.1145/3448016.3457550},
booktitle = {Proceedings of the 2021 International Conference on Management of Data},
pages = {2614–2627},
numpages = {14},
location = {Virtual Event, China},
series = {SIGMOD '21}
}

@ARTICLE{8733051,
  author={Johnson, Jeff and Douze, Matthijs and Jégou, Hervé},
  journal={IEEE Transactions on Big Data}, 
  title={Billion-Scale Similarity Search with GPUs}, 
  year={2021},
  volume={7},
  number={3},
  pages={535-547},
  doi={10.1109/TBDATA.2019.2921572}}

@article{10.5555/2188385.2188395,
author = {Bergstra, James and Bengio, Yoshua},
title = {Random search for hyper-parameter optimization},
year = {2012},
issue_date = {3/1/2012},
publisher = {JMLR.org},
volume = {13},
number = {null},
issn = {1532-4435},
journal = {J. Mach. Learn. Res.},
month = feb,
pages = {281–305},
numpages = {25}
}

@article{Huang_2025,
   title={A Survey on Hallucination in Large Language Models: Principles, Taxonomy, Challenges, and Open Questions},
   volume={43},
   ISSN={1558-2868},
   url={http://dx.doi.org/10.1145/3703155},
   DOI={10.1145/3703155},
   number={2},
   journal={ACM Transactions on Information Systems},
   publisher={Association for Computing Machinery (ACM)},
   author={Huang, Lei and Yu, Weijiang and Ma, Weitao and Zhong, Weihong and Feng, Zhangyin and Wang, Haotian and Chen, Qianglong and Peng, Weihua and Feng, Xiaocheng and Qin, Bing and Liu, Ting},
   year={2025},
   month=jan, pages={1–55} }

@INPROCEEDINGS{6619223,
  author={Ge, Tiezheng and He, Kaiming and Ke, Qifa and Sun, Jian},
  booktitle={2013 IEEE Conference on Computer Vision and Pattern Recognition}, 
  title={Optimized Product Quantization for Approximate Nearest Neighbor Search}, 
  year={2013},
  volume={},
  number={},
  pages={2946-2953},
  doi={10.1109/CVPR.2013.379}}

@ARTICLE{5432202,
  author={Jégou, Herve and Douze, Matthijs and Schmid, Cordelia},
  journal={IEEE Transactions on Pattern Analysis and Machine Intelligence}, 
  title={Product Quantization for Nearest Neighbor Search}, 
  year={2011},
  volume={33},
  number={1},
  pages={117-128},
  doi={10.1109/TPAMI.2010.57}}

@ARTICLE{6809191,
  author={Muja, Marius and Lowe, David G.},
  journal={IEEE Transactions on Pattern Analysis and Machine Intelligence}, 
  title={Scalable Nearest Neighbor Algorithms for High Dimensional Data}, 
  year={2014},
  volume={36},
  number={11},
  pages={2227-2240},
  doi={10.1109/TPAMI.2014.2321376}}

@misc{youtubedataset,
  author = {dnotitia},
  title = {SeahorseDB-dataset},
  howpublished = {\url{https://huggingface.co/datasets/dnotitia/SeahorseDB-dataset/tree/main}},
  note = {Accessed: 2025-07-16}
}

@InProceedings{pmlr-v119-guo20h,
  title = 	 {Accelerating Large-Scale Inference with Anisotropic Vector Quantization},
  author =       {Guo, Ruiqi and Sun, Philip and Lindgren, Erik and Geng, Quan and Simcha, David and Chern, Felix and Kumar, Sanjiv},
  booktitle = 	 {Proceedings of the 37th International Conference on Machine Learning},
  pages = 	 {3887--3896},
  year = 	 {2020},
  editor = 	 {III, Hal Daumé and Singh, Aarti},
  volume = 	 {119},
  series = 	 {Proceedings of Machine Learning Research},
  month = 	 {13--18 Jul},
  publisher =    {PMLR},
  url = 	 {https://proceedings.mlr.press/v119/guo20h.html}
}

@misc{fu2025fastapproximatenearestneighbor,
      title={Fast Approximate Nearest Neighbor Search With The Navigating Spreading-out Graph}, 
      author={Cong Fu and Chao Xiang and Changxu Wang and Deng Cai},
      year={2025},
      eprint={1707.00143},
      archivePrefix={arXiv},
      primaryClass={cs.LG},
      url={https://arxiv.org/abs/1707.00143}, 
}

@inproceedings{10.1145/3292500.3330701,
author = {Akiba, Takuya and Sano, Shotaro and Yanase, Toshihiko and Ohta, Takeru and Koyama, Masanori},
title = {Optuna: A Next-generation Hyperparameter Optimization Framework},
year = {2019},
isbn = {9781450362016},
publisher = {Association for Computing Machinery},
address = {New York, NY, USA},
url = {https://doi.org/10.1145/3292500.3330701},
doi = {10.1145/3292500.3330701},
booktitle = {Proceedings of the 25th ACM SIGKDD International Conference on Knowledge Discovery \& Data Mining},
pages = {2623–2631},
numpages = {9},
location = {Anchorage, AK, USA},
series = {KDD '19}
}

@ARTICLE{996017,
  author={Deb, K. and Pratap, A. and Agarwal, S. and Meyarivan, T.},
  journal={IEEE Transactions on Evolutionary Computation}, 
  title={A fast and elitist multiobjective genetic algorithm: NSGA-II}, 
  year={2002},
  volume={6},
  number={2},
  pages={182-197},
  doi={10.1109/4235.996017}}

@inproceedings{indyk1998approximate,
author = {Indyk, Piotr and Motwani, Rajeev},
title = {Approximate nearest neighbors: towards removing the curse of dimensionality},
year = {1998},
isbn = {0897919629},
publisher = {Association for Computing Machinery},
address = {New York, NY, USA},
url = {https://doi.org/10.1145/276698.276876},
doi = {10.1145/276698.276876},
booktitle = {Proceedings of the Thirtieth Annual ACM Symposium on Theory of Computing},
pages = {604–613},
numpages = {10},
location = {Dallas, Texas, USA},
series = {STOC '98}
}

@inproceedings{vanaken2017automatic,
author = {Van Aken, Dana and Pavlo, Andrew and Gordon, Geoffrey J. and Zhang, Bohan},
title = {Automatic Database Management System Tuning Through Large-scale Machine Learning},
year = {2017},
isbn = {9781450341974},
publisher = {Association for Computing Machinery},
address = {New York, NY, USA},
url = {https://doi.org/10.1145/3035918.3064029},
doi = {10.1145/3035918.3064029},
booktitle = {Proceedings of the 2017 ACM International Conference on Management of Data},
pages = {1009–1024},
numpages = {16},
location = {Chicago, Illinois, USA},
series = {SIGMOD '17}
}

@article{li2019qtune,
author = {Lu, Jiaheng and Chen, Yuxing and Herodotou, Herodotos and Babu, Shivnath},
title = {Speedup your analytics: automatic parameter tuning for databases and big data systems},
year = {2019},
issue_date = {August 2019},
publisher = {VLDB Endowment},
volume = {12},
number = {12},
issn = {2150-8097},
url = {https://doi.org/10.14778/3352063.3352112},
doi = {10.14778/3352063.3352112},
journal = {Proc. VLDB Endow.},
month = aug,
pages = {1970–1973},
numpages = {4}
}

@misc{yu2020hyperparameteroptimizationreviewalgorithms,
      title={Hyper-Parameter Optimization: A Review of Algorithms and Applications}, 
      author={Tong Yu and Hong Zhu},
      year={2020},
      eprint={2003.05689},
      archivePrefix={arXiv},
      primaryClass={cs.LG},
      url={https://arxiv.org/abs/2003.05689}, 
}

@ARTICLE{7352306,
  author={Shahriari, Bobak and Swersky, Kevin and Wang, Ziyu and Adams, Ryan P. and de Freitas, Nando},
  journal={Proceedings of the IEEE}, 
  title={Taking the Human Out of the Loop: A Review of Bayesian Optimization}, 
  year={2016},
  volume={104},
  number={1},
  pages={148-175},
  doi={10.1109/JPROC.2015.2494218}}

@inproceedings{10.5555/2999325.2999464,
author = {Snoek, Jasper and Larochelle, Hugo and Adams, Ryan P.},
title = {Practical Bayesian optimization of machine learning algorithms},
year = {2012},
publisher = {Curran Associates Inc.},
address = {Red Hook, NY, USA},
booktitle = {Proceedings of the 26th International Conference on Neural Information Processing Systems - Volume 2},
pages = {2951–2959},
numpages = {9},
location = {Lake Tahoe, Nevada},
series = {NIPS'12}
}

@article{10.14778/3476249.3476255,
author = {Wang, Mengzhao and Xu, Xiaoliang and Yue, Qiang and Wang, Yuxiang},
title = {A comprehensive survey and experimental comparison of graph-based approximate nearest neighbor search},
year = {2021},
issue_date = {July 2021},
publisher = {VLDB Endowment},
volume = {14},
number = {11},
issn = {2150-8097},
url = {https://doi.org/10.14778/3476249.3476255},
doi = {10.14778/3476249.3476255},
journal = {Proc. VLDB Endow.},
month = jul,
pages = {1964–1978},
numpages = {15}
}

@ARTICLE{8681160,
  author={Li, Wen and Zhang, Ying and Sun, Yifang and Wang, Wei and Li, Mingjie and Zhang, Wenjie and Lin, Xuemin},
  journal={IEEE Transactions on Knowledge and Data Engineering}, 
  title={Approximate Nearest Neighbor Search on High Dimensional Data — Experiments, Analyses, and Improvement}, 
  year={2020},
  volume={32},
  number={8},
  pages={1475-1488},
  doi={10.1109/TKDE.2019.2909204}}

@ARTICLE{9383170,
  author={Fu, Cong and Wang, Changxu and Cai, Deng},
  journal={IEEE Transactions on Pattern Analysis and Machine Intelligence}, 
  title={High Dimensional Similarity Search With Satellite System Graph: Efficiency, Scalability, and Unindexed Query Compatibility}, 
  year={2022},
  volume={44},
  number={8},
  pages={4139-4150},
  doi={10.1109/TPAMI.2021.3067706}}

@inproceedings{gardner2014bayesian,
  title={Bayesian optimization with inequality constraints.},
  author={Gardner, Jacob R and Kusner, Matt J and Xu, Zhixiang Eddie and Weinberger, Kilian Q and Cunningham, John P},
  booktitle={ICML},
  volume={2014},
  pages={937--945},
  year={2014}
}

@article{10.5555/3013558.3013569,
author = {Wang, Ziyu and Hutter, Frank and Zoghi, Masrour and Matheson, David and De Freitas, Nando},
title = {Bayesian optimization in a billion dimensions via random embeddings},
year = {2016},
issue_date = {January 2016},
publisher = {AI Access Foundation},
address = {El Segundo, CA, USA},
volume = {55},
number = {1},
issn = {1076-9757},
journal = {J. Artif. Int. Res.},
month = jan,
pages = {361–387},
numpages = {27}
}

@misc{guo2020acceleratinglargescaleinferenceanisotropic,
      title={Accelerating Large-Scale Inference with Anisotropic Vector Quantization}, 
      author={Ruiqi Guo and Philip Sun and Erik Lindgren and Quan Geng and David Simcha and Felix Chern and Sanjiv Kumar},
      year={2020},
      eprint={1908.10396},
      archivePrefix={arXiv},
      primaryClass={cs.LG},
      url={https://arxiv.org/abs/1908.10396}, 
}

@article{wang2024steiner,
  title={Steiner-hardness: A query hardness measure for graph-based ann indexes},
  author={Wang, Zeyu and Wang, Qitong and Cheng, Xiaoxing and Wang, Peng and Palpanas, Themis and Wang, Wei},
  journal={Proceedings of the VLDB Endowment},
  volume={17},
  number={13},
  pages={4668--4682},
  year={2024},
  publisher={VLDB Endowment}
}

@article{fu12fast,
  title={Fast Approximate Nearest Neighbor Search With The Navigating Spreading-out Graph},
  author={Fu, Cong and Xiang, Chao and Wang, Changxu and Cai, Deng},
  journal={Proceedings of the VLDB Endowment},
  volume={12},
  number={5}
}

@inbook{10.5555/3454287.3455520,
author = {Subramanya, Suhas Jayaram and Devvrit and Kadekodi, Rohan and Krishaswamy, Ravishankar and Simhadri, Harsha Vardhan},
title = {DiskANN: fast accurate billion-point nearest neighbor search on a single node},
year = {2019},
publisher = {Curran Associates Inc.},
address = {Red Hook, NY, USA},
booktitle = {Proceedings of the 33rd International Conference on Neural Information Processing Systems},
articleno = {1233},
numpages = {11}
}

@software{ilharco_gabriel_2021_5143773,
  author       = {Ilharco, Gabriel and
                  Wortsman, Mitchell and
                  Wightman, Ross and
                  Gordon, Cade and
                  Carlini, Nicholas and
                  Taori, Rohan and
                  Dave, Achal and
                  Shankar, Vaishaal and
                  Namkoong, Hongseok and
                  Miller, John and
                  Hajishirzi, Hannaneh and
                  Farhadi, Ali and
                  Schmidt, Ludwig},
  title        = {OpenCLIP},
  month        = jul,
  year         = 2021,
  note         = {If you use this software, please cite it as below.},
  publisher    = {Zenodo},
  version      = {0.1},
  doi          = {10.5281/zenodo.5143773},
  url          = {https://doi.org/10.5281/zenodo.5143773}
}
